\def\senbun#1(#2)#3({\@senbun(#2)(}
\def\@senbun(#1,#2)(#3,#4){%
   \@tempdima#1\p@ \advance\@tempdima#3\p@
   \divide\@tempdima\tw@
   \@tempdimb#2\p@ \advance\@tempdimb#4\p@
   \divide\@tempdimb\tw@
   \edef\@senbun@temp{\noexpand\qbezier(#1,#2)%
      (\strip@pt\@tempdima,\strip@pt\@tempdimb)(#3,#4)}%
   \@senbun@temp}
\newcommand{\EL}{\overset{\lambda}{\rightarrow}}
\newcommand{\EO}{\overset{1}{\rightarrow}}
\newcommand{\EZ}{\overset{0}{\rightarrow}}
\newcommand{\EH}{\overset{1/2}{\rightarrow}}
\newcommand{\ML}{${\mathcal L}$}
\newcommand{\N}{{\rm I\kern-.22em N}} 
\newcommand{\Z}{{\sf Z\kern-.42em Z}} 
\newcommand{\R}{{\rm I\kern-.22em R}} 
\newcommand{\BbbC}{{\rm\kern.22em\rule[.1ex]{.06em}{1.4ex}\kern-.28em C}} 
\newcommand{\BbbQ}{{\rm\kern.22em\rule[.1ex]{.06em}{1.4ex}\kern-.28em Q}}
\newcommand{\QED}{\mbox{}\hfill$\Box$}
\newcounter{Codeline}
\newcommand{\Newcodeline}{\setcounter{Codeline}{1}}
\newcommand{\Cl}{{\theCodeline}: \addtocounter{Codeline}{1}}
\newcommand{\crm}{\\}
\authorrunning{T. Okumura et.al.}
\titlerunning{Optimal Rendezvous for Asynchronous Mobile Robots with Lights}
\begin{document}

\title{Optimal Rendezvous ${\mathcal L}$-Algorithms \\
for Asynchronous Mobile Robots \\with External-Lights
%\thanks{The authors thank Prof. Defago}
}
\author{Takashi~OKUMURA\inst{1} \and Koichi~WADA\inst{2} \and Xavier D\'{E}FAGO\inst{3}}
\institute{Graduate School of Science and Engineering, Hosei University, \\ 
Tokyo 184-8584 Japan.\\
\email{takashi.okumura.4e@stu.hosei.ac.jp}
\and
Faculty of Science and Engineering, Hosei University, \\Tokyo, 184-8485, Japan.\\ 
\email{wada@hosei.ac.jp}
\and
School of Computing, Tokyo Institute of Technology, \\ Tokyo, 152-8550, Japan.\\
\email{defago@c.titech.ac.jp}
}
 
\maketitle

\begin{abstract}
%%%must changed
We study the {\em Rendezvous} problem for $2$ autonomous mobile robots in asynchronous settings
with persistent memory called {\em  light}.
It is well known that Rendezvous is impossible in a basic model when robots have no lights,
even if the system is semi-synchronous. On the other hand,
Rendezvous is possible if robots have lights of various types with a constant number of colors \cite{FSVY,V}.
If robots can observe not only their own lights but also other robots' lights, their lights are called {\em full-light}.
If robots can only observe the state of other robots' lights, the lights are called {\em external-light}.

%In asynchronous settings, Rendezvous can be solved by robots with $2$ colors of full-lights \cite{HDT},
In this paper, we focus on robots with external-lights in asynchronous settings and a particular class of algorithms (called \ML-algorithms),
where an \ML-algorithm computes a destination based only on the current colors of observable lights.
When considering \ML-algorithms, Rendezvous can be solved by robots with full-lights and $3$ colors in general asynchronous settings (called ASYNC) and
the number of colors is optimal under these assumptions. In contrast, 
there exists no \ML-algorithms in ASYNC with external-lights regardless of the number of colors \cite{FSVY}. 
In this paper, we consider a fairly large subclass of ASYNC in which Rendezvous can be solved by \ML-algorithms using external-lights with a finite number of colors, and we show that the algorithms are optimal in the number of colors they use.

\end{abstract}

\section{Introduction}
%\vspace{-2mm}
%{\small
. % no need for a funny story anymore. The dot ('.') is to circumvent a bug in latexdiff

\noindent{\bf Background and Motivation}\ \ 

The computational issues of autonomous mobile robots have been the object of much research in the field of distributed computing.
In particular, a large amount of work has been dedicated to the research of theoretical models of autonomous mobile robots
\cite{AP,BDT,CFPS,DKLMPW,IBTW,KLOT,SDY,SY}.
%FPS-book, etc
In the basic common setting, a robot is modeled as a point in a two dimensional plane and its capability is quite weak.
We usually assume that robots are {\em oblivious} (no memory to record past history), {\em anonymous} and {\em uniform} (robots have no IDs and run identical algorithms) \cite{FPS}.
Robots operate in $\mathit{Look}$-$\mathit{Compute}$-$\mathit{Move}$ ($\mathit{LCM}$) cycles in the model. In the Look operation, robots obtain a snapshot of the environment and
they execute the same algorithm using the snapshot as input for the Compute operation, and move towards the computed destination in the Move operation.
Repeating these cycles, all robots collectively perform a given task.
The weak capabilities of the robots make it challenging for them to accomplish even simple tasks.
Therefore, identifying the minimum (weakest) capabilities that the robots need to complete a given task in a given model constitutes a very interesting and important challenge for the theoretical research on autonomous mobile robots.

%In this paper, we also explore such weakest capabilities. In particular,
%we reveal the weakest additional assumptions for the task which cannot be solved in the basic common models. 
This paper considers the problem of {\em Gathering}, which is one of the most fundamental tasks for autonomous mobile robots.
Gathering is the process where $n$ mobile robots, initially located at arbitrary positions, meet within finite time at a location, not known a priori.
When there are two robots in this setting (i.e., for $n=2$)), this task is called {\em Rendezvous}. 
In this paper, we focus on Rendezvous in asynchronous settings and we reveal the relationship among several assumptions.

Since Gathering and Rendezvous are simple but essential problems, they have been intensively studied  
and a number of possibility and/or impossibility results have been shown under the different assumptions  \cite{AP,BDT,CFPS,DGCMR,DKLMPW,DP,FPSW,IKIW,ISKIDWY,KLOT,LMA,P,SDY}.
%A lot of refs.
The solvability of Gathering and Rendezvous depends on the activation schedule and the synchronization level. 
Usually three basic types of schedulers are identified, namely, the fully synchronous (FSYNC), the semi-synchronous (SSYNC) and the asynchronous (ASYNC) models.
In the FSYNC model, there is a common round and in each round all robots are activated simultaneously 
and $\mathit{Compute}$ and  $\mathit{Move}$ are done instantaneously. 
The SSYNC model is the same as FSYNC except that at each round only a subset of the robots are activated, with a fairness guarantee that every robot is activated infinitely-often in any infinite execution.
In the ASYNC scheduler, there are no restrictions about the notion of time. In particular, 
$\mathit{Compute}$ and $\mathit{Move}$ and the interval between them can take any (finite) duration, a robot can be seen while moving, and in the interval between an observation and a corresponding move other robots may have possibly moved several times. 
Gathering and Rendezvous are trivially solvable in FSYNC in the basic model (e.g., without lights) by using an algorithm that moves to the center of gravity. 
However, these problems can not be solved in SSYNC without any additional assumptions \cite{FPS}.

%Introduction of lights
Das \emph{et al.} \cite{DFPSY} extend the classical model with persistent memory, called {\em lights}, to reveal the relationship between ASYNC and SSYNC and they show that asynchronous robots equipped with lights and a constant number of colors, are strictly more powerful than semi-synchronous robots without lights.
%Rendzvous  
In order to solve Rendezvous without any other additional assumptions, robots with lights have been introduced \cite{FSVY,DFPSY,V}.
Table~\ref{tab:Table-Rendezvous} shows previous results including ours to solve Rendezvous by robots with lights, for each scheduler and movement restriction. 
In the table, $\mathit{LC}$-atomic ASYNC is a subclass of ASYNC, in which 
we consider from the beginning of each $\mathit{Look}$ operation to the end of the corresponding $\mathit{Compute}$ operation as an atomic one, that is,  no robot can observe between the beginning of each $\mathit{Look}$ operation and the end of the next $\mathit{Compute}$ on the same robot \cite{OWK}. 
Regarding the various kind of lights,
{\em full-light} means that robots can see their own light as well as that of the other robots, 
whereas {\em external-light} and {\em internal-light} %\footnote{In \cite{FSVY}, external-light and internal-light are called FCOMM and FSTATE, respectively.}  
respectively mean that they can see only the lights of the other robots, or only their own light. 
Regarding the movement restriction, 
{\em Rigid} means that the robots can always reach the computed destination during the move operation.
{\em Non-Rigid} means that robots may be stopped before reaching the computed destination but move a minimum distance $\delta>0$.
Non-Rigid(+$\delta$) means it is Non-Rigid and robots know the value $\delta$.
The Gathering of robots with lights is discussed in \cite{TWK}.

%L-algorithm 要修正
{\tiny
\begin{table}[h]
%\centering
\caption{Rendezvous algorithms by robots with lights.}
\label{tab:Table-Rendezvous}
{\footnotesize
\begin{tabular}{|c|c|c|c|c|c|}
\hline
scheduler      & movement  & full-light & external-light & internal-light & no-light \\ \hline\hline
% FSYNC
FSYNC & Non-Rigid%\senbun(0,10)(63,-4) 
                            & $-$ & $-$ & $-$ & $\bigcirc$ \cite{FPS}  \\ \hline\hline
% SSYNC
& Non-Rigid     & {\bf 2}$*$(S) \cite{V}& {\bf 3}$*$(S) \cite{V}       & $\infty *$ \cite{FSVY}       & \multirow{3}{*}{$\times$ \cite{FPS}} \\ \cline{2-5} 
     SSYNC          & Rigid & $-$    &  $-$         & 6 \cite{FSVY} &      \\ \cline{2-5} 
               & Non-Rigid(+$\delta$) &  $-$  &  $-$ & 3 \cite{FSVY} & \\ \hline\hline
% LC-atomic ASYNC
%\multirow{2}{*}{LC-atomic ASYNC}          & Non-Rigid     & {\bf 2}*(S) \cite{OWK}   & ?$\rightarrow$  {\bf 4}*(QS),{\bf 5}*(S)    & ?        & \multirow{2}{*}{$-$  }  \\ \cline{2-5} 
         & Non-Rigid     & {\bf 2}*(S) \cite{OWK}   & ?$\rightarrow$  {\bf 4}*(QS),{\bf 5}*(S)    & ?        & \multirow{2}{*}{$-$  }  \\ \cline{2-5} 
    $LC$-atomic            & Rigid & $-$     & ? $\rightarrow$ {\bf 3}* & ?      &     \\ \cline{2-5}
   ASYNC             & Non-Rigid(+$\delta$) & $-$   & ? & ? & \\ \hline\hline
% ASYNC
          & Non-Rigid     & {\bf 2}(S) \cite{HDT},3*(S) \cite{V}    & $\infty *$ \cite{FSVY}      & ?        & \multirow{2}{*}{$-$}  \\ \cline{2-5} 
         ASYNC     & Rigid & {\bf 2}* \cite{V}      & 12 \cite{FSVY}  & ?      &     \\ \cline{2-5}
                & Non-Rigid(+$\delta$) & $-$    & 3 \cite{FSVY} & ? & \\ \hline
\end{tabular}\\
\\
$\bigcirc$: solvable, $\times$: unsolvable. 
$*$: \ML-algorithm, (S): self-stabilizing, \\(QS): quasi-self-stabilizing.
$-$ indicates that  this part has been solved under weaker conditions or unsolved under stronger ones.
A number represents the number of colors used in these algorithms and it is in \textbf{boldface} when optimal.
$?$ means that this part has not been solved.
}%footnotesize
\end{table}
}

In Table~\ref{tab:Table-Rendezvous}, we can see that complete solutions have been obtained for the case of full-lights.
However, the cases of external-lights and internal-lights are still insufficiently explored and should be solved.

\noindent{\bf Our Contribution}\ \ 
%%% must changed
In this paper, we are concerned with Rendezvous for robots equipped with external-lights and a particular class of algorithms called \ML-algorithms. Briefly, an \ML-algorithm means that each robot
(1)~always computes a destination on the line connecting the two robots, and
(2)~using only the observed colors of the lights of the robots.

Algorithms of this class are of interest because they operate also
when the coordinate system of a robot is not self-consistent (i.e., it can unpredictably rotate, change its scale or undergo a reflection) \cite{FSVY}.
Rendezvous can be solved by an \ML-algorithm with $3$ colors of external-lights in SSYNC \cite{V}, but
cannot be solved by any \ML-algorithm with any number of colors of external-lights in ASYNC \cite{FSVY}.
%self-stabilizing

%$LC$-atomic 
In this paper, we reveal the relationship among the number of colors, movement restrictions and initial configurations
on \ML-algorithms with external-lights in $\mathit{LC}$-atomic ASYNC. In fact, 
we give three \ML-algorithms with external-lights in $\mathit{LC}$-atomic ASYNC, such that
(1) if we may start from a particular initial configuration with the same color, Rendezvous is solved with $3$ colors in Rigid, 
(2) if we start from any initial configuration with the same color (called {\em quasi-self-stabilizing}), Rendezvous is solved with $4$ colors in Non-Rigid, and (3) if we start from any initial configuration (called {\em self-stabilizing}), Rendezvous is solved with $5$ colors and in Non-Rigid.   
We also show that the numbers of colors used in the three algorithms are optimal in the sense that
no \ML-algorithm with fewer colors can solve Rendezvous. In order to derive the lower bounds 
we give several essential properties of \ML-algorithms.

The remainder of the paper is organized as follows. In Section
\ref{sec:model}, we define the robot model with lights, 
the Rendezvous problem, and basic terminology. 
Section \ref{sec:PRR} reviews 
previous results on Rendezvous with external-lights.
Section \ref{sec:SSYNCRendezvousAlgorithms} shows
several properties of \ML-algorithms for Rendezvous with $3$ colors of external-lights and
Section \ref{ASYNCRendezvousAlgorithms} shows optimal Rendezvous 
\ML-algorithms on Asynchronous robots with lights. 
Section \ref{sec:conclusion}
concludes the paper.

\section{Preliminaries}\label{sec:model}

\subsection{Robot Model}

We consider a set of $n$ anonymous mobile robots ${\cal R} = \{ r_1, \ldots, r_n \}$ located in $\R^2$.
Each robot $r_i$ has a persistent state $\ell(r_i)$ called {\em  light} which may be taken from a finite set of colors $L$. 

%$\N$
We denote by $\ell(r_i,t)$ the color that the light of robot $r_i$ has at time $t$ and 
$p(r_i, t) \in \R^2$ the position occupied by $r_i$ at time $t$ represented in some global coordinate system. 
Given two points $p,q \in \R^2$, $dis(p,q)$ denotes the distance between $p$ and $q$.

%A {\em configuration} ${\cal C}(t)$ at time $t$ is a multi-set of $n$ pairs $(\ell_i(t),p_i(t))$, each defining the color of light and the position of robot $r_i$ at time $t$. 
%When no confusion arises, ${\cal C}(t)$ is simply denoted by $\cal C$.

%For a subset $S$ of $L \times \R^2$, ${\cal L}(S)$ and ${\cal P}(S)$ are denoted as projections to $L$ and $\R^2$ from $S$, respectively.

Each robot $r_i$ has its own coordinate system where $r_i$ is located at its origin at any time. 
These coordinate systems do not necessarily agree with those of other robots. 
It means that there is no common knowledge of unit of distance, directions of its coordinates, or clockwise orientation ({\em chirality}).

At any point of time, a robot can be active or inactive. When a robot $r_i$ is activated, it executes $\mathit{Look}$-$\mathit{Compute}$-$\mathit{Move}$ operations:

%\vspace{-3mm}
\begin{itemize}
\item {\bf Look:} The robot $r_i$ activates its sensors to obtain a snapshot  which consists of a pair of light and position for every robot with respect to the coordinate system of $r_i$. 
Since the result of this operation is a snapshot of the positions of all robots, the robot does not notice the movement, even if it sees other moving robots.
%Let ${\cal SS}_i(t)$ denote the snapshot of $r_i$ at time $t$. 
We assume that robots can observe all other robots~(unlimited visibility).
\item {\bf Compute:} The robot $r_i$ executes its algorithm using the snapshot and the color of its own light (if allowed by the model) and returns a destination point $des_i$ expressed in its coordinate system and a light $\ell_i \in L$ to which its own color is set.
\item {\bf Move:} The robot $r_i$ moves to the computed destination $des_i$.
A robot $r$ is said to {\em collide} with robot $s$ at time $t$ if $p(r,t)=p(s,t)$ and at time $t$ $r$ is performing $\mathit{Move}$. The collision is {\em accidental} if $r$'s destination is not $p(r,t)$.
Since robots are seen as points, we assume that accidental collisions are immaterial. A moving robot,
upon causing an accidental collision, proceeds in its movement without changes, in a ``hit-and-run'' fashion \cite{FPS}.
The robot may be stopped by an adversary before reaching the computed destination. If stopped before reaching its destination, a robot moves at least a minimum distance $\delta >0$. Note that without this assumption an adversary could make it impossible for any robot to ever reach its destination. 
If the distance to the destination is at most $\delta$, the robot can reach it. In this case, the movement is called {\em Non-Rigid}. Otherwise,
it is called {\em Rigid}. If the movement is Non-Rigid and robots know the value of $\delta$, it is called {\em Non-Rigid(+$\delta)$}. 
\end{itemize}

A scheduler decides which subset of  robots is activated for every configuration. 
The schedulers we consider are asynchronous and semi-synchronous and it is assumed that schedulers are {\em fair}, each robot is activated infinitely often.
\begin{itemize}
\item {\bf ASYNC:} 
The asynchronous (ASYNC) scheduler, activates the robots independently, and the duration of each $\mathit{Compute}$, $\mathit{Move}$ and between successive activities is finite and unpredictable. As a result, robots can be seen while moving and the snapshot and its actual configuration are not the same and so its computation may be done with the old configuration.
\item {\bf SSYNC:} 
The  semi-synchronous(SSYNC) scheduler activates a subset of all robots synchronously  and their $\mathit{Look}$-$\mathit{Compute}$-$\mathit{Move}$ cycles are performed at the same time. We can assume that activated robots at the same time obtain the same snapshot and their $\mathit{Compute}$ and $\mathit{Move}$ are executed instantaneously.
In SSYNC, we can assume that each activation defines discrete time called {\em round} and $\mathit{Look}$-$\mathit{Compute}$-$\mathit{Move}$ is performed instantaneously  in one round. 
\end{itemize}

As a special case of SSYNC, if all robots are activated in each round,
the scheduler is called full-synchronous (FSYNC).

In this paper, we are concerned with ASYNC and we assume the followings;
In a $\mathit{Look}$ operation, a snapshot of the environment at time $t_L$ is taken and we say that the {\em $\mathit{Look}$ operation is performed at time $t_L$.}
Each $\mathit{Compute}$ operation of $r_i$ is assumed to be done at time $t_C$ and the color of its light $\ell_i(t)$ and its pending destination $des_i$ are both set to the computed values for any time greater than $t_C$\footnote{Note that if some robot performs a $\mathit{Look}$ operation at time $t_C$, then
it observes the former color and if it does at time $t_C+\epsilon (\forall\epsilon>0)$, then
it observes the newly computed color.}.   
In a $\mathit{Move}$ operation, when the movement begins at time $t_B$ and ends at $t_E$, we say that it is performed during interval $[t_B, t_E]$, and the beginning (resp. ending) of the movement is denoted by $\mathit{Move_{BEGIN}}$ (resp. $\mathit{Move_{END}}$) occurring at time $t_B$ (resp. $t_E$).  
In the following, $\mathit{Compute}$, $\mathit{Move_{BEGIN}}$ and $\mathit{Move_{END}}$ are abbreviated as $\mathit{Comp}$, $\mathit{M_{B}}$ and $\mathit{M_{E}}$, respectively.
When a cycle has no actual movement (i.e., robots only change color and their destinations are the current positions),
we can equivalently assume that the $\mathit{Move}$ operation in this cycle is omitted, since we can consider 
the $\mathit{Move}$ operation to be performed just before the next $\mathit{Look}$ operation.

Without loss of generality, we assume the set of time instants at which the robots start executions of $\mathit{Look}$, $\mathit{Comp}$, $\mathit{M_B}$ and $\mathit{M_E}$ is $\N$.

We also consider the following restricted classes of ASYNC. 
Let a robot $r$ execute a cycle.
If no other robot can execute a $\mathit{Look}$ operation between the $\mathit{Look}$ operation of $r$ and its subsequent $\mathit{Compute}$ in that cycle,
the model is said to be {\em $\mathit{LC}$-atomic}. Thus we can assume that in the $\mathit{LC}$-atomic ASYNC model,
$\mathit{Look}$ and $\mathit{Comp}$ operations in every cycle are performed simultaneously (or atomically), say at time $t_{LC}$, and we say that the $\mathit{LC}$-operation is performed at time $t_{LC}$.

Similarly, if no other robot can execute a $\mathit{Look}$ operation between the operation $\mathit{M_B}$ of $r$ and its corresponding $\mathit{M_E}$,
the model is said to be {\em $\mathit{Move}$-atomic}.
In this case $\mathit{Move}$ operations in all cycles can be considered to be performed instantaneously and at time $t_M$.
In $\mathit{Move}$-atomic ASYNC, when a robot $r$ observes another robot  $r'$ performing a $\mathit{Move}$ operation at time $t_M$,
$r$ observes the snapshot after the moving of $r'$.

Since each operation occurs at integer times, when $\mathit{LC}$-operation is performed at time $t$ in $\mathit{LC}$-atomic ASYNC, we can assume  that the snapshot at $t$ is obtained at $t$ and the computation completes at $t+1$. Also when $\mathit{Move}$-operation begins ($\mathit{M_B}$ occurs) at time $t$ in $\mathit{Move}$-atomic ASYNC, $\mathit{M_E}$ can be assumed to occur at time $t+1$. Thus, if a robot $r$ observes another robot  $r'$ performing a $\mathit{Move}$ operation at time $t_M$,
then $r$ observes the snapshot before the moving of $r'$ until and at time $t$, and the snapshot after the moving of $r'$ from $t_M+1$.

In our settings, robots have  persistent lights and can change their colors instantly at each $\mathit{Compute}$ operation. 
We consider the following three robot models according to the visibility of lights.
\begin{itemize}
\item {\em full-light},
	a robot can observe the lights of other robots as well as its own, and it can also change the color of its own light.
\item {\em external-light}, 
	a robot can observe the light of other robots but not its own. It can however change the color of its own light in a ``write-only'' manner.
\item {\em internal-light},
	a robot can observe and change the color of its own light, but cannot observe the lights of other robots.
\end{itemize}

%When a robot performs Look operation in the internal-light model,  its snapshot is the same as that in the case of robots without lights. On the other hand, in %the full-light or external-light model,  it obtains a snapshot with locations of other robots and their colors. In this case we consider several types of snapshots %according to views robots observe.    
%vew of snapshot
%\noindent
%How to recognize lights of other robots

 \subsection{Rendezvous and \ML-Algorithms} 

An $n$-{\em Gathering} problem is defined as follows: given $n (\geq 2)$ robots initially  placed at arbitrary positions  in $\R^2$,
they congregate in finite time at a single location which is not predefined.
%\end{definition}
In the following, we consider the case where $n=2$ and
the $2$-Gathering problem is called {\em Rendezvous}.
%and the $n$-Gathering  problem ($n \geq 3)$ is simply called Gathering.
%Gathering is said to be {\em distinct} if all robots are initially placed in different positions.  An algorithm solving Gathering is said to be {\em self-stabilizing} if robots are initially set their lights to  arbitrary colors. 

When we consider algorithms on robots with lights, we exclude algorithms that solve Rendezvous only starting from initial settings in which robots have different colors of lights. That is, we consider Rendezvous algorithms that can solve Rendezvous even from initial settings in which all robots have  the  same color. %assume that all algorithms to solve Rendezvous all robots have the same colors of lights as the initial setting.
An algorithm solving Rendezvous is said to be {\em quasi-self-stabilizing} if it assumes that both robots always start with the same initial color chosen arbitrarily,
and it is {\em self-stabilizing} if the robots can start from an arbitrary color.

A particular class of algorithms, denoted by \ML, requires that robots only compute a destination point of the form $(1-\lambda)\cdot \mbox{me.position} + \lambda \cdot \mbox{other.position}$ for some $\lambda \in \R$, obtained as a function having only the colors as input (i.e., color of the other robot in the external-light) \cite{V}.
We call an algorithm in this class an \ML-algorithm.

 %Self-Stabilizing 
 %Definition and properties of \ML-algorithm

\section{Previous Results for Rendezvous}
\label{sec:PRR}
%Rendezvous is stated in Introduction??
%First we state previous results for Rendezvous without lights.
Rendezvous is trivially solvable in FSYNC but is not in SSYNC in general.
%The multiplicity detection does not help to solve Rendezvous and
%it is generally unsolved with SSYNC, even if chirality is assumed. 

\begin{theorem} \label{theorem:Rand_Impo}
{\em \cite{FPS}} Rendezvous is deterministically unsolvable in SSYNC even if chirality is assumed.
\end{theorem}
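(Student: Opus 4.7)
The plan is a classical adversary argument: for any deterministic algorithm $A$ that purports to solve Rendezvous in SSYNC I would exhibit an infinite execution, respecting chirality, in which the two robots never coincide. The key ingredient is the mirror symmetry available with only two robots. Starting from any configuration with $r_1,r_2$ at distance $d_0>0$, I would equip each robot with a local coordinate system whose $x$-axis points toward the other robot, with the perpendicular direction chosen consistently so that both frames have the same handedness (which common chirality permits). Both robots then observe exactly the same snapshot, and since $A$ is deterministic and oblivious they compute the same local destination $(x_0,y_0)$.

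Translating into global coordinates, the two destinations are symmetric about the midpoint of the segment $r_1r_2$, and they coincide if and only if the algorithm outputs the local midpoint $(d_0/2,0)$. I would proceed by a case analysis on this output. If the output is not the midpoint, the adversary activates both robots synchronously; the resulting configuration is a new symmetric pair at some positive distance (possibly rotated, with the robots' sides exchanged, or merely rescaled), and the same mirror-pair setup re-applies with the new distance. If instead the output is the midpoint, the adversary switches to a strictly alternating schedule, activating one robot per round; that robot reaches the midpoint, leaving a still-symmetric line configuration at half the previous distance, after which the argument iterates. In both branches the robot separation remains strictly positive, so rendezvous never occurs in any finite number of rounds.

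The main obstacle is to justify that the adversary can simultaneously (i)~fix the two local coordinate systems at the start of the execution and (ii)~preserve the property that both robots' local views coincide after each adversarially-chosen round, using only the freedom granted by SSYNC when chirality is assumed. The verification amounts to checking that the mirror-reflective pair of frames is invariant under the motions induced both by symmetric synchronous activations and by one-robot-at-a-time activations whenever the algorithm's output is the local midpoint; the degenerate sub-cases (algorithm swaps robots via output $(d_0,0)$, or outputs $(0,0)$) fall under the synchronous branch and yield periodic orbits of strictly positive minimum separation. Once this invariance is established, the impossibility follows because the adversary can extend the execution by one round while strictly maintaining positive distance.
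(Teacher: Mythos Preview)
The paper does not prove this theorem; it is quoted from \cite{FPS} as a known impossibility result, so there is no in-paper argument to compare against. Your sketch is the standard symmetry-based adversary argument and is essentially correct.

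Two minor remarks. First, you call the key ingredient ``mirror symmetry,'' but the pair of frames you actually construct (each robot's $x$-axis toward the other, with the \emph{same} handedness) are related by a $180^{\circ}$ rotation about the midpoint, not by a reflection. This distinction is exactly what makes the construction compatible with common chirality: a half-turn is orientation-preserving, so two right-handed robots placed in half-turn--symmetric frames obtain identical snapshots. Your ``main obstacle'' then resolves cleanly: under synchronous activation the two destinations are again half-turn images of one another (same midpoint, positive distance unless the output was the midpoint), and under a single activation the pair of frame orientations is still a half-turn pair about the new midpoint, so the invariant persists with the frames fixed once and for all---you do not need to re-orient them each round. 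Second, for fairness it suffices to alternate which robot is activated in the single-activation rounds; the halving of the distance in those rounds keeps it strictly positive forever, and the synchronous rounds activate both robots anyway.
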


If robots have a constant number of colors in their lights, Rendezvous can be solved as shown in the following theorem (or Table~\ref{tab:Table-Rendezvous}).

\begin{theorem}\label{PRforSSLA}
%{\em \cite{FSVY,DFPSY,V}}
Rendezvous is solved by self-stabilizing \ML-algorithms
under the following assumptions;
\begin{enumerate}
%\item Full-light with $2$ colors, Non-Rigid and SSYNC,
\item full-light with $2$ colors, Non-Rigid and $\mathit{LC}$-atomic ASYNC {\em \cite{OWK}},
\item full-light with $3$ colors, Non-Rigid and ASYNC {\em \cite{V}},
\item external-light with $3$ colors, Non-Rigid and SSYNC {\em \cite{FSVY}}.
\end{enumerate} 
\end{theorem}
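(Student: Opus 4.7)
The plan is to establish each of the three items by exhibiting an explicit \ML-algorithm, specified as a finite table mapping each observable color (an ordered pair of colors in the full-light case; a single color in the external-light case) to a pair $(\lambda, c')$ giving the convex-combination coefficient for the destination and the new color of the acting robot's light. Correctness then reduces to analyzing the finite state transition system on joint configurations induced by this table, and in each case the three ingredients we must verify are termination (robots eventually coincide), safety (they do not diverge), and self-stabilization (both properties hold from every initial color pair).

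For item~1, with full-light and two colors in $\mathit{LC}$-atomic ASYNC, I would arrange the transition table so that at most one of the two robots ever chooses $\lambda \neq 0$ in any configuration reachable after a short warm-up phase. The $\mathit{LC}$-atomicity ensures that color changes serve as reliable witnesses of a robot's decision: a robot never snapshots another robot between its own $\mathit{Look}$ and $\mathit{Compute}$, so the two robots cannot simultaneously commit to moves based on an identical mis-synchronized view. Enumerating the four possible (own, other) pairs, one verifies that each cycle either strictly decreases the inter-robot distance by at least $\delta$ or transforms the configuration into one from which progress is guaranteed within a bounded number of subsequent cycles.

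For item~2, with full-light and three colors in general ASYNC, the extra color is needed because an adversarial scheduler can let a robot observe a stale snapshot: the other robot may already have begun or even completed a $\mathit{Move}$ before the observer's $\mathit{Compute}$ finishes. I would use the third color as a guard signalling ``I have just changed color but possibly not yet finished moving,'' so that the observer can safely defer. The scheme is a rotation on the three colors, and the correctness argument tracks the evolution of the color pair over a bounded number of cycles. For item~3, external-light and three colors in SSYNC, synchronous rounds eliminate mid-cycle interleavings; a three-phase protocol (for example ``announce, move, idle'' encoded by the three colors) lets a single designated mover approach the other in each round until they coincide, terminating under Non-Rigid with minimum step $\delta$.

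The main obstacle in all three cases is self-stabilization: the proof must cover every possible initial color assignment, not merely a canonical one. Item~1 is the tightest, since with only two colors and asynchronous compute/move interleavings there is essentially no slack to distinguish robot roles; the case analysis for the four initial pairs must all funnel into the same safe set without livelock. I expect that a lexicographic potential, first on the number of ``bad'' color pairs and then on the Euclidean distance between the two robots, will suffice, but verifying it requires careful handling of robots still executing a $\mathit{Move}$ from a prior cycle, and of cycles in which the computed destination coincides with the current position (so the $\mathit{Move}$ is effectively omitted as noted in Section~\ref{sec:model}).
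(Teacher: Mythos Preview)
The paper does not prove this theorem at all: it appears in Section~\ref{sec:PRR} (``Previous Results for Rendezvous'') as a summary statement, with each item attributed to the cited reference where the actual algorithm and its correctness proof appear. There is thus no ``paper's own proof'' to compare your proposal against; the paper merely quotes Algorithm~\ref{algo:Ren3} from \cite{FSVY} for item~3 and otherwise defers entirely to \cite{OWK}, \cite{V}, and \cite{FSVY}.

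As for your sketch itself, it is too schematic to stand as a proof, and in places it drifts from what the cited algorithms actually do. For item~3 you describe a protocol in which ``a single designated mover approaches the other in each round,'' but with external-light a robot cannot see its own color and hence cannot know its role; the actual algorithm (Algorithm~\ref{algo:Ren3}) has both robots react symmetrically to the \emph{other} robot's color, and symmetry is broken only by the SSYNC scheduler activating a subset. For item~1 you assert that $\mathit{LC}$-atomicity prevents both robots from ``simultaneously committing to moves based on an identical mis-synchronized view,'' but $\mathit{LC}$-atomicity does not prevent simultaneous $\mathit{LC}$-operations (both robots may perform $\mathit{LC}$ at the same instant), so the mechanism you invoke does not by itself rule out the symmetric failure mode; the actual $2$-color full-light algorithm from \cite{OWK} must handle this case explicitly. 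Your lexicographic-potential idea is plausible in spirit, but none of the three case analyses is carried out, so the proposal remains a plan rather than a proof.
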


\begin{theorem}\label{PRforSSnonLA}{\em \cite{HDT}}
%{\em \cite{FSVY,DFPSY,V}}
Rendezvous is solved by a self-stabilizing non-\ML-algorithm
in full-light with $2$ colors, Non-Rigid and ASYNC.
%\begin{enumerate}
%\item Full-light with $2$ colors, Non-Rigid and SSYNC,
%\item Full-light with $2$ colors, Non-Rigid and ASYNC,
%\item Full-light with $3$ colors, Non-Rigid and ASYNC,
%\item external-light with $3$ colors, Non-Rigid and SSYNC\cite{FSVY}.
%\end{enumerate} 
\end{theorem}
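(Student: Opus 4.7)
The plan is to exhibit a full-light algorithm on $2$ colors (call them $A$ and $B$) whose transition function, indexed by the observed pair $(\mbox{own.color},\mbox{other.color})$, returns both a new color and a destination, where in at least one of the four cases the destination is computed as a geometric function of the segment rather than as a fixed affine combination of the two positions. Since by Theorem~\ref{PRforSSLA} together with the known impossibility mentioned in the introduction no \ML-algorithm with $2$ colors solves Rendezvous even in SSYNC, the ``non-\ML'' hypothesis is essential and motivates this design freedom.

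First I would tabulate the four possible observations $(A,A),(A,B),(B,A),(B,B)$ and assign to each a new own-color and a destination. A natural skeleton is to make the asymmetric pairs act asymmetrically (e.g., the robot seeing $(B,A)$ stays, while the robot seeing $(A,B)$ moves towards its mate) and to use the two symmetric pairs $(A,A)$ and $(B,B)$ as ``symmetry-breaking'' cases. In the symmetric case the destination will not be of the form $(1-\lambda)\cdot \mbox{me}+\lambda\cdot\mbox{other}$ with $\lambda$ a function of colors only: instead, the move amount is defined as a capped function of the current distance (for instance, move towards the other a distance $\min(\mathit{dis}(\mbox{me},\mbox{other}),d_0)$ for some fixed $d_0$), so that the ratio of travelled to total distance depends on the geometry.

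Second, I would prove correctness by case analysis on the initial configuration and on the ASYNC interleaving. The argument proceeds in three stages: (i) starting from a symmetric color configuration, show that after a bounded number of cycles the robots reach an asymmetric color configuration, exploiting that the Non-Rigid adversary cannot prevent all asymmetries once the geometric move is applied; (ii) show that once colors are asymmetric, they remain asymmetric and the inter-robot distance is non-increasing, with a strict decrease each time the designated ``mover'' completes a cycle; (iii) close the argument by a standard potential/liveness step, noting that by the guaranteed minimum step $\delta>0$ of Non-Rigid movement, the distance cannot decrease by arbitrarily small amounts forever, so Rendezvous occurs in finite time.

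The main obstacle is stage~(i), symmetry breaking in ASYNC with only $2$ colors. A third colour would naturally encode an intermediate ``I have just moved'' state, so its absence forces the algorithm to encode that information in the geometry itself. Concretely, one must verify that for every schedule — including those in which one robot performs a $\mathit{Look}$ during the other's $\mathit{Move}$, or in which a pending destination computed long ago is finally executed — the post-move configuration is not again perfectly symmetric in both colors and positions. This is precisely where the capped, non-linear move distance is used: it creates a position asymmetry that a $\lambda$-scaling would preserve, and a careful case check on which of the two robots first completes $\mathit{Comp}$ after the perturbation shows that the colors separate within a bounded number of additional cycles, whereupon stage~(ii) takes over.
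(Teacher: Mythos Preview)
The paper does not give its own proof of this statement: Theorem~\ref{PRforSSnonLA} appears in Section~\ref{sec:PRR} (``Previous Results for Rendezvous'') and is simply a citation of~\cite{HDT}. There is nothing in the present paper to compare your argument against; the authors merely record the result as background.

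That said, your sketch has a concrete gap that would need to be fixed before it could stand as an independent proof. Your non-\ML\ ingredient is a capped step ``move $\min(\mathit{dis}(\mbox{me},\mbox{other}),d_0)$ towards the other'' for a fixed constant $d_0$. In the model of this paper the robots share no unit of distance and, in plain Non-Rigid, do not know $\delta$; a robot therefore has no way to realise an absolute threshold $d_0$ that is consistent across the two robots (or even across its own successive activations, since its local frame may rescale). Any cap a robot applies is in its own current coordinate system, so in global terms the two robots' caps are unrelated, and your stage~(i) argument --- that the cap ``creates a position asymmetry'' --- does not go through: under a fully symmetric FSYNC-style activation both robots apply the same rule to symmetric inputs and the configuration stays symmetric in positions and colors, exactly as for an \ML-algorithm. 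The algorithm of~\cite{HDT} breaks out of \ML\ differently (the destination rule branches on a geometric predicate of the observed configuration rather than on an absolute length threshold), which is what makes it compatible with the absence of a shared metric; if you want a self-contained proof you should replace the $d_0$-cap by a mechanism of that kind and redo the case analysis for stage~(i).
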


\begin{theorem}\label{PRfornonSSnonLA}
%{\em \cite{FSVY,DFPSY,V}}
Rendezvous is solved by non-quasi-self-stabilizing non-\ML-algorithms
under the following assumptions;
\begin{enumerate}
\item full-light with $2$ colors, Non-Rigid and ASYNC {\em \cite{HDT}},
\item full-light with $2$ colors, Non-Rigid(+$\delta$) and ASYNC {\em \cite{OWK}},
\item external-light with $3$ colors, Non-Rigid(+$\delta$) and ASYNC {\em \cite{FSVY}},
\item external-light with $12$ colors, Rigid and ASYNC {\em \cite{FSVY}},
\item internal-light with $3$ colors, Non-Rigid(+$\delta$) and SSYNC {\em \cite{FSVY}},
\item internal-light with $6$ colors, Rigid and SSYNC {\em \cite{FSVY}}.
%\item[(1)] Rendezvous is solved in full-light, Non-Rigid and SSYNC with $2$ colors.
%\item[(2)] Rendezvous is solved in external-light, Non-Rigid and SSYNC with $3$ colors.
%\item[(3)] Rendezvous is solved in internal-light, Rigid and SSYNC with $6$ colors.
%\item[(4)]  Rendezvous is solved in internal-light, Non-Rigid(+$\delta$) and SSYNC with $3$ colors.% and knowledge of a minimum distance $\delta$ robots move.
%\item[(5)] Rendezvous is solved in full-light, Non-Rigid and ASYNC with $4$ colors.
%\item[(6)] Rendezvous is solved in external-light, Rigid and ASYNC with $12$ colors.
%\item[(7)] Rendezvous is solved in external-light, non-rigid(+$\delta$) and ASYNC with $3$ colors.% and knowledge of a minimum distance $\delta$ robots move.
\end{enumerate} 
\end{theorem}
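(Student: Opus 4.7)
Each of the six items restates a previously published rendezvous algorithm, so my plan is to invoke the cited constructions rather than reprove them, and to use a uniform template to confirm that each satisfies the hypotheses of the statement. The template has three parts: first, identify the finite color automaton each robot runs; second, exhibit the destination function, which here may depend non-linearly on the measured distance to the partner (so it need not be an \ML-algorithm); third, give a potential function on configurations whose value is bounded below and strictly decreases with every completed activation (or, in ASYNC, with every stable $\mathit{Look}$-$\mathit{Comp}$-$\mathit{Move}$ cycle) until Rendezvous is reached.

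For items~(1) and~(2), the full-light two-color algorithms of \cite{HDT,OWK} use the second color as a ``commit'' flag: upon observing that both lights agree, a robot toggles to the alternative color and advances by a distance-dependent fraction, relying on the knowledge of $\delta$ in item~(2) to finish the protocol in a single terminating step. The non-\ML\ character is essential, because an \ML-algorithm cannot use the Euclidean distance to schedule its final commit.

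For items~(3) and~(4), the external-light protocols of \cite{FSVY} implement a pipeline of safe intermediate states by rotating through three (respectively twelve) colors: every color pair the adversary can display during an interleaved cycle is classified, and the destination function is tuned so that the induced move either strictly shrinks the interpoint distance or advances the pipeline by one step. Item~(3) needs only three colors because Non-Rigid(+$\delta$) allows a terminal step, while item~(4) pays the extra colors to tolerate Rigid movement, where no distance threshold can be hard-coded. Items~(5) and~(6) are analogous but run under SSYNC with internal-lights, so each color acts as a private program counter: three colors suffice under Non-Rigid(+$\delta$), and six are required under Rigid movement to encode the extra distance-adjustment subroutine of \cite{FSVY}.

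The single technical obstacle common to all six items is proving that the color automaton avoids live-lock under the adversarial scheduler, in particular that the ``seen while moving'' phenomenon of ASYNC does not trap the protocol in a repeating color pair at positive distance. This is discharged in each reference by a finite case analysis over the product of color-pair and configuration-class, verifying that every reachable state either strictly reduces the potential or reaches the terminal Rendezvous state within a bounded number of steps. Once this template is in place the six items are merely instantiations, and the claim follows by citing the original theorems.
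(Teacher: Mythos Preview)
The paper does not prove this theorem at all: it is presented in the ``Previous Results'' section purely as a catalogue of known results, each line carrying its own citation, and no proof environment follows the statement. Your core plan---``invoke the cited constructions rather than reprove them''---is therefore exactly what the paper does, and that single sentence already suffices.

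Everything after your first sentence is surplus to what the paper requires and, more importantly, is speculative. The ``uniform template'' (finite color automaton, distance-dependent destination, strictly decreasing potential) is your own reconstruction, not something the paper asserts, and several of the specific mechanisms you attribute to the cited works (the ``commit flag'' reading of \cite{HDT,OWK}, the ``pipeline'' description of the $12$-color protocol, the claim that a single potential function handles all six) are not supported by anything in the present paper and may not match the actual constructions in \cite{HDT,OWK,FSVY}. If you intend to keep this material, you would need to verify each description against the original sources; otherwise, simply state that the theorem collects results proved in the cited references and stop there, as the authors do.
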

%Figure is better!
%full-light > external-light > internal-light
%non-rigid with delta >rigid >non-rigid

%Impossibility and/or possibility results for Gathering are stated in the following theorems. 

Impossibility of Rendezvous \ML-algorithms is stated as follows.
 
\begin{theorem}\label{LBLalgorithms}
%{\em \cite{V,FSVY}}
\begin{enumerate}
\item In ASYNC and Rigid, Rendezvous is not solvable by any quasi-self-stabilizing \ML-algorithm with full-light of $2$ colors {\em \cite{V}}.
\item In ASYNC and Non-Rigid, Rendezvous is not solvable by any \ML-algorithm with full-light of $2$ colors {\em \cite{V}}.
\item In $\mathit{Move}$-atomic but non-$\mathit{LC}$-atomic ASYNC and Rigid, Rendezvous is not solvable by any \ML-algorithm with external-light of any number of colors {\em \cite{FSVY}}.%Rigid?
\item In SSYNC and Rigid, Rendezvous is not solvable by any \ML-algorithm with external-light of any number of colors {\em \cite{FSVY}}.%Rigid?
\end{enumerate}%check Rigid and non-LC-atomic!!!!
\end{theorem}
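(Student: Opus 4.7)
The four parts are distinct impossibility results attributed to \cite{V,FSVY}, but they share a common adversarial structure that exploits the rigidity of $\mathcal{L}$-algorithms. Recall that for such an algorithm the destination lies on the line through the two robots at a signed abscissa $\lambda$ depending only on the observed colors, and the new light is similarly a function of observed colors; in the external-light case both functions take only the \emph{other} robot's color as input. The algorithm is therefore entirely specified by two finite tables $\lambda(\cdot)$ and $\mu(\cdot)$. I plan to handle parts (3) and (4) by a symmetry-and-cycle argument on the shared color, and parts (1) and (2) by a finite case analysis on the four-entry tables induced by two colors of full-light.

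For part (4) I would start from a configuration with the robots at distance $d>0$ and both lights set to a common color $c_{0}$. Since neither robot reads its own color, whenever both share the same color they necessarily compute the same $\lambda$ and the same color update. The adversary activates both robots in every round: by induction the lights remain equal throughout the execution, and after round $t$ the distance is $|1-2\lambda(c_{t})|\cdot d_{t-1}$, where $(c_{t})$ is the orbit of $c_{0}$ under $\mu$. This orbit is eventually periodic, and a rendezvous can occur in finite time only if some color $c^{\ast}$ on the cycle satisfies $\lambda(c^{\ast})=1/2$. The crux is then to show that whenever such a $c^{\ast}$ exists, the adversary switches to a single-robot activation one round before $c^{\ast}$ is reached, thereby breaking color symmetry; after this desynchronization, the finiteness of the color set forces an eventual return to a symmetric positive-distance configuration, producing an infinite loop with no meeting. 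For part (3), the same scheme is amplified by the Look--Compute gap: the adversary delays the Compute of a robot long enough for the other to perform several full cycles. Because an $\mathcal{L}$-algorithm bases its destination on the \emph{observed} positions while applying the move from the \emph{current} position, the desynchronization becomes even easier to maintain, and no finite number of colors can prevent it.

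For parts (1) and (2), the two-color full-light constraint makes $\lambda$ and $\mu$ maps from $\{0,1\}^{2}$ to $\mathbb{R}$ and $\{0,1\}$ respectively, and I would enumerate all such tables up to the natural color-swap symmetry. For each surviving candidate I would construct an explicit adversarial execution. In the Non-Rigid setting of part (2) the adversary's ability to stop a robot after the minimum distance $\delta$ readily defeats any table whose $\lambda$ entries avoid $1$, and the remaining cases are handled by ASYNC interleaving so that the target robot has moved and changed color before a $\lambda=1$ move is applied. In the Rigid, quasi-self-stabilizing setting of part (1), the adversary lacks stopping power, so the argument relies more heavily on the ASYNC interleaving itself: one robot is allowed to complete several full cycles between the Look and the Move of the other, so that when the lagging robot finally applies its move, the snapshot on which the destination was computed is so stale that the move leaves the robots at a distance comparable to the original separation.

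The main obstacle, common to all four parts, is identifying and maintaining the adversary-preserved invariant: a common light color in the external-light cases (parts (3) and (4)), and a bounded, strictly positive inter-robot distance in the 2-color full-light cases (parts (1) and (2)). In the external-light setting the invariant follows almost for free from each robot's blindness to its own light, whereas in the 2-color full-light setting it must be maintained against a more expressive table; the hardest subcase is part (1), where the algorithm may legitimately employ $\lambda=1/2$ or $\lambda=1$ entries and the adversary has neither stopping power nor synchrony to rely on---only interleaving and the freedom to choose an initial same-color configuration.
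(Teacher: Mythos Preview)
The paper does not supply its own proof of this theorem: all four items are stated purely as citations of prior work (\cite{V} for items~1--2 and \cite{FSVY} for items~3--4), so there is no in-paper argument against which to compare your sketch.

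That said, your sketch for item~(4) has a genuine gap, and in fact item~(4) as printed appears to be misstated. The very same paper establishes (Theorem~\ref{PRforSSLA}(3) and Algorithm~\ref{algo:Ren3}) that Rendezvous \emph{is} solved by a self-stabilizing $\mathcal{L}$-algorithm with external-light of $3$ colors in SSYNC and Non-Rigid, hence a fortiori in Rigid; this directly contradicts item~(4) as written. Consistently with Table~\ref{tab:Table-Rendezvous}, item~(4) should almost certainly read \emph{internal}-light rather than external-light. Your argument for~(4)---break color symmetry by a single activation just before the $\lambda=1/2$ step, then assert that ``the finiteness of the color set forces an eventual return to a symmetric positive-distance configuration''---fails exactly at that asserted return: Algorithm~\ref{algo:Ren3} is an explicit counterexample in which the desynchronized SSYNC execution does not loop back to a symmetric non-gathered configuration but instead reaches Rendezvous. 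If item~(4) is read as internal-light, the skeleton of your argument survives (robots with equal internal state still compute identically), but the desynchronization half must be redone from scratch: with internal-light neither robot observes the other's color, so after one asymmetric activation the two robots are simply running the same periodic $\lambda$-sequence out of phase, and the impossibility comes from showing that no single periodic $\lambda$-sequence can cope with both the in-phase schedule and all out-of-phase ones---a different argument from the one you outlined.

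For items~(1)--(3) your outlines are plausible and broadly in the spirit of the cited sources, though the hardest step you flag in item~(1) (maintaining a strictly positive distance under Rigid moves with $\lambda\in\{1/2,1\}$ entries via interleaving alone) is left entirely unspecified; since the present paper offers no proof, there is nothing further to compare.
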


%Th.2(1): LC-atomic is a maximal subset of ASYNC such that with 2 colors
%Th.2(2): optimal with respect to the number of colors
In the following sections, we consider \ML-algorithms to solve Rendezvous on robots with external-lights and clarify the relationship among synchrony, the number of colors, movement restriction, and initial configurations.

%Rev_3 from here
\section{Rendezvous \ML-Algorithms for Robots with Three Colors of External Lights}
\label{sec:SSYNCRendezvousAlgorithms}

%\subsection{Properties for \ML-Algorithms}
\Newcodeline
\begin{algorithm}[h]
\caption{SS-Rendezvous-with-3-colors (scheduler, movement, initial-color)\cite{FSVY}}
\label{algo:Ren3}
{\footnotesize
%{\small 
\begin{tabbing}
111 \= 11 \= 11 \= 11 \= 11 \= 11 \= 11 \= \kill
{\em Parameters}: scheduler, movement-restriction, initial-color \crm
{\em Assumptions}: external-light, three colors ($A$, $B$ and $C$) \crm
\crm
\Cl \> {\bf case} other.light   {\bf of } \crm

\Cl \>$A$: \crm
\Cl \>\>$me.light \leftarrow B$ \crm
\Cl \>\>$me.des \leftarrow$ the midpoint of $me.position$ and $other.position$\crm
\Cl \> $B$: \crm
\Cl \> \>$me.light \leftarrow C$\crm
%\Cl \> \>$me.des \leftarrow me.position$ //stay\crm
\Cl \> $C$: \crm
\Cl \>\>$me.light \leftarrow A$ \crm 
\Cl \>\>$me.des \leftarrow other.position$ \crm 

\Cl \> {\bf endcase} 
\end{tabbing}
%}
}
\end{algorithm}

In what follows, two robots are denoted as $r$ and $s$.
Let $t_0$ be the starting time of the algorithm. 
Given a robot $robot$, an operation $op$($\in \{ \mathit{Look, Comp, LC, M_{B}, M_{E}} \}$), and a time $t$,
$t^+(robot, op)$ denotes the time $robot$ performs the first $op$ after $t$ (inclusive) if there exists such operation, and
$t^-(robot, op)$ denotes the time $robot$ performs the first $op$ before $t$ (inclusive)  if there exists such operation.
If $t$ is the time the algorithm terminates, $t^+(robot, op)$ is not defined for any $op$.
When $robot$ does not perform $op$ before $t$ and $t^-(robot, op)$ does not exist, $t^-(robot, op)$ is defined to be $t_0$.

A time $t_c$ is called a {\em cycle start time} ({\em cs-time}, for short), if the next performed operations of both $r$ and $s$ after $t$ are both $\mathit{Look}$, or
otherwise, the robots performing the operations neither change their colors of lights nor move.
In the latter case, we can consider that these operations can be performed before $t_c$ and the subsequent $Look$ operation can be performed as the first operation after $t_c$.

In \cite{FSVY}, a Rendezvous algorithm is shown in SSYNC and Non-Rigid with external-light of three colors (Algorithm~\ref{algo:Ren3}).

\begin{theorem}
{\em \cite{FSVY}}
Rendezvous is solved by SS-Rendezvous-with-3-colors(SSYNC, Non-Rigid, any).
It is a self-stabilizing \ML-algorithm.
\end{theorem}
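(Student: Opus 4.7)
The plan is to carry out a case analysis on the joint color pair $(\ell(r,t), \ell(s,t))$ at each cs-time $t$ of Algorithm~\ref{algo:Ren3} and track the inter-robot distance $d(t) = dis(p(r,t), p(s,t))$, arguing that $d$ reaches $0$ in finite time under any SSYNC schedule and any initial color pair. Because every destination computed by the algorithm (the midpoint of $r$ and $s$, or the other robot's position) lies on the current segment between the two robots, $d$ is monotone non-increasing in every round; the task reduces to bounding from below how often $d$ strictly decreases.

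I would first enumerate the both-active round transitions. A direct reading of the code gives the symmetric color cycle $(A,A)\to(B,B)\to(C,C)\to(A,A)$ and the asymmetric $6$-cycle $(A,B)\to(C,B)\to(C,A)\to(B,A)\to(B,C)\to(A,C)\to(A,B)$. In the $(A,A)$-transition both robots move to their common midpoint, so by the Non-Rigid progress guarantee each moves by at least $\delta$ and $d$ drops by at least $2\delta$ (or rendezvous is achieved when $d \le 2\delta$). Every transition of the asymmetric cycle involves at least one robot moving along the segment, so $d$ strictly decreases by at least $\delta$ in each such round. The only both-active transitions that may preserve $d$ are $(B,B)\to(C,C)$ (pure color update) and $(C,C)\to(A,A)$ (where both robots may cross by moving to each other's original positions). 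Hence every full traversal of either cycle strictly decreases $d$.

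Finally I would handle single-activation rounds and their interleavings with both-active rounds. For each color pair and each choice of single active robot, the active robot either stays in place (when it sees $B$) or moves by at least $\delta$ along the current segment, so $d$ is again monotone non-increasing and strictly decreases by $\delta$ whenever a move occurs; the active robot's color is advanced deterministically by the code. The main obstacle is ruling out a schedule that, via cleverly timed no-move activations and color-preserving both-active rounds, prevents $d$ from ever strictly decreasing. I would resolve this with a lexicographic potential pairing a color-pair rank (measuring, from the current pair, the number of fair activations needed before the next move-enabling transition, which is bounded by the length of the color cycle) with $d$ itself. SSYNC fairness guarantees every robot is activated infinitely often, so the color rank cannot stall; every strict decrement of the rank eventually forces a move that drops $d$ by at least $\delta$; and finitely many such drops bring $d$ below $2\delta$, after which the next $(A,A)$-transition (reachable in a bounded number of rounds by the cycle structure) completes rendezvous. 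The argument depends on neither the initial color pair nor on chirality, so it simultaneously establishes correctness and self-stabilization of the \ML-algorithm.
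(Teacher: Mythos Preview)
The paper does not prove this theorem itself; it is quoted as a known result from \cite{FSVY}, so there is no in-paper argument to compare against. Your proposal must therefore be judged on its own merits.

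Your overall structure (round-by-round case analysis on the colour pair, monotonicity of $d$, and a lexicographic potential combining a colour-rank with $d$) is the right shape, but the termination step has a genuine gap. You conclude by saying that once $d\le 2\delta$, ``the next $(A,A)$-transition (reachable in a bounded number of rounds by the cycle structure) completes rendezvous.'' This reachability claim is false in general: from any asymmetric colour pair, the fully-synchronous schedule keeps the pair on the $6$-cycle $(A,B)\to(C,B)\to(C,A)\to(B,A)\to(B,C)\to(A,C)\to(A,B)$ forever, and $(A,A)$ is never visited. So your argument, as written, does not terminate on asymmetric initial configurations, which is precisely the self-stabilizing case you are trying to cover.

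The fix is to replace the appeal to $(A,A)$ by the observation that the asymmetric cycle itself contains transitions of the form ``one robot stays while the other moves to it'': e.g.\ in $(C,B)$ with both active, $r$ sees $B$ and stays while $s$ sees $C$ and moves to $r$'s position, so $d$ becomes $0$ once $d\le\delta$. More generally, under any fair SSYNC schedule from an asymmetric pair, such a $(B,C)/(C,B)$ step with the $B$-coloured robot activated occurs within boundedly many rounds. A secondary imprecision: your ``$d$ strictly decreases by at least $\delta$ in each such round'' is not literally true for midpoint moves when $d<2\delta$ (the drop is then $d/2$), nor for the mixed-move rounds $(A,C)\to(A,B)$ and $(C,A)\to(B,A)$; you should instead argue that the decrease is at least $\min(\delta,d/2)$ and then use the stay-plus-move-to-other step above to finish once $d\le\delta$.
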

%fromhere2
We will show that Algorithm~\ref{algo:Ren3} does not work in $\mathit{LC}$-atomic and $\mathit{Move}$-atomic ASYNC and Rigid, 
starting from the initial color $A$. 
In fact, in the next section, more generally
we will show that 
there exists
no \ML-algorithm to solve Rendezvous
in $\mathit{LC}$-atomic and $\mathit{Move}$-atomic ASYNC and Non-Rigid with three colors of external-lights. We also show that there exists no quasi-self-stabilizing \ML-algorithm to solve Rendezvous if we change the assumption of Non-Rigid to Rigid. %in $\mathit{LC}$-atomic and $Move$-atomic ASYNC and Rigid, if it uses three colors of external-lights. 
However, we show that
there exists a non-quasi-self-stabilizing \ML-algorithm to solve Rendezvous in $\mathit{LC}$-atomic ASYNC and Rigid with three colors of external-lights.

%When two robots perform $\mathit{LC}$-operations simultaneously,
%note in the following.
%Let $\ell(r,t)=\alpha$ and $\ell(s,t)=\beta$.
%Suppose $r$ and $s$ change their colors into $\alpha'$ and $\beta'$,
%if $r$ and $s$ observe colors $\beta$ and $\alpha$, respectively.
%When two robots $r$ and $s$ perform $\mathit{LC}$-operation at the same time $t'=t^+(r,LC)=t^+(s,LC)$, we assume that $r$ and $s$ observe $s$ with $\beta$ and $r$ with $\alpha$ at time $t'$, and $r$ with $\alpha'$ and $s$ with $\beta'$ are observed at any time $t''(>t')$, respectively. 

\Newcodeline
\begin{algorithm}[h]
\caption{NonQSS-Rendezvous-with-3-colors (scheduler, movement, initial-color)}
\label{algo:NonQSSRen3}
{\footnotesize
%{\small 
\begin{tabbing}
111 \= 11 \= 11 \= 11 \= 11 \= 11 \= 11 \= \kill
{\em Parameters}: scheduler, movement-restriction, initial-color \crm
{\em Assumptions}: external-light, three colors ($A$, $B$ and $C$) \crm
\crm
\Cl \> {\bf case} other.light   {\bf of } \crm

\Cl \>$A$: \crm
\Cl \>\>$me.light \leftarrow B$ \crm
\Cl \>\>$me.des \leftarrow$ the midpoint of $me.position$ and $other.position$\crm
\Cl \> $B$: \crm
\Cl \> \>$me.light \leftarrow C$\crm

\Cl \>$C$: \crm
\Cl \>\>$me.light \leftarrow B$ \crm 
\Cl \> \>$me.des \leftarrow other.position$ \crm
%\Cl \>\>\>$me.des \leftarrow me.position$ // stay\crm 

\Cl \> {\bf endcase} 
\end{tabbing}
%}
}
\end{algorithm} 

\begin{theorem}
Rendezvous is solved by NonQSS-Rendezvous-with-3-colors($\mathit{LC}$-atomic ASYNC, Rigid, A).
It is a non-quasi-self-stabilizing \ML-algorithm.
\end{theorem}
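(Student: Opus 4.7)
The plan is to analyze all $\mathit{LC}$-atomic ASYNC executions starting from the all-$A$ configuration by a case split on the times of the two robots' first LC operations, showing that every interleaving drives the system into a configuration from which rendezvous is forced. Two observations carry the analysis: (i)~after a robot performs its first LC, its light is never $A$ again (it becomes $B$ when the robot observes $A$ or $C$, and $C$ when it observes $B$), so $A$ appears only in the initial configuration; and (ii)~by inspection of Algorithm~\ref{algo:NonQSSRen3}, a robot moves exactly when it observes $A$ (to the midpoint) or $C$ (all the way to the partner's position), and never when it observes $B$. Observation~(ii) also makes the \ML-character of the algorithm immediate, since both destinations are of the required form $(1-\lambda)\cdot me.position + \lambda\cdot other.position$ with $\lambda \in \{1/2, 1\}$.

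Let $t_r$ and $t_s$ be the times of the first LC of $r$ and $s$ respectively, and by symmetry assume $t_r \leq t_s$. If $t_r < t_s$, every LC of $r$ in the interval $[t_r, t_s)$ sees $s$'s color $A$, moves $r$ to the current midpoint, and fixes $r$'s light at $B$; throughout this interval $s$ has not moved. At time $t_s$, $s$ observes $r$'s $B$, sets its own light to $C$, and does not move, so the state becomes $(L_r, L_s)=(B, C)$ with $s$ still at its original position. From this moment on, $s$ is pinned: every subsequent LC of $s$ sees $r$'s $B$ and merely reaffirms $s$'s color $C$ without moving. By fairness, $r$ eventually performs an LC in this state, observes $C$, and moves directly to $s$'s position; Rigid movement then guarantees that $r$ reaches $s$ in finite time. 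In the remaining case $t_r = t_s$, both robots see $A$, both set their lights to $B$, and both move to the same midpoint $M$ of the initial positions; by Rigid movement both arrive at $M$ in finite time, and any intervening second LC performed by the first arriver only switches its light from $B$ to $C$ without moving, so the slower robot still completes its move to $M$. Once both robots coincide, a direct case check on all possible $(L_r, L_s)$ pairs shows that every LC yields a destination equal to the observer's own position, so the robots stay together forever.

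For the non-quasi-self-stabilizing claim, I would exhibit a same-color initial configuration from which rendezvous is prevented: starting from $(C, C)$ at distinct positions, under a scheduler that activates $r$ and $s$ simultaneously, both observe $C$, both set their lights to $B$, and both move Rigidly to the other's original position, swapping places without changing the distance; the resulting $(B, B)$ then returns to $(C, C)$ at the swapped positions via a pair of simultaneous LCs that both observe $B$ and perform no motion, and iterating gives a fair infinite execution in which the robots never meet. The main technical obstacle is the careful treatment of ASYNC interleavings during the transition into the stable $(B, C)$ state, in particular verifying that once $s$'s light becomes $C$ it can never change again or cause $s$ to move (because $s$ thereafter only ever observes $r$'s $B$), so that $r$'s final Rigid move to the fixed point occupied by $s$ indeed produces rendezvous.
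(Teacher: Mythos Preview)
Your proposal is correct and follows essentially the same approach as the paper: a case split on whether the two robots' first $\mathit{LC}$ operations occur simultaneously or not, followed by tracking the resulting color/position evolution until a stable configuration forces rendezvous under Rigid movement, and the same $(C,C)\to(B,B)\to(C,C)$ swap cycle to witness non-quasi-self-stabilization. Your handling of the asymmetric case $t_r<t_s$ is slightly more streamlined than the paper's---the paper further splits on whether $s$'s first $\mathit{LC}$ falls before or after $r$'s second $\mathit{LC}$, whereas you collapse this by noting directly that $r$'s light remains $B$ throughout and hence $s$ is pinned once it sets $C$---but the underlying mechanism is identical.
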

\begin{proof}
%Let $t_0$ be the starting time and 
Let $\ell(r,t_0)=\ell(s,t_0)=A$.
There are two cases: (I)~$r$ and $s$ perform $\mathit{LC}$-operations at the same time, and (II)~one
robot, say $r$, performs its $\mathit{LC}$-operation earlier than the other robot~($s$).

(I)~Let $t_1=t^+_0(r,LC)=t^+_0(s,LC)$. We consider the ends of these cycles for both robots. They move to the midpoint in their $\mathit{Move}$-operations and we can assume that $t^+_1(s,\mathit{M_E})(=t_2) \leq t^+_1(r, \mathit{M_E})(=t_3)$. 
If $s$ does not perform any cycle between $t_2$ and $t_3$, the distance of $r$ and $s$ becomes $0$ at $t_3$ and
$\ell(r,t_3)=\ell(s,t_3)=B$.
Otherwise, note that  $\ell(r,t_1+1)=B$ and
consider that $s$ performs $\mathit{LC}$ operations between $t_2+1$ and $t_3$.
If $s$ performs the first $\mathit{LC}$ at $t (t_2 < t \leq t_3)$, then $s$ changes its color into $C$ at $t+1$ but does not move after $t+1$, and
$s$ retains the color $C$ after the $\mathit{LC}$ and does not move until $t_f=max(t'+1,t_3)$, where $t'$ is the time of the last $\mathit{LC}$ $s$ performs.  Thus, the distance of the two robots becomes $0$ at time $t_f$ and $\ell(r,t_f)=B$ and $\ell(s,t_f)=C$.
It can be verified that they do not move after $t_f$ in either cases.

(II)~Let $t_1=t^+_0(r,\mathit{LC}) < t_2=t^+_0(s,\mathit{LC})$ and let $t_3=t^+_1(r,\mathit{LC})$.
If $(t_1 <) t_2 <t_3$, $r$ has moved to the midpoint of $p(r,t_0)$ and $p(s,t_0)$ and its color is $B$ and $s$ stays at $p(s,t_0)$ and its color is $C$ at time $t_3$. Then $r$ observes the color $C$ of $s$ at $t_3$ and moves to the position of $s$ and the color of $r$ is $B$ after $t_3$. Since $s$ retains the color ($C$) and stays at the same position even if it performs cycles after $t_3$, the distance of $r$ and $s$ becomes $0$ at $t^+_2(r,\mathit{M_E})$ and 
they do not move after $t^+_2(r,\mathit{M_E})$.

If $t_2 \leq t_3$, assume that $r$ performs $k (\geq 1)$ cycles before $t_3$ and the last $\mathit{LC}$-operation is performed at $t_k$.
Then $r$ repeats $k-1$ moves to the midpoints and its color is $B$ at time $t_k$.
Since $t_k \leq t_2$, $s$ observes the color $B$ of $r$ and its color is $C$ and it stays at $p(s,t_0)$ after $t_2$.
Since $r$ observes the color $C$ at $t^+_k(r,\mathit{LC})$ and moves to the position $p(s,t_0)$, the distance of $r$ and $s$ becomes $0$ 
at $t^+_k(r,\mathit{M_E})$. 

If the initial colors of $r$ and $s$ are $B$ or $C$, they can repeatedly swap their positions.
Thus Algorithm~\ref{algo:NonQSSRen3} is a non-quasi-self-stabilizing \ML-algorithm.
 \end{proof}

%Refer previous result for SSYNC, external, Non-Rigid Rendezvous \ML-algorithm, we show the number of lights is optimal\footnote{Although Giovanni stated it, there is no proof for it.} 
%ALgorithm1 and 2 are very interesting!!
In the following,
we derive lower bounds on the number of colors of external-lights.  
In order to do so, we introduce some notation for \ML-algorithms and show their properties.

In \ML-algorithms, the next color and destination (denoted by $\lambda$) is determined only by 
the current color observed by the robot. Thus an \ML-algorithm is represented by an edge-labeled directed graph $G_{\mathcal L} = (V_{\mathcal L},E_{\mathcal L},\ell_{\mathcal L})$, where $V_{\mathcal L}$ is a set of colors used in the algorithm, $E_{\mathcal L}$ is a set of transitions from current colors observed by the robots to the next colors computed by the robots, and 
$\ell_{\mathcal L}$ is an edge-labeled function from $E_{\mathcal L}$ to \R.
Edge $e=(c_1,c_2) \in E_{\mathcal L}$ and $\ell_{\mathcal L}(e)=\lambda$ mean that
when a robot observes color $c_1$ of the other robot, it changes its color to $c_2$ and moves to the point decided by the value $\lambda$. Also the out-degree of each node must be one, since we consider deterministic \ML-algorithms. Thus, when the number of nodes in $G_{\mathcal L}$ is $k$, $G _{\mathcal L}$ has $k$ edges.
For example, Algorithms~\ref{algo:Ren3} and \ref{algo:NonQSSRen3} are represented by the following directed graphs  $G_{\mathcal L1}$ and $G_{\mathcal L2}$, respectively.
 
$G_{\mathcal L1} = (V_{\mathcal L1},E_{\mathcal L1},\ell_{\mathcal L1})$,
where $V_{\mathcal L1}=\{ A,B,C\}$, $E_{\mathcal L1}=\{(A,B), (B,C), (C,A) \}$ and
$\ell_{\mathcal L1}((A,B))=1/2$, $\ell_{\mathcal L1}((B,C))=0$ and $\ell_{\mathcal L1}((C,A))=1$ (Fig.~\ref{Graphforalg1and2}(a)).

$G_{\mathcal L2} = (V_{\mathcal L2},E_{\mathcal L2},\ell_{\mathcal L2})$,
where $V_{\mathcal L2}=\{ A,B,C\}$, $E_{\mathcal L2}=\{(A,B), (B,C), (C,B) \}$ and
$\ell_{\mathcal L2}((A,B))=1/2$, $\ell_{\mathcal L1}((B,C))=0$ and $\ell_{\mathcal L2}((C,B))=1$ (Fig.~\ref{Graphforalg1and2}(b)).

 \begin{figure*}%{R}{0.5\textwidth} 
%\vspace{3cm}
 \begin{center}
    \includegraphics[scale=0.5]{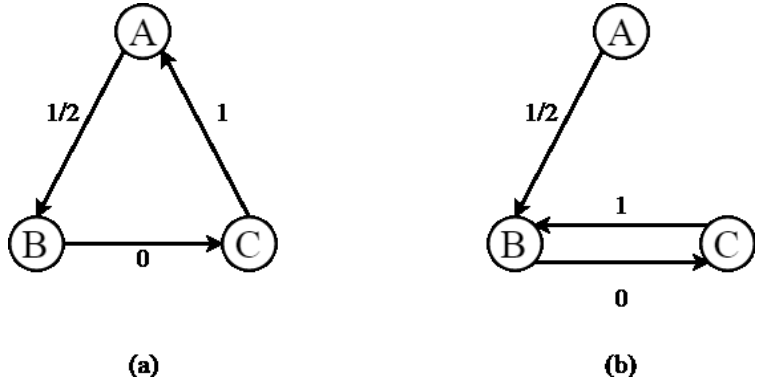}
%    \includegraphics[scale=0.8]{Lemma7(b)_v2.eps}
%    %\includegraphics[scale=0.5]{Fig1.eps}
    \caption{Graph  representations for Algorithms~\ref{algo:Ren3} (a) and  \ref{algo:NonQSSRen3} (b).}
    \label{Graphforalg1and2}
  \end{center}
%  \vspace{-20pt}
%  \vspace{1pt}
\end{figure*} 

In what follows, we identify an \ML-algorithm with  its graph representation and $e=(c_1,c_2) \in E_{\mathcal L}$ and $\ell_{\mathcal L}(e)=\lambda$ are denoted as $c_1 \EL c_2$.

%% XD: HERE

\begin{lemma}\label{PofL}
Let $A_{\mathcal L}$ be an \ML-algorithm solving Rendezvous in SSYNC and Rigid with external-light. If $A_{\mathcal L}$ starts from an initial settings such that both robots have the same color, then 
$A_{\mathcal L}$ has the following properties.
\begin{enumerate}
\item There is a color $X$ such that $A_{\mathcal L}$ must have an edge  $X \EH Y$.
\item There is a color $X$ such that $A_{\mathcal L}$ must have an edge  $X \EO Y$.
\item There is a color $X$ such that $A_{\mathcal L}$ must have an edge  $X \EZ Y$.

%\item $A_{\mathcal L}$ must not have any edge $X \EL X$ for any $\lambda$.
%\item $A_{\mathcal L}$ must not have both edges $X \EL Y$ and $Y \EL X$ ($X \neq Y$) for any $\lambda$.
\end{enumerate}

\end{lemma}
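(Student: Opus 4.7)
The plan is to represent $A_{\mathcal L}$ by its edge-labeled graph $G_{\mathcal L} = (V_{\mathcal L}, E_{\mathcal L}, \ell_{\mathcal L})$, in which every color has exactly one outgoing edge since the algorithm is deterministic and its next color and destination depend only on the observed color. Because $V_{\mathcal L}$ is finite, the orbit of the initial color $X_0$ under the successor relation is eventually periodic, and by ``reachable'' below I always mean reachable along this orbit. For each of the three label values $1/2, 1, 0$, I exhibit an adversarial SSYNC schedule, starting from both robots at $X_0$, whose success (rendezvous in finite time) forces the existence of a reachable edge with that label.

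For Part~1 ($X \EH Y$), I use the fully synchronous schedule. In every round, both robots observe identical lights and hence apply the same transition; for two robots at positions $p, q$ both moving with factor $\lambda_t$, the new distance is $|1-2\lambda_t|\cdot|p-q|$, which is strictly positive unless $\lambda_t = 1/2$. Along this synchronous execution the two robots stay in lockstep with respect to color, so if no edge in the orbit of $X_0$ has label $1/2$, the distance never vanishes, contradicting termination. Hence some reachable color has outgoing label $1/2$.

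For Part~2 ($X \EO Y$), I let the adversary activate only $r$ for many consecutive rounds. Since $s$ remains idle, $\ell(s)$ stays $X_0$ throughout, so $r$ always observes $X_0$ and, after its first activation, repeatedly applies the same factor $\lambda(X_0)$ directed toward $s$. If $\lambda(X_0) = 1$, rendezvous occurs in one activation; otherwise $r$'s positions form a geometric sequence approaching $s$ but never reaching it in finitely many steps. To produce a genuine fair-schedule argument, I plan to combine such one-sided phases with the fairness-required activations of $s$, track reachable triples (color of $r$, color of $s$, distance), and show by case analysis along the orbit of $X_0$ that, absent any reachable $\lambda = 1$ edge, the adversary maintains a strictly positive distance invariant forever. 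Part~3 ($X \EZ Y$) is handled symmetrically: if every edge has $\lambda \neq 0$, every activation strictly perturbs the activated robot, so no robot can stand still while the other aligns; an analogous alternating single-robot schedule combined with the same style of invariant argument shows that rendezvous cannot complete, forcing the existence of a reachable $\lambda = 0$ edge.

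The main obstacle lies in Parts~2 and~3. Part~1 is clean because synchronous activation automatically preserves the ``same color, symmetric positions'' invariant. In Parts~2 and~3, by contrast, the adversary must respect fairness, so the one-sided phases cannot be extended indefinitely; the delicate step is to interleave the mandatory activations of the other robot and verify that the positive-distance invariant survives across every reachable color pair produced by these mixed activations. This case analysis is tied to the orbit structure of $G_{\mathcal L}$ reachable from $X_0$ and to how the labels $\lambda$ encountered along that orbit interact with the positional dynamics, and it is the technical heart of the proof.
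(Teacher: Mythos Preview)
Your Part~1 is correct and matches the paper exactly: under FSYNC the two robots stay in lockstep on colors, the distance is scaled by $|1-2\lambda_t|$ each round, and this forces some reachable $\lambda=1/2$.

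For Parts~2 and~3 you are making the problem much harder than it is, and you do not actually close the argument. The paper's proof of Part~2 is one line: activate $r$ and $s$ \emph{strictly alternately}, one per round. This schedule is automatically fair. In a round where a single robot moves with parameter $\lambda$ toward a stationary partner, the distance is multiplied by $|1-\lambda|$, which is nonzero whenever $\lambda\neq 1$. So if no edge carries label~$1$, the distance remains positive forever, regardless of how the colors evolve. There is no need for one-sided phases, no fairness bookkeeping, and no orbit case analysis; your detour through ``activate only $r$'' only probes the single edge out of $X_0$ and then forces you into the interleaving difficulties you describe.

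For Part~3 the paper reuses the alternating schedule with one twist that your sketch misses: whenever the robot whose turn it is would traverse a $\lambda=1$ edge (such edges exist by Part~2), activate \emph{both} robots in that round instead. In a simultaneous round with parameters $\lambda_r,\lambda_s$ the distance scales by $|1-\lambda_r-\lambda_s|$; here one parameter is $1$ and, under the hypothesis that no edge has label~$0$, the other is nonzero, so the factor is strictly positive. In every solo round the active robot uses some $\lambda\neq 1$, giving factor $|1-\lambda|>0$. The schedule is still fair, and rendezvous never occurs. Your intuition (``no robot can stand still while the other aligns'') is correct, but without this switch-to-simultaneous trick you are left anticipating a delicate invariant-maintenance case analysis that is in fact unnecessary.
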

\begin{proof}~
	\begin{enumerate}
	\item Assume that $r$ and $s$ become active in each round (FSYNC). If there exists no edge $X \EH Y$, then they never attain Rendezvous.

	\item Assume that $r$ and $s$ become active alternately. If there exists no edge $X \EO Y$, then neither robot can ever reach the other.

	\item Assume that $r$ and $s$ become active alternately. Once a robot will follow the edge $X \EO Y$ in some round, then let both robots be activated in that round. Alternatively, if there exists no edge $X \EZ Y$, then they fail to Rendezvous in the round.
		\QED
	\end{enumerate}
\end{proof}
%\begin{proof}
%
%1. Assume that $r$ and $s$ become active in each round (FSYNC). If there does not exist the edge $X \EH Y$, they never attain Rendezvous.
%
%2. Assume that $r$ and $s$ become active alternately. If there does not exist the edge $X \EO Y$, each robot cannot reach the other robot forever.
%
%3. Assume that $r$ and $s$ become active alternately. When a robot performs the edge $X \EO Y$ in some round, we make the two robots active in this round. If there does not exist the edge $X \EZ Y$, they fail to Rendezvous in this round.
%
%\end{proof}

Lemma~\ref{PofL} implies that any \ML-algorithm must contain three different edges beginning with different colors.

\begin{theorem}
In any Rendezvous \ML-algorithm with external-light, robots must have three colors in SSYNC and Rigid.
\end{theorem}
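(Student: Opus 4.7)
The plan is to derive the bound directly from Lemma~\ref{PofL}, which already guarantees the existence of three edges with labels $\frac{1}{2}$, $1$, and $0$ in the graph representation $G_{\mathcal L}$ of any \ML-algorithm solving Rendezvous in SSYNC and Rigid. All that remains is to show that these three edges cannot emanate from fewer than three distinct source colors, and hence $|V_{\mathcal L}| \geq 3$.

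First I would recall the structural remark already made in the paper: in the graph representation of a deterministic \ML-algorithm, every node has out-degree exactly $1$. This is immediate from the definition of \ML-algorithm, since the observed color of the other robot uniquely determines both the next color and the value of $\lambda$. In particular, any two edges that emanate from the same source color must coincide, and therefore must share the same $\lambda$-label.

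Next I would combine this observation with Lemma~\ref{PofL}. The lemma supplies three edges $X_1 \EH Y_1$, $X_2 \EO Y_2$, and $X_3 \EZ Y_3$. If any two of $X_1, X_2, X_3$ were equal, then by the out-degree-one property the corresponding two edges would coincide and carry a single $\lambda$-label, contradicting the fact that $\frac{1}{2}$, $1$, and $0$ are pairwise distinct. Therefore $X_1, X_2, X_3$ are three pairwise distinct colors, proving $|V_{\mathcal L}| \geq 3$.

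I do not expect a real obstacle here; the argument is essentially a one-line consequence of Lemma~\ref{PofL} together with determinism. The only subtlety worth stating carefully is that ``different $\lambda$-labels force different source colors,'' which is why I would make the out-degree-one property explicit before invoking the lemma rather than treating it as folklore.
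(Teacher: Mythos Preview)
Your proposal is correct and follows essentially the same approach as the paper: the paper's entire argument is the single sentence preceding the theorem, namely that Lemma~\ref{PofL} forces three edges with distinct labels and hence with distinct source colors. You have simply spelled out the determinism/out-degree-one step that the paper leaves implicit.
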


This theorem implies that Algorithm~\ref{algo:Ren3} has the optimal number of colors of external-lights in SSYNC. Note that it is self-stabilizing and works in Non-Rigid.
On the other hand,
if we assume Rigid movement, we can show the \ML-algorithm with three colors to solve Rendezvous in $\mathit{LC}$-atomic ASYNC, which is however not quasi-self-stabilizing.
In the next section, 
we will show a quasi-self-stabilizing \ML-algorithm with four colors and  a self-stabilizing \ML-algorithm with five colors to solve Rendezvous in $\mathit{LC}$-atomic ASYNC and Non-Rigid. We will also show that the number of colors used in each algorithm is optimal.
%%%checked here

\section{Optimal Rendezvous \ML-Algorithms for $\mathit{LC}$-atomic ASYNC Robots with External Lights}\label{ASYNCRendezvousAlgorithms} 

\subsection{Lower Bounds}

In this subsection we first show that there exists no (not even quasi-self-stabilizing) Rendezvous \ML-algorithm
 with external light of $3$ colors in $\mathit{LC}$-atomic and $Move$-atomic ASYNC in Non-Rigid. %and SS...

If there exists such an \ML-algorithm, the algorithm must be an edge-labeled directed graph
$G_{\mathcal L} = (V_{\mathcal L},E_{\mathcal L},\ell_{\mathcal L})$ such that
$V_{\mathcal L}=\{A,B,C\}$(three colors) and  $\ell_{\mathcal L}(E_{\mathcal L})=\{0,1/2,1\}$ (by Lemma\ref{PofL}) and one of the following edge sets:

\begin{enumerate}
\item $E_{\mathcal L}$ contains a self-loop edge, say $(A,A)$,
\item $E_{\mathcal L}$ contains  both directed edges, say $(B,C)$ and $(C,B)$, or
\item $E_{\mathcal L} = \{(A,B), (B,C), (C,A)\}$.
\end{enumerate} 

For Case~1. If the algorithm does not contain both directed edges, it can be verified that no algorithm can solve Rendezvous in SSYNC and Rigid. That is, if the algorithm starts with a color consisting of a self-loop edge, then it cannot solve Rendezvous since it cannot use more than one color. If the algorithm starts with a color not consisting of a self-loop edge, the color of both robots can be changed into the color with the self-loop edge without attaining Rendezvous. Thus, the algorithm also fails to Rendezvous
in this case.

For Case~2. If algorithms do not contain self-loop edges, their graphs are the same as 
that of Algorithm~\ref{algo:NonQSSRen3}.
But it can be verified that Algorithm~\ref{algo:NonQSSRen3} fails to solve Rendezvous
in SSYNC, Rigid and starting from color $B$ or $C$, or SSYNC, Non-Rigid and starting from any color. It is easily verified that other algorithms with different edge-labeled functions 
fail to solve Rendezvous in SSYNC and Rigid starting from any color.
If algorithms contain self-loop edges (both directed edges and a self-loop edge),
since they can use only less than three colors even if starting from any color,
they never solve Rendezvous in SSYNC and Rigid.
 
In Case~3, there are essentially two algorithms.%labeLの付け替えを除いて

\begin{enumerate}
\item[(a)] $\ell_{\mathcal L}((A,B))=1/2$, $\ell_{\mathcal L}((B,C))=0$, and $\ell_{\mathcal L}((C,A))=1$ (denoted as Alg-(a)),
\item[(b)] $\ell_{\mathcal L}((A,B))=1/2$, $\ell_{\mathcal L}((B,C))=1$, and $\ell_{\mathcal L}((C,A))=0$ (denoted as Alg-(b)).
\end{enumerate}

Note that Alg-(a) is Algorithm~\ref{algo:Ren3}.

%schedulers notations for Impossibility
We introduce special schedules to analyze \ML-algorithms
to solve Rendezvous in $\mathit{LC}$-atomic ASYNC, with which we show that these algorithms do not work well.

Let $([\alpha_1,\beta_1], [\alpha_2,\beta_2], \ldots)$ be a sequence of operations that robots~$r$ and~$s$ perform, where $r$ and $s$ perform $\alpha_i$ and $\beta_i$ at time $t_i$ ($1 \leq i$), respectively,
and $\alpha_i$ and $\beta_i$ are taken from $\mathit{LC}$-operation (denoted as $\mathit{LC}$),
$Move$-operations, $Move_B$, $M_E$ or $M$ (if $\mathit{Move}$-atomic) (denoted as $M$), and a ``\emph{no-op}'' operation (denoted as $-$). For example, $([\mathit{LC},-],[-,\mathit{LC}],[\mathit{M},-],[-,\mathit{M}])$ denotes that $r$ performs $\mathit{LC}$ and $\mathit{M}$ at times $t_1$ and $t_3$ and $s$ performs $\mathit{LC}$ and $\mathit{M}$ at times $t_2$ and $t_4$, which is in $\mathit{LC}$-atomic $\mathit{Move}$-atomic ASYNC. Similarly, $(\mathit{[LC,LC],[M,M]})$ denotes that $r$ and $s$ perform $\mathit{LC}$ at time $t_1$ and perform $\mathit{M}$ at time $t_2$, which is in FSYNC.
The former is called alternate schedule and denoted as $alt$ and the latter is called simultaneous schedule and denoted as $sim$.

 \begin{figure*}%{R}{0.5\textwidth} 
%\vspace{3cm}
 \begin{center}
    \includegraphics[scale=0.5]{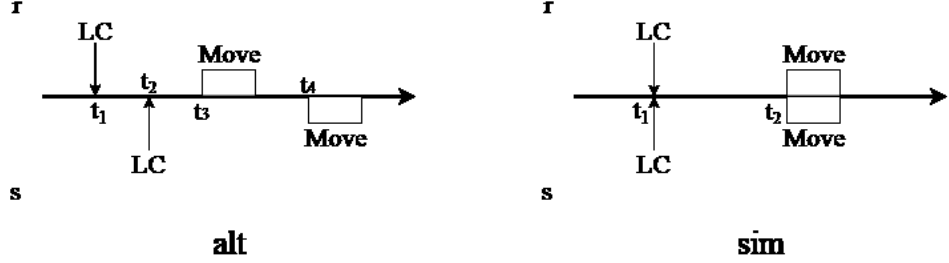}
%    \includegraphics[scale=0.8]{Lemma7(b)_v2.eps}
%    %\includegraphics[scale=0.5]{Fig1.eps}
    \caption{Special schedules $alt$ and $sim$.}
    \label{AltandSim}
  \end{center}
%  \vspace{-20pt}
%  \vspace{1pt}
\end{figure*} 

Assume that $r$ and$s$ have colors $c_r$ and $c_s$ at some time $t$ and let $d_t=dis(p(r,t),p(s,t))$. Let $(c_r,c_s;d_t)$ denote a configuration of a pair of colors of robots and its distance at $t$. When a configuration $(c_r,c_s;d_t)$ is changed into another one $(c'_r,c'_s;d_{t'})$ by performing an algorithm $alg$ with a schedule $sch$, we denote
$(c_r,c_s;d_t) \overset{sch}{\rightarrow} (c'_r,c'_s;d_{t'})_{alg}$, where $t'$ is the time after which the robots have performed $alg$ with the schedule $sch$.
The suffix $alg$ is usually omitted when the algorithm is apparent from the context.

We show that Alg-(a) and Alg-(b) cannot work from any initial configuration of the same color.

\begin{lemma}
Alg-(a) cannot solve Rendezvous in $\mathit{LC}$-atomic and $\mathit{Move}$-atomic ASYNC and Rigid.
\end{lemma}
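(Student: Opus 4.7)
The plan is to exhibit an explicit infinite adversarial schedule in $\mathit{LC}$-atomic and $\mathit{Move}$-atomic ASYNC, starting from the initial configuration in which both robots have color $A$, that keeps the inter-robot distance strictly positive at every time step and so prevents Rendezvous.

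Place $r$ at position $0$ and $s$ at position $d>0$ on a line, both with color $A$. I first apply one opening phase using the alternating schedule $([\mathit{LC},-],[-,\mathit{LC}],[\mathit{M},-],[-,\mathit{M}])$: via the edge $A\EH B$ of Alg-(a), $r$ observes $A$, becomes $B$ and plans destination at the midpoint $d/2$; via $B\EZ C$, $s$ then observes the already-updated color $B$ of $r$, becomes $C$ and plans no motion; $r$ moves to $d/2$; $s$ stays. The configuration is now $(B,C;\,d/2)$ with $r<s$ on the line.

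Next I iterate a four-phase cycle that drives the color pair through $(B,C)\to(A,B)\to(C,B)\to(B,A)\to(B,C)$, halving the distance at every phase. The adversarial device, used in the two phases that involve a move to the other's position, is uniform: the robot whose pending move would take it to the other's current position performs $\mathit{LC}$ first, then the other robot performs $\mathit{LC}$ and $\mathit{M}$, and only afterward does the first robot perform $\mathit{M}$. Because the $\mathit{LC}$-atomic $\mathit{Move}$-atomic model allows separating $\mathit{LC}$ from $\mathit{M}$, the first robot's precomputed destination becomes stale and it lands on the point the other has just vacated. Concretely the four phases are realized by the interleavings $(\mathit{LC}_r,\mathit{LC}_s,\mathit{M}_s,\mathit{M}_r)$, $(\mathit{LC}_s,\mathit{LC}_r,\mathit{M}_s,\mathit{M}_r)$, $(\mathit{LC}_s,\mathit{LC}_r,\mathit{M}_r,\mathit{M}_s)$, and $(\mathit{LC}_r,\mathit{LC}_s,\mathit{M}_r,\mathit{M}_s)$; applying the edges $A\EH B$, $B\EZ C$, $C\EO A$ of Alg-(a) together with the corresponding midpoint or full-jump destinations, a direct computation confirms that each phase exactly halves the distance between $r$ and $s$ and that at no intermediate instant do the two robots coincide.

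Therefore, after $1+4k$ phases the configuration is again $(B,C)$ with $r<s$ at distance $d/2^{4k+1}>0$ for every $k\in\N$, so the two robots never meet in finite time and Alg-(a) fails to solve Rendezvous from the initial configuration $(A,A)$. The schedule is trivially fair. The principal obstacle, and where the bulk of the effort goes, is verifying the four phases in detail: that the chosen $\mathit{LC}/\mathit{M}$ orders are legal in the $\mathit{LC}$-atomic $\mathit{Move}$-atomic model, that every $\mathit{LC}$ observes the intended pre- or post-update color and position of the other robot, and that the positions after each $\mathit{M}$ realize exactly the claimed halving of the distance without a coincidental collision at any step.
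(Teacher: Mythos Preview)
Your proposal is correct and follows the same overall strategy as the paper: exhibit an infinite fair schedule in $\mathit{LC}$-atomic, $\mathit{Move}$-atomic ASYNC under which the colour pair cycles while the inter-robot distance decreases geometrically but never vanishes. The paper's version is terser: from $(B,C;d)$ it applies the $alt$ schedule to reach $(A,B;d/2)$ and then lets $r$ perform one solo cycle to reach $(C,B)$, after which the pattern repeats with the roles of $r$ and $s$ exchanged. Your four-phase loop $(B,C)\to(A,B)\to(C,B)\to(B,A)\to(B,C)$ reaches the same conclusion with more explicit bookkeeping and with the extra nicety that the two robots never coincide even transiently; in the paper's $alt$ step the robot following the edge $C\EO A$ briefly lands on the other's current position before the latter's pending midpoint move separates them, which is harmless for the impossibility argument (Rendezvous requires the robots to remain together, and here one of them still has a pending move) but which your delayed-$M$ ordering avoids altogether.
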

\begin{proof}
Let $t_0$ be the starting time of Alg-(a) and let $d=dis(p(r,t_0),p(s,t_0))$.
Since $(B,B;d) \overset{sim}{\rightarrow} (C,C;d)$,
$(C,C;d) \overset{sim}{\rightarrow} (A,A;d)$, and 
$(A,A;d) \overset{alt}{\rightarrow} (B,C;d/2)$,
it is sufficient to show that Alg-(a) cannot solve Rendezvous from the initial 
configuration $(B,C;d)$. 
Since $(B,C;d) \overset{alt}{\rightarrow} (A,B;d/2)$, when $r$ performs one 
cycle from $(A,B;d/2)$, the configuration becomes $(C,B;d/4)$.
Thus, since this configuration repeats, Alg-(a) never achieve Rendezvous from the initial configuration $(B,C;d)$.

\end{proof}

The next lemma can be proved similarly.

\begin{lemma}
Alg-(b) cannot solve Rendezvous in $\mathit{LC}$-atomic and $\mathit{Move}$-atomic ASYNC and Rigid.
\end{lemma}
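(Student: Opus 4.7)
The plan is to reproduce the template of the preceding lemma for Alg-(a): starting from a same-color initial configuration, exhibit a fair schedule along which the inter-robot distance is halved at every step but never becomes zero, so that Rendezvous cannot occur. We may start from $(A, A; d)$ because $(B, B; d) \overset{sim}{\to} (C, C; d) \overset{sim}{\to} (A, A; d)$ in Alg-(b): the $sim$ step from $(B, B; d)$ just swaps the two robots via the edge $B \EO C$, and the $sim$ step from $(C, C; d)$ is a pure color change via $C \EZ A$.

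First I would verify the seed transition $(A, A; d) \overset{alt}{\to} (B, C; d/2)$. Under $alt$, $r$'s $LC$ reads the color $A$ of $s$ and, by $A \EH B$, prepares to change to $B$ and move to the midpoint; then $s$'s $LC$ reads the already-updated color $B$ of $r$ (but still sees $r$ at $p(r, t_0)$ since $r$'s $M$ has not yet happened) and, by $B \EO C$, prepares to change to $C$ and jump to $p(r, t_0)$; after the two $M$-operations, $r$ is at the midpoint and $s$ at $p(r, t_0)$, so the new distance is $d/2$. I would then establish the following four transitions, where $alt'$ denotes the variant of $alt$ in which $s$ performs its $LC$ before $r$ does:
\begin{align*}
(B, C; d') &\overset{alt}{\longrightarrow} (A, B; d'/2),\\
(A, B; d') &\overset{sim}{\longrightarrow} (C, B; d'/2),\\
(C, B; d') &\overset{alt'}{\longrightarrow} (B, A; d'/2),\\
(B, A; d') &\overset{alt}{\longrightarrow} (B, C; d'/2).
\end{align*}
The third and fourth follow from the first and second by the symmetry that swaps the labels $r$ and $s$. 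Concatenating them into a single deterministic schedule produces an orbit $(B, C) \to (A, B) \to (C, B) \to (B, A) \to (B, C)$ in which the distance is multiplied by $1/16$ per loop but always remains strictly positive, so Alg-(b) never achieves Rendezvous.

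I expect the main obstacle to be the $sim$ transition $(A, B; d') \to (C, B; d'/2)$, since this is the only step where the \emph{dangerous} edge $B \EO C$ is used and still must \emph{not} cause the two robots to meet. Here both $LC$-operations consult the same pre-move snapshot: $r$ reads $B$ and computes the destination $p(s, t_0)$ via $B \EO C$, while $s$ reads $A$ and computes the midpoint via $A \EH B$; the two simultaneous $M$-operations then leave $r$ at $p(s, t_0)$ and $s$ at the midpoint, at distance $d'/2$ rather than $0$. The $alt$ and $alt'$ transitions are easier, because the $\mathit{LC}$-atomic rule guarantees that the second robot always observes the updated color of the first but its pre-move position, and the computed destinations then follow directly from the three edges of $G_{\mathcal L}$ for Alg-(b). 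Once these short case analyses are carried out, the conclusion parallels that of the Alg-(a) lemma: the four-configuration orbit recurs forever with strictly positive distance, so Alg-(b) cannot solve Rendezvous in $\mathit{LC}$-atomic and $\mathit{Move}$-atomic ASYNC with Rigid movement.
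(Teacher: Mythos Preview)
Your proof is correct and mirrors the paper's approach, which simply states that the lemma is ``proved similarly'' to the Alg-(a) case: exhibit a fair schedule, reachable from every same-color start, along which a color pattern recurs with strictly positive (halving) distance. The only slip is cosmetic---your fourth transition, written with $alt$, is not literally the $r\leftrightarrow s$ swap of the second (that would give $sim$), but both $alt$ and $sim$ take $(B,A;d')$ to $(B,C;d'/2)$, so the argument stands.
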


\begin{theorem}\label{Imp3}
There exists no (not even quasi-self-stabilizing) \ML-algorithm
of Rendezvous with external light of $3$ colors in $\mathit{LC}$-atomic and $\mathit{Move}$-atomic ASYNC and Non-Rigid. 
Furthermore, there exists no quasi-self-stabilizing \ML-algorithm
of Rendezvous with external light of $3$ colors in $\mathit{LC}$-atomic and $\mathit{Move}$-atomic ASYNC and Rigid.
%There does not exist any SS \ML-algorithm of Rendezvous with $3$ colors of external lights in $LC$-atomic and $Move$-atomic ASYNC and Rigid.
\end{theorem}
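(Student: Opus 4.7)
The plan is to assemble the preceding case analysis into a single synthetic argument. By Lemma~\ref{PofL}, any candidate $3$-color \ML-algorithm must contain edges with the three distinct labels $0$, $1/2$, and $1$ originating from three different source colors. Since the graph has exactly three nodes, each of out-degree one, its topology must fall into one of the three enumerated cases just before the theorem: (1)~contains a self-loop; (2)~contains a mutually opposite pair $(X,Y),(Y,X)$ but no self-loop; or (3)~is a directed $3$-cycle. I would then quote the preceding discussion to dismiss cases (1) and (2) immediately: case (1) fails in SSYNC and Rigid from every common initial color (the algorithm is trapped once it reaches the self-loop color, and an adversary can always drive both robots onto that color without achieving Rendezvous); case (2) is essentially the graph of Algorithm~\ref{algo:NonQSSRen3}, which fails in SSYNC Rigid starting from $B$ or $C$ and in SSYNC Non-Rigid from any color. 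For case (3), the two lemmas immediately preceding the theorem establish that both Alg-(a) and Alg-(b) fail in $\mathit{LC}$-atomic and $\mathit{Move}$-atomic ASYNC under Rigid movement from every common-color initial configuration.

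To close the proof I would invoke two standard observations. First, any adversary available in Rigid is also available in Non-Rigid, since a Non-Rigid adversary can simply decline to stop the robots prematurely and thereby simulate Rigid; hence every Rigid impossibility lifts verbatim to Non-Rigid. Combining this with the case analysis, for every one of the three topologies the algorithm fails in $\mathit{LC}$-atomic and $\mathit{Move}$-atomic ASYNC with Non-Rigid movement from every common initial color, which proves the first statement (no $3$-color \ML-algorithm works, even if we do not require quasi-self-stabilization). Second, for the Rigid part of the theorem, the only topology that admits any Rigid solver at all is case (2), and only when the initial color is fixed to $A$, which by definition is not quasi-self-stabilizing; cases (1) and (3) fail outright. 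This yields the second statement.

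The main (and only) potential subtlety is checking that the Rigid adversaries constructed in the two preceding lemmas really do transfer to Non-Rigid; since those adversaries never exploit the ``stop after distance $\delta$'' option (they always let moves complete to the computed destination), the transfer is immediate. Beyond this, no additional case work is needed: the theorem is a direct corollary of Lemma~\ref{PofL} and the two preceding lemmas, and the proof amounts to organizing the enumeration and invoking the Rigid-to-Non-Rigid monotonicity.
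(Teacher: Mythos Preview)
Your proposal is correct and follows essentially the same route as the paper. The paper does not write out an explicit proof for Theorem~\ref{Imp3}; it is stated as an immediate corollary of the case analysis just preceding it (the three graph topologies, the discussion disposing of cases~(1) and~(2), and the two lemmas on Alg-(a) and Alg-(b)), and your write-up is a faithful reconstruction of that implicit argument, including the standard Rigid-to-Non-Rigid monotonicity and the observation that the only surviving Rigid solver (Algorithm~\ref{algo:NonQSSRen3}) requires the fixed initial color~$A$ and hence is not quasi-self-stabilizing.
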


In an argument similar to the one above, we show that there exists no self-stabilizing  \ML-algorithm
of Rendezvous with external-light of  $4$ colors in $\mathit{LC}$-atomic and $\mathit{Move}$-atomic ASYNC and Rigid. 

%The number of scc's \geq 2 trivial!
If there exists such an \ML-algorithm, the algorithm must be an edge-labeled directed graph
$G_{\mathcal L} = (V_{\mathcal L},E_{\mathcal L},\ell_{\mathcal L})$ shch that
$V_{\mathcal L}=\{A,B,C,D\}$(four colors) and  $\ell_{\mathcal L}(E_{\mathcal L}) \supseteq \{0,1/2,1\}$ (by Lemma\ref{PofL}).
If the number of strongly connected components for $G_{\mathcal L}$ is at least two,  then 
there exists an initial configuration of both robots with a same color, from which an algorithm cannot use four colors, 
it cannot solve Rendezvous by Theorem~\ref{Imp3}. Then the remaining case is that these graphs have one strongly connected component  (one cycle) and have one of the following edge sets:

\begin{enumerate}
\item[(1)] $\ell_{\mathcal L}((A,B))=1/2$, $\ell_{\mathcal L}((B,C))=0$, $\ell_{\mathcal L}((C,D))=1,$ and $\ell_{\mathcal L}((D,A))=\lambda$(denoted as Alg-(1)),
\item[(2)] $\ell_{\mathcal L}((A,B))=1/2$, $\ell_{\mathcal L}((B,C))=1$, $\ell_{\mathcal L}((C,D))=0,$ and $\ell_{\mathcal L}((D,A))=\lambda$(denoted as Alg-(2)),
\item[(3)] $\ell_{\mathcal L}((A,B))=1/2$, $\ell_{\mathcal L}((B,C))=0$, $\ell_{\mathcal L}((C,D))=\lambda,$ and $\ell_{\mathcal L}((D,A))=1(\lambda \neq 1)$(denoted as Alg-(3)),
\item[(4)] $\ell_{\mathcal L}((A,B))=1/2$, $\ell_{\mathcal L}((B,C))=1$, $\ell_{\mathcal L}((C,D))=\lambda,$ and $\ell_{\mathcal L}((D,A))=0(\lambda \neq 0)$(denoted as Alg-(4)),
\item[(5)] $\ell_{\mathcal L}((A,B))=1/2$, $\ell_{\mathcal L}((B,C))=\lambda$, $\ell_{\mathcal L}((C,D))=0,$ and $\ell_{\mathcal L}((D,A))=1(\lambda \neq 1)$(denoted as Alg-(5)),
\item[(6)] $\ell_{\mathcal L}((A,B))=1/2$, $\ell_{\mathcal L}((B,C))=\lambda$, $\ell_{\mathcal L}((C,D))=1,$ and $\ell_{\mathcal L}((D,A))=0(\lambda \neq 0)$(denoted as Alg-(6)).
\end{enumerate}

\begin{lemma}\label{ImpAlg4}
Alg-(1)-Alg-(6) cannot solve Rendezvous in $\mathit{LC}$-atomic and $\mathit{Move}$-atomic ASYNC and Rigid from some initial configuration.
\end{lemma}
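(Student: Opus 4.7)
I would proceed by a case analysis on the six algorithms Alg-(1)--Alg-(6), mirroring the method used in the preceding proofs for Alg-(a) and Alg-(b). For each algorithm it suffices to exhibit a single initial configuration $(c_r,c_s;d)$ from which an infinite fair execution exists in $\mathit{LC}$-atomic and $\mathit{Move}$-atomic ASYNC that never drives the inter-robot distance to $0$. Since the target statement is that no self-stabilizing \ML-algorithm exists with four colours, any initial colour pair is admissible, so we are free to pick the most inconvenient one for each candidate; the two canonical schedules $sim$ and $alt$ introduced above do the rest of the work.

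The general idea is to combine $sim$- and $alt$-steps to construct, for each algorithm, a periodic sequence of configurations whose distances form a geometric sequence with positive ratio, so the distance never reaches $0$ in finite time under Rigid movement. In the three-colour proof this was realized by $(B,C;d) \overset{alt}{\rightarrow} (A,B;d/2) \overset{alt}{\rightarrow} (C,B;d/4)$ and so on; the same skeleton applies to each four-colour template as long as the edge carrying the free label $\lambda$ is not yet triggered. For each algorithm I would therefore pick an initial colour pair that lies on the ``canonical'' part of the four-cycle (e.g.\ $(A,B;d)$ or $(B,C;d)$) so that the first few transitions use only labels in $\{0,1/2,1\}$ and reproduce the same kind of halving cycle already observed for Alg-(a).

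When the $\lambda$-edge is eventually traversed---which is unavoidable, since otherwise the effective colour set has size at most three and Theorem~\ref{Imp3} already applies---I would split on the value of $\lambda$. If $\lambda \in \{0,1/2,1\}$ then two edges of $G_{\mathcal L}$ carry the same label, collapsing the algorithm into one that is behaviourally equivalent to a three-colour \ML-algorithm on the configuration we reached, and Theorem~\ref{Imp3} again gives impossibility. If $\lambda \notin \{0,1/2,1\}$ a direct distance computation under $sim$ or $alt$ shows that the new distance is either preserved or scaled by a factor strictly in $(0,1)$; in both situations the configuration sequence is either eventually periodic with a constant positive distance or produces an infinite geometric sequence of positive distances, excluding rendezvous in finite time.

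To avoid six essentially equivalent computations, I would exploit the pairing symmetries Alg-(1)/Alg-(2), Alg-(3)/Alg-(4), Alg-(5)/Alg-(6): each pair is related by swapping $0$ and $1$ on two adjacent edges, which corresponds in any execution to interchanging the roles of $r$ and $s$ and reflecting the $alt$ schedule. Hence only one representative of each pair needs the full calculation. The main obstacle I anticipate is the $\lambda$ sub-case analysis for Alg-(3)--Alg-(6), where the $\lambda$-edge lies in the interior of the colour cycle and cannot be postponed for as many initial steps; there the starting pair must be chosen carefully so that the $\lambda$-labelled move itself participates in the periodic cycle rather than accidentally producing rendezvous, which essentially amounts to handling $\lambda \in \{0,1\}$ via the three-colour degeneracy of Theorem~\ref{Imp3} and handling generic $\lambda$ by inserting one extra $sim$ step to neutralize the unwanted motion.
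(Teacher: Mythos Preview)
Your overall plan---exhibit, for each of the six templates, an initial colour pair and a periodic $alt$/$sim$ schedule that keeps the distance positive---is the right one and matches the paper's approach. The flaw is in how you propose to dispatch the sub-cases $\lambda\in\{0,1/2,1\}$.

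The claim that ``two edges with the same label collapse the algorithm into one behaviourally equivalent to a three-colour \ML-algorithm, so Theorem~\ref{Imp3} applies'' is not valid. Repeated edge \emph{labels} do not reduce the number of \emph{colours}: the transition graph is still a $4$-cycle on $\{A,B,C,D\}$, and Theorem~\ref{Imp3} says nothing about such algorithms. Worse, this shortcut fails on exactly the decisive instance. Alg-(1) with $\lambda=0$ has edges $A\EH B$, $B\EZ C$, $C\EO D$, $D\EZ A$; this is precisely Algorithm~\ref{algo:Ren4}, which the paper later \emph{proves correct} as a quasi-self-stabilizing algorithm. In particular it cannot be made to fail from any same-colour start, nor from your suggested ``canonical'' pairs: from $(A,B;d)$ under $alt$ one gets $r$ observing $B$ and staying (colour $C$), then $s$ observing $C$ and moving to $r$ (colour $D$), so the distance becomes $0$; from $(B,C;d)$ under $alt$ the roles swap and again rendezvous occurs. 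Hence neither your starting configurations nor an appeal to Theorem~\ref{Imp3} handles this case.

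The paper resolves it by abandoning the ``adjacent'' colour pairs altogether and choosing the diagonal pair $(A,C;d)$ together with the \emph{sim} schedule: $(A,C;d)\overset{sim}{\rightarrow}(D,B;d/2)\overset{sim}{\rightarrow}(C,A;d/2)$, which loops. More generally, the paper's proof does not attempt a uniform reduction; it gives, for each Alg-$(i)$, an explicit initial pair and schedule, splitting on $\lambda$ where needed (e.g.\ $\lambda\neq 0$ versus $\lambda=0$ for Alg-(1), $\lambda=1$ versus $\lambda\neq 1$ for Alg-(2)). Your symmetry pairing and the treatment of generic $\lambda\notin\{0,1/2,1\}$ are reasonable heuristics, but you still need the concrete per-case configurations; the three-colour ``degeneracy'' argument must be replaced by direct constructions, of which Alg-(1)$(\lambda=0)$ is the one that forces a non-adjacent initial colour pair.
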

\begin{proof}
Let $t_0$ be the starting time of Alg-($i$)($1 \leq i \leq 6)$ and let$d=dis(p(r,t_0),p(s,t_0))$.
In each case, we show initial configurations and schedules which repeat forever and therefore cannot solve Rendezvous.
%If $\lambda=1$, the initial configuration is $\ell(r,t_0)=B$ and $\ell(s,t_0)=C$ and the schedule is that
%$(B,C;d)\overset{alt}{\rightarrow} (D,A;d)\overset{alt}{\rightarrow} (B,C;d/2)$.

(1) Algo-(1): If $\lambda\neq 0$, the initial configuration is $\ell(r,t_0)=D$ and $\ell(s,t_0)=A$ and the schedule is 
$(D,A;d)\overset{alt}{\rightarrow} (B,C;d/2)\overset{alt}{\rightarrow} (D,A;d\lambda/2)$.

If $\lambda =0$, the initial configuration is $\ell(r,t_0)=A$ and $\ell(s,t_0)=C$ and the schedule is 
$(A,C;d)\overset{sim}{\rightarrow} (D,B;d/2)\overset{sim}{\rightarrow}(C,A;d/2)$.

(2) Alg-(2): If $\lambda=1$, the initial configuration is $\ell(r,t_0)=A$ and $\ell(s,t_0)=B$ and the schedule is 
$(A,B;d)\overset{alt}{\rightarrow} (C,D;d/2)\overset{alt}{\rightarrow} (A,B;d/4)$.

If $\lambda \neq1$, the initial configuration is $\ell(r,t_0)=B$ and $\ell(s,t_0)=C$ and the schedule is 
$(B,C;d)\overset{alt}{\rightarrow} (D,A;d(1-\lambda))\overset{sim}{\rightarrow} (B,C;d(1-\lambda)/2)$.

(3) Alg-(3): Since $\lambda \neq1$, the initial configuration is $\ell(r,t_0)=A$ and $\ell(s,t_0)=B$ and the schedule is 
$(A,B;d)\overset{alt}{\rightarrow} (C,D;d(1-\lambda))\overset{alt}{\rightarrow} (A,B;d(1-\lambda^2)/2)$.

(4) Alg-(4): Since $\lambda \neq 0$, the initial configuration is $\ell(r,t_0)=A$ and $\ell(s,t_0)=B$ and the schedule is 
$(A,B;d)\overset{alt}{\rightarrow} (C,D;d(1-\lambda))\overset{alt}{\rightarrow} (A,B;d\lambda/2)$.

(5) Alg-(5): Since $\lambda \neq 1$, the initial configuration is $\ell(r,t_0)=A$ and $\ell(s,t_0)=B$ and the schedule is 
$(A,B;d)\overset{alt}{\rightarrow} (C,D;d\lambda)\overset{alt}{\rightarrow} (A,B;d(1-\lambda)/2)$.

(6) Alg-(6): Since $\lambda \neq 0$, the initial configuration is $\ell(r,t_0)=A$ and $\ell(s,t_0)=B$ and the schedule is 
$(A,B;d)\overset{alt}{\rightarrow} (C,D;d\lambda)\overset{alt}{\rightarrow} (A,B;d\lambda/2)$.

All algorithms except Alg-(1)($\lambda=0$) fail to achieve Rendezvous from any initial configuration in which both robots have the same color,
since all these initial configurations can be reached from any configuration with same colors.
Note that the initial configuration in the case Alg-(1)($\lambda=0$) cannot be reached from any configuration with same colors.
In fact,  we will show in the next subsection that Alg-(1)($\lambda=0$)  is a quasi-self-stabilizing \ML-algorithm with four colors in $\mathit{LC}$-atomic ASYNC and Non-Rigid.
\end{proof}

\begin{theorem}
There exists no self-stabilizing \ML-algorithm of Rendezvous with external-light of $4$ colors in $\mathit{LC}$-atomic and $\mathit{Move}$-atomic ASYNC and Rigid.
\end{theorem}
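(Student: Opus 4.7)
The plan is to combine the impossibility result for $3$ colors (Theorem~\ref{Imp3}) with the case analysis already carried out in Lemma~\ref{ImpAlg4}, mirroring the structure of the preceding three-color argument but now exploiting the full strength of the self-stabilizing hypothesis. Suppose, toward a contradiction, that a self-stabilizing \ML-algorithm $A$ with external-light of $4$ colors solves Rendezvous in $\mathit{LC}$-atomic and $\mathit{Move}$-atomic ASYNC and Rigid, and let $G_{\mathcal L}=(V_{\mathcal L}, E_{\mathcal L}, \ell_{\mathcal L})$ with $|V_{\mathcal L}|=4$ be its graph representation. Since $A$ is deterministic, every vertex has out-degree one, so $G_{\mathcal L}$ is a functional graph, and by Lemma~\ref{PofL} its edge labels satisfy $\{0, 1/2, 1\} \subseteq \ell_{\mathcal L}(E_{\mathcal L})$.

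I would then split on the structure of $G_{\mathcal L}$. If $G_{\mathcal L}$ has at least two strongly connected components, then every terminal SCC (which in an out-degree-one graph is simply a directed cycle that the colors cannot escape once entered) contains at most three of the four colors. Using the self-stabilizing assumption, the adversary initializes both robots with the same color picked from such a terminal SCC; from that moment on, the execution uses at most three colors and is indistinguishable from that of an \ML-algorithm on $\le 3$ colors started from a same-color configuration. By Theorem~\ref{Imp3} this cannot solve Rendezvous, a contradiction.

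If instead $G_{\mathcal L}$ forms a single directed $4$-cycle, then after renaming colors we may assume the $1/2$-edge is $A \EH B$. Enumerating where the labels $0$, $1$, and the one remaining free label $\lambda$ sit on the three remaining edges yields exactly the six families Alg-(1)-Alg-(6) listed before Lemma~\ref{ImpAlg4}. That lemma exhibits for every one of them an initial configuration whose execution is trapped in a self-similar loop (the inter-robot distance only rescales) and thus never attains Rendezvous. Because a self-stabilizing algorithm must succeed from every initial configuration, in particular from the mixed-color configurations produced by Lemma~\ref{ImpAlg4}, including the one for Alg-(1) with $\lambda=0$ that was not reachable from a same-color start, we obtain the required contradiction.

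The main obstacle I anticipate lies in the first case: one must argue carefully that a same-color start inside a terminal SCC of size $\le 3$ really does turn $A$ into a $\le 3$-color \ML-algorithm to which Theorem~\ref{Imp3} applies verbatim. The key points are that in an out-degree-one graph the terminal SCC is a cycle from which the colors can never escape, and that the induced edge-labeled subgraph inherits all model assumptions of $A$ (same scheduler, same movement restriction, same atomicity). Both observations are short but they are the only places where the proof does real work beyond citing Lemma~\ref{PofL}, Theorem~\ref{Imp3}, and Lemma~\ref{ImpAlg4}.
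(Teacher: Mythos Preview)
Your proposal is correct and follows essentially the same route as the paper: split on whether $G_{\mathcal L}$ has more than one strongly connected component (reduce to $\le 3$ colors via Theorem~\ref{Imp3}) or is a single $4$-cycle (enumerate the six label placements and invoke Lemma~\ref{ImpAlg4}). Your added care about terminal SCCs being inescapable in a functional graph, and your explicit remark that the self-stabilizing hypothesis is what kills Alg-(1) with $\lambda=0$ (whose bad configuration $\{A,C\}$ is unreachable from same-color starts), are exactly the points the paper handles more tersely.
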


\subsection{Optimal \ML-algorithms}

In this subsection, we show two optimal \ML-algorithms of Rendezvous, one is quasi-self-stabilizing (Algorithm~\ref{algo:Ren4}) with $4$ colors and the other is self-stabilizing (Algorithm~\ref{algo:Ren5}) with $5$ colors.

Algorithm~\ref{algo:Ren4} (QSS-Rendezvous-with-4-colors ($\mathit{LC}$-atomic ASYNC, Non-Rigid, initial-light) satisfies the following lemmas.
Let $t_c$ be a cs-time of Algorithm~\ref{algo:Ren4}.
%cs-time: $t_c$
%Rendezvous algorithm
\Newcodeline
\begin{algorithm}[h]
\caption{QSS-Rendezvous-with-4-colors ($\mathit{LC}$-atomic ASYNC, Non-Rigid, initial-color)}
\label{algo:Ren4}
{\footnotesize
%{\small 
\begin{tabbing}
111 \= 11 \= 11 \= 11 \= 11 \= 11 \= 11 \= \kill
{\em Parameters}: scheduler, movement-restriction, initial-color \crm
{\em Assumptions}: external-light, four colors ($A$, $B$, $C$ and $D$) \crm

\Cl \> {\bf case} other.light   {\bf of } \crm

\Cl \> $A$: \crm
\Cl \> \> $me.light \leftarrow B$ \crm
\Cl \> \> $me.des \leftarrow$ the midpoint of $me.position$ and $other.position$\crm
\Cl \> $B$: \crm
\Cl \>\>$me.light \leftarrow C$\crm 
\Cl \> $C$: \crm
\Cl \> \>$me.light \leftarrow D$ \crm
\Cl \> \>$me.des \leftarrow other.position$\crm 
\Cl \> $D$: \crm
\Cl \> \>$me.light \leftarrow A$ \crm 

\Cl \> {\bf endcase} 
\end{tabbing}
%}
}
\end{algorithm}

%Rendezvous algorithm
\Newcodeline
\begin{algorithm}[h]
\caption{SS-Rendezvous-with-5-colors ($\mathit{LC}$-atomic ASYNC, Non-Rigid, initial-color)}
\label{algo:Ren5}
{\footnotesize
%{\small 
\begin{tabbing}
111 \= 11 \= 11 \= 11 \= 11 \= 11 \= 11 \= \kill
{\em Parameters}: scheduler, movement-restriction, Initial-color) \crm
{\em Assumptions}: external-light, five colors ($A$, $B$, $C$, $D$ and $E$) \crm

\Cl \> {\bf case} other.light   {\bf of } \crm

\Cl \> $A$: \crm
\Cl \> \> $me.light \leftarrow B$ \crm
\Cl \> \> $me.des \leftarrow$ the midpoint of $me.position$ and $other.position$\crm
\Cl \> $B$: \crm
\Cl \>\>$me.light \leftarrow C$\crm 
\Cl \> $C$: \crm
\Cl \> \>$me.light \leftarrow D$ \crm
\Cl \> \>$me.des \leftarrow other.position$\crm 
\Cl \> $D$: \crm
\Cl \> \>$me.light \leftarrow E$ \crm 
\Cl \> $E$: \crm
\Cl \> \>$me.light \leftarrow A$ \crm 
\Cl \> {\bf endcase} 
\end{tabbing}
%}
}
\end{algorithm}

 \begin{figure*}%{R}{0.5\textwidth} 
%\vspace{3cm}
 \begin{center}
   \includegraphics[scale=0.5]{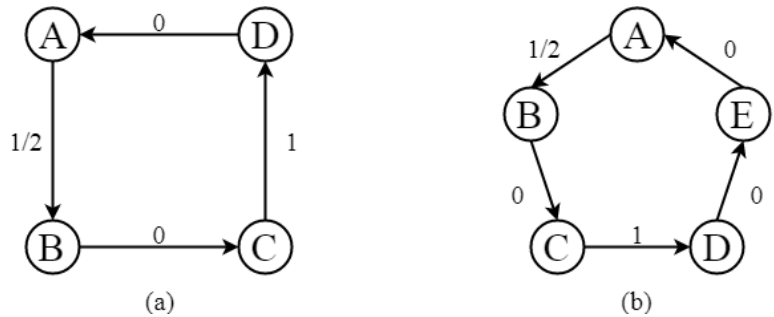}
%    \includegraphics[scale=0.8]{Lemma7(b)_v2.eps}
%    %\includegraphics[scale=0.5]{Fig1.eps}
    \caption{Graph  representations for Algorithms~\ref{algo:Ren4} (a) and  \ref{algo:Ren5} (b).}
    \label{Graphforalg3and4}
  \end{center}
%  \vspace{-20pt}
%  \vspace{1pt}
\end{figure*} 

\begin{lemma}\label{lemma-zero4}
If $dis(p(r,t_c),p(s,t_c))=0$ and Algorithm~\ref{algo:Ren4} is performed starting from $t_c$, $dis(p(r,t),p(s,t))=0$ for any $t \geq t_c$.
\end{lemma}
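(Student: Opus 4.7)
The plan is to exploit the defining feature of an \ML-algorithm: every destination is an affine combination $(1-\lambda)\cdot me.position + \lambda\cdot other.position$ of the two robots' positions. When the two positions coincide, every such affine combination collapses to that single point, regardless of the value of $\lambda$ (which, in Algorithm~\ref{algo:Ren4}, is $0$, $1/2$, or $1$). So once the robots share a location, no computation can produce a destination different from that location.

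More concretely, I would argue as follows. By the definition of a cs-time, at $t_c$ neither robot is in the middle of a $\mathit{Move}$ operation and the next operation each one performs (possibly after inert ``no-op'' transitions that neither change color nor position) is $\mathit{Look}$. Let $p:=p(r,t_c)=p(s,t_c)$. I would then induct on the ordered sequence of $\mathit{LC}$ operations of $r$ and $s$ occurring at times $t_1<t_2<\cdots$ in $[t_c,\infty)$, with the induction hypothesis that $p(r,t)=p(s,t)=p$ for all $t\in[t_c,t_k]$. The base case is immediate. For the inductive step, suppose (without loss of generality) that $r$ performs the $\mathit{LC}$ at time $t_{k+1}$. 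Since, by induction, both robots still sit at $p$ just before $t_{k+1}$ (pending moves, if any, started from and were directed to $p$, hence are vacuous), the snapshot $r$ obtains has $me.position = other.position = p$, and so the destination $(1-\lambda)p+\lambda p=p$ is computed for whatever label $\lambda\in\{0,1/2,1\}$ is attached to the edge Algorithm~\ref{algo:Ren4} traverses. Thus $r$'s pending movement, once executed, keeps it at $p$; similarly $s$'s color changes do not move it. No robot ever leaves $p$, proving $dis(p(r,t),p(s,t))=0$ for all $t\ge t_c$.

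The only subtlety I anticipate — and the ``hard part,'' such as it is — lies in the bookkeeping around the interval $[t_c, t_1]$ and the interaction with the ASYNC timing model: one must verify that in the window between a cs-time and the next $\mathit{Look}$, any lingering $\mathit{M_B}/\mathit{M_E}$ events cannot displace a robot. But this is handled precisely by the cs-time definition, which forces such in-between operations to be non-moving and non-color-changing, so the invariant ``both robots are at $p$ with their currently latched destinations equal to $p$'' is preserved up to the first $\mathit{LC}$ after $t_c$. Once this is observed, the induction above goes through uniformly for $\mathit{LC}$-atomic ASYNC under either Rigid or Non-Rigid movement, since a move from $p$ to $p$ is a zero-length move that trivially satisfies both movement restrictions.
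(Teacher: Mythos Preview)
Your argument is correct and rests on exactly the observation the paper uses: when the two positions coincide, every destination of the form $(1-\lambda)\cdot me.position+\lambda\cdot other.position$ equals that common point, so no robot ever moves. The paper's own proof is a one-line version of this (``any move operation becomes no move''); your induction on $\mathit{LC}$ events and your treatment of the cs-time window simply spell out the details the paper leaves implicit.
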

\begin{proof}
Since $dis(p(r,t_c),p(s,t_c))=0$, any move operation becomes no move (stay).
\end{proof}

\begin{lemma}\label{generalprop4}
Let $\alpha=B$ and $\beta=C$ or $\alpha=D$ and $\beta=A$.
If $\ell(r,t_c)=\alpha$ and $\ell(s,t_c)=\beta$ in Algorithm~\ref{algo:Ren4}, then  % such that $\alpha \rightarrow \beta$ and let $t_1=t^+_c(r,LC)$, then
$\ell(s,t)=\beta$  and $p(s,t)=p(s,t_c)$ for any $t(t_c \leq t \leq t_1=t^+_c(r,\mathit{LC}))$.
\end{lemma}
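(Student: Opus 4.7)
The plan is to exploit two structural facts about the algorithm: (i) in both listed cases, $s$'s observation of $r$'s light triggers a transition that \emph{writes the same color $s$ already has} and \emph{induces no move}, and (ii) in the interval $[t_c,t_1]$, $r$'s visible color cannot change from $\alpha$. Together these force $s$ to remain color-stable and position-stable across any $\mathit{LC}$ it performs in this window.

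First I would nail down $r$'s light throughout $[t_c,t_1]$. By the definition of cs-time, at $t_c$ both robots' next operation is a $\mathit{Look}$ (equivalently, an $\mathit{LC}$ in $\mathit{LC}$-atomic ASYNC), so $r$ is not in the middle of a pending compute; its color at $t_c$ is precisely $\alpha$. Because $t_1=t^+_c(r,\mathit{LC})$ is the \emph{first} $\mathit{LC}$ of $r$ after $t_c$, and light colors are only updated at $\mathit{Comp}$, $r$'s color can only possibly change at time $t_1+1$ (following the paper's convention that the snapshot is taken at $t_1$ and the write takes effect at $t_1+1$). Hence $\ell(r,t)=\alpha$ for every $t\in[t_c,t_1]$.

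Next I would consider an arbitrary $\mathit{LC}$ operation that $s$ performs at some time $t'\in(t_c,t_1]$; there may be several. At that instant $s$ observes $\ell(r,t')=\alpha$. Inspecting Algorithm~\ref{algo:Ren4}: the case $\alpha=B$ sets $me.light\leftarrow C$ with no $me.des$ assignment, and the case $\alpha=D$ sets $me.light\leftarrow A$ with no $me.des$ assignment. In both cases $\beta$ equals the color being written, so the update is idempotent, and the absence of a destination assignment (combined with the paper's convention that a cycle with no actual movement has its $\mathit{Move}$ operation omitted or postponed) means $s$ does not move. By induction on the number of $\mathit{LC}$ operations of $s$ in the interval, $\ell(s,t)=\beta$ and $p(s,t)=p(s,t_c)$ hold throughout $[t_c,t_1]$.

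The only subtlety is the endpoint $t=t_1$: if $s$ also happens to perform an $\mathit{LC}$ at $t_1$ simultaneously with $r$, its write still takes effect only at $t_1+1$, so $\ell(s,t_1)$ is the value \emph{before} the update, namely $\beta$; and no move has been initiated by then. This is the only timing detail that needs care, but it is resolved directly by the $\mathit{LC}$-atomic convention, so the argument is clean.
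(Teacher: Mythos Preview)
Your proof is correct and follows essentially the same approach as the paper's: both rely on the observation that when $s$ observes color $\alpha\in\{B,D\}$, Algorithm~\ref{algo:Ren4} makes $s$ write precisely $\beta$ (the color it already has) and assigns no destination, so $s$ neither changes color nor moves. Your argument is considerably more detailed---you explicitly justify that $\ell(r,t)=\alpha$ on $[t_c,t_1]$ via the cs-time convention and the $t_1{+}1$ write timing, and you handle the simultaneous $\mathit{LC}$ at $t_1$---whereas the paper compresses all of this into a single sentence, but the underlying idea is identical.
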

\begin{proof}
When $s$ with color $\beta$ observes $r$ with color $\alpha$ at $t(t_c \leq t \leq t_1$), $s$ does not change its color at time $t$ and stays at position $p(s,t_c)$.
\end{proof}

\begin{lemma}\label{lemma-BC}
If Algorithm~\ref{algo:Ren4} starts with $\{\ell(r,t_c), \ell(s,t_c)\} =\{B, C\}$,  %such that $\alpha \rightarrow \beta$,
for any schedule of two robots after $t_c$,
there is a cs-time $t^* (\geq t_c)$ such that $dis(p(r,t^*),p(s,t^*))=0$,
or $dis(p(r,t^*),p(s,t^*)) \leq dis(p(r,t_c),p(s,t_c))-\delta$ 
and $\{\ell(r,t^*), \ell(s,t^*)\} =\{C, D\}$ or $\{\ell(r,t^*), \ell(s,t^*)\} =\{D, A\}$ .
\end{lemma}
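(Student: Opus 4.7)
By the symmetry of Algorithm~\ref{algo:Ren4} in the two robots, I may assume without loss of generality that $\ell(r,t_c)=B$ and $\ell(s,t_c)=C$; the configuration $\ell(r,t_c)=C$, $\ell(s,t_c)=B$ is handled by exchanging the roles of $r$ and $s$. Let $d = dis(p(r,t_c),p(s,t_c))$ and put $t_1 = t^+_c(r,\mathit{LC})$. The starting point will be Lemma~\ref{generalprop4} applied with $\alpha=B$ and $\beta=C$, which pins $s$ at $p(s,t_c)$ with color $C$ throughout $[t_c,t_1]$. Consequently, at $t_1$ the robot $r$ observes $s$ at $p(s,t_c)$ with color $C$, so by the edge $C \EO D$ of $G_{\mathcal{L}}$ (Fig.~\ref{Graphforalg3and4}(a)) $r$'s color becomes $D$ and its pending destination is $p(s,t_c)$.

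Setting $t_2 := t^+_1(r,\mathit{M_E})$, I would then classify every $\mathit{LC}$ operation of $s$ on $[t_c,t_2]$. An $\mathit{LC}$ performed at a time $t \leq t_1$ observes $r$ with color $B$ and, by edge $B \EZ C$, produces no change in $\ell(s)$ or $p(s)$; an $\mathit{LC}$ performed at a time $t > t_1$ observes $r$ with color $D$ and, by edge $D \EZ A$, updates $\ell(s)$ to $A$ without moving (and further such $\mathit{LC}$'s just re-write $A$). Thus the only actual motion in $[t_c,t_2]$ is $r$'s translation along the segment toward $p(s,t_c)$, and at $t_2$ the state is $\ell(r,t_2)=D$, $p(s,t_2)=p(s,t_c)$, and $\ell(s,t_2)\in\{C,A\}$.

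The proof concludes by splitting on the Non-Rigid adversary's choice: if $r$ actually reaches $p(s,t_c)$, then $dis(p(r,t_2),p(s,t_2))=0$, and Lemma~\ref{lemma-zero4} keeps the distance at $0$ thereafter, so any cs-time $t^*\geq t_2$ witnesses the first alternative of the lemma; otherwise $r$ has traveled at least $\delta$, forcing $dis(p(r,t_2),p(s,t_2)) \leq d-\delta$, and taking $t^*$ to be the first cs-time $\geq t_2$ yields $\{\ell(r,t^*),\ell(s,t^*)\}=\{C,D\}$ or $\{D,A\}$ according as $s$ is still at color $C$ or has already updated to $A$. The main nuisance I anticipate is the cs-time check at $t^*$: I will need the convention from Section~\ref{sec:model} that an LCM cycle with no actual movement may be considered as placing its $\mathit{Move}$ just before the next $\mathit{Look}$, so that any ``stay''-move of $s$ still pending at $t_2$ can be absorbed and both robots' next meaningful operation from $t^*$ on is indeed a $\mathit{Look}$.
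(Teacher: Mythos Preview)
Your argument is correct and follows essentially the same route as the paper's own proof: assume $\ell(r,t_c)=B$, $\ell(s,t_c)=C$, use Lemma~\ref{generalprop4} to freeze $s$ until $r$'s first $\mathit{LC}$, then track $r$'s single approach move toward $p(s,t_c)$ while noting that any $\mathit{LC}$ by $s$ in the interim either re-writes $C$ (seeing $B$) or switches to $A$ (seeing $D$) without moving. The paper simply takes $t^*=t^+_1(r,\mathit{M_E})+1$ and asserts it is a cs-time, whereas you phrase this as ``the first cs-time $\geq t_2$'' and explicitly invoke the no-movement convention to absorb any pending stay-move of $s$; this is the same step made more explicit, not a different approach.
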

\begin{proof}
We can assume that $\ell(r,t_c)=B$ and $\ell(s,t_c)=C$.
Let $t_r=t^+_c(r,\mathit{LC})$. % and $t_s=t^+_c(s,LC)$.
By Lemma~\ref{generalprop4}, $\ell(s,t)=C$ and $s$ stays at $p(s,t_C)$ at time $t ( \leq t_r)$ even if $s$ performs several cycles before $t_r$.
When $r$ performs an $\mathit{LC}$-operation at time $t_r$, it changes its color to $D$ at $t_r+1$ and will move to position $p(s,t_c)$ and let $t'=t^+_r(r, \mathit{M_E})$.
Since $s$ stays at $p(s,t_c)$ until $t'$ whether $s$ is active ($s$ observes $B$ or $D$) or not ($s$ stays),
$t'+1$ becomes a cs-time $t^*$ satisfying the conditions of the lemma.
In fact, 
if $dis(p(r,t_c),p(s,t_c)) \leq \delta$, then $dis(p(r,t^*),p(s,t^*))=0$, otherwise,
$dis(p(r,t^*),p(s,t^*)) \leq dis(p(r,t_c),p(s,t_c))-\delta$.
If $s$ becomes active at $t (t_r \leq t \leq t')$, $s$ changes its color to $A$.
Otherwise, its color remains $C$. Thus,
$\{\ell(r,t^*), \ell(s,t^*)\} =\{C, D\}$ or $\{\ell(r,t^*), \ell(s,t^*)\} =\{D, A\}$. 
\end{proof}%need figure??

\begin{lemma}\label{lemma-others4}
Let $\alpha \neq B$.
If Algorithm~\ref{algo:Ren4} starts with $\{ \ell(r,t_c), \ell(s,t_c)\}=\{\alpha,\beta\}$ such that $\alpha \rightarrow \beta$, 
for any schedule of two robots after $t_c$,
there is a cs-time $t^* (\geq t_c)$ such that 
$\{\ell(r,t^*), \ell(s,t_c)\} =\{ B, C\}$ and 
$dis(p(r,t^*),p(s,t^*)) \leq dis(p(r,t_c),p(s,t_c))$.
\end{lemma}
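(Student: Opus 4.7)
The plan is a case analysis over the three pairs $\{\alpha,\beta\}$ with $\alpha \to \beta$ in $G_{\mathcal L}$ and $\alpha\neq B$: namely $\{A,B\}$, $\{C,D\}$, and $\{D,A\}$. A uniform preliminary observation will dispose of the distance inequality once the color argument is in hand: in Algorithm~\ref{algo:Ren4} the only nonzero edge labels are $\lambda=1/2$ (on observing $A$) and $\lambda=1$ (on observing $C$), and both send the observing robot toward the other. Hence $dis(p(r,t),p(s,t))$ is monotonically non-increasing in $t$, so I only need to locate a cs-time $t^*\geq t_c$ at which the color configuration is $\{B,C\}$.

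For $\{A,B\}$, take $\ell(r,t_c)=A$ and $\ell(s,t_c)=B$. Every LC of $s$ strictly before $r$'s next LC has $s$ observing $A$, so $s$ retains color $B$ (while possibly sliding toward $r$). Thus at the first time $t_r\geq t_c$ at which $r$ performs LC, it observes $B$, updates its color to $C$, and does not move. Once $s$ has completed any pending move following its last $A$-observation, the configuration is $\{B,C\}$ and the next operations of both robots are LCs, giving the desired cs-time $t^*$.

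For $\{D,A\}$ with $\ell(r,t_c)=D$ and $\ell(s,t_c)=A$, every LC of $s$ before $r$'s next LC is a no-op (observe $D$, retain color $A$, no movement), while $r$'s first LC turns $r$ into $B$ with a move to the midpoint. A cs-time with colors $\{A,B\}$ therefore arises once $r$ has completed that move, and the $\{A,B\}$ case applies. For $\{C,D\}$ with $\ell(r,t_c)=C$ and $\ell(s,t_c)=D$, $s$'s LCs keep its color at $D$ while $s$ moves toward $p(r,t_c)$, and $r$'s first LC turns $r$ into $A$ without moving, yielding a cs-time with colors $\{A,D\}$ which reduces to the $\{D,A\}$ case. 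In the degenerate subcase where $s$ reaches $p(r,t_c)$ before $r$ acts, Lemma~\ref{lemma-zero4} pins the distance to $0$ forever while the color trajectory still follows the pattern $\{C,D\}\to\{A,D\}\to\{A,B\}\to\{B,C\}$, so the target configuration is again reached.

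The principal technical obstacle is the asynchronous bookkeeping: between the decisive LC of one robot and the next LC of the other, one must exhibit a time that qualifies as a cs-time in the sense of the definition in Section~\ref{sec:SSYNCRendezvousAlgorithms}. The saving grace is that the two non-moving transitions $B\EZ C$ and $D\EZ A$ leave both position and distance unchanged, so once the decisive color update has occurred and any previously scheduled $\mathit{Move}$ has terminated, the very next integer time instant can be taken as $t^*$; both robots are then either about to begin fresh LCs or performing operations that neither change color nor move, which matches the cs-time criterion.
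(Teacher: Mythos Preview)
Your proof follows the same three–case decomposition as the paper (its cases (I)--(III)): treat $\{A,B\}$ as the base case and reduce $\{C,D\}$ and $\{D,A\}$ to it. The structure and the individual color arguments are correct, but your blanket monotonicity claim needs sharpening. The inference ``the only nonzero labels are $\lambda=1/2$ and $\lambda=1$, both send the observing robot toward the other, hence $dis(p(r,t),p(s,t))$ is monotonically non-increasing in $t$'' is \emph{false} as a general statement about Algorithm~\ref{algo:Ren4}: if both robots simultaneously execute a $\lambda=1$ move, each heads to the other's \emph{observed} position, so they can cross, and the distance drops to $0$ and then returns to its original value. What is true---and what you actually need---is that along every execution arising from the three starting pairs $\{A,B\},\{C,D\},\{D,A\}$, at any moment at most one robot is moving while the other is stationary (because the stationary robot is either waiting or has just executed a $\lambda=0$ transition); in that situation the moving robot approaches a fixed target and the distance is genuinely non-increasing. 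State the observation with that restriction, and the separation of the distance inequality from the color argument is clean.

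One smaller point: in your $\{D,A\}$ reduction you assert that the next cs-time has colors $\{A,B\}$. But after $r$'s first $\mathit{LC}$ turns $r$ to $B$, $s$ may perform an $\mathit{LC}$ \emph{before} $r$'s midpoint move ends; $s$ then observes $B$ and becomes $C$, so the cs-time you reach already has colors $\{B,C\}$ rather than $\{A,B\}$. This is harmless for the lemma---you simply arrive at the target directly---but it should be acknowledged. (The paper's own case~(III) glosses over the same sub-case.)
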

\begin{proof}
We can assume that $\ell(r,t_c)=\alpha$ and $\ell(s,t_c)=\beta$.
There are three cases, $\alpha=A,C$ and $D$.

(I) ($\alpha=A$) Let $t_r=t^+_c(r,\mathit{LC})$. 
If $s$ does not become active until $t_r+1$,
$t_r+1$ is a cs-time $t^*$ satisfying the conditions of this lemma.
Otherwise,
If $s$ performs a cycle at $t (\leq t_r)$, 
it changes its color to $B$ at $t+1$ and will move to the midpoint between $p(r,t)$ and $p(s,t)$. 
Let $t_s = t^-_r(s,\mathit{LC}) (\leq t_r)$. Note that $\ell(s,t_s)=B$.
If $t^+_s(s,\mathit{M_E}) \leq t_r$, then $t_r+1$ becomes a cs-time $t^*$
such that $\ell(r,t^*)=C$, $\ell(s,t^*)=B$ and $dis(p(r,t^*),p(s,t^*)) < dis(p(r,t_c),p(s,t_c))$.

If $t^+_s(s,\mathit{M_E}) > t_r$, $t^+_s(s,\mathit{M_E})+1$ becomes a cs-time $t^*$
satisfying the conditions of this lemma, since $r$ does not change its color and stays at position $p(r,t_s)$, 
even if it performs cycles between $t_s$ and  $t^+_s(s,\mathit{M_E})$.

(II) ($\alpha=C$) Similar to the case (I), we can show that 
there is a cs-time $t^* (\geq t_c)$ such that 
$\{\ell(r,t^*), \ell(s,t_c)\} =\{ A, B\}$ and 
$dis(p(r,t^*),p(s,t^*)) \leq dis(p(r,t_c),p(s,t_c))$. Then the lemma holds by using the case (I).

(III) ($\alpha=D$) Let $t_r = t^+_c(r,LC)$. By Lemma~\ref{generalprop4},
$\ell(r,t_r)=D$ and $\ell(s,t_r)=A$. Robot $r$ changes its color to $B$ at $t_r+1$ and will move to the midpoint of $p(r,t_r)$ and $p(s,t_r)$. Letting $t'=t^+_r(r,\mathit{M_E})$, $s$ does not change its color and stays at the position $p(s,t_r)$. Thus, $t'$ becomes a cs-time such that 
$\{\ell(r,t'), \ell(s,t')\} =\{ A, B\}$ and 
$dis(p(r,t'),p(s,t')) \leq dis(p(r,t_c),p(s,t_c))$. Then the lemma holds by using the case (I).
\end{proof}

\begin{lemma}\label{lemma-sameLC4}
Let $\ell(r,t_c)=\ell(s,t_c)$. 
If all $\mathit{LC}$-operations of $r$ and $s$ are performed at the same times,
there is a cs-time $t^* (\geq t_c)$ such that $dis(p(r,t^*),p(s,t^*))=0$.
\end{lemma}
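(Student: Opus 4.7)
The plan is to exploit a strong symmetry between the two robots that is forced by the synchronous-$\mathit{LC}$ hypothesis. First I would prove by induction on the number of $\mathit{LC}$-operations performed since $t_c$ that both robots always carry the same color immediately after every $\mathit{LC}$. Indeed, the color-update rule in Algorithm~\ref{algo:Ren4} depends only on the observed color of the other robot, and at each synchronous $\mathit{LC}$ the two robots observe identical colors, so they compute identical new colors; the colors are not modified until the next $\mathit{LC}$. Hence one may speak of a common color $c(t)$ of the system between successive $\mathit{LC}$s, and $c$ evolves deterministically as $A \to B \to C \to D \to A$.

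Next I would analyse how the distance $d(t) = dis(p(r,t), p(s,t))$ changes across each transition. The ``silent'' transitions $B \to C$ and $D \to A$ use $\lambda = 0$, so no robot moves and $d$ is preserved. In the transition $C \to D$ ($\lambda = 1$) both robots head for each other's pre-$\mathit{LC}$ position; letting $\alpha, \beta \in [\min(1,\delta/d),1]$ denote the fractions of the connecting segment actually traversed by $r$ and $s$ (dictated by the adversary under Non-Rigid), a direct vector computation gives a new distance equal to $|1-\alpha-\beta|\cdot d$, which is at most $d$ and is exactly $d$ only in the full-swap case $\alpha=\beta=1$. The key step is the $A \to B$ transition ($\lambda = 1/2$), where both robots head toward the same midpoint of their pre-$\mathit{LC}$ positions: if $d \leq 2\delta$ then the Non-Rigid guarantee forces both to actually reach the midpoint and the new distance is $0$; otherwise each robot moves at least $\delta$ toward the midpoint and an analogous computation yields a new distance of at most $d - 2\delta$.

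Combining these facts, $d$ is monotonically non-increasing throughout the execution, and every four consecutive $\mathit{LC}$s contain exactly one $A \to B$ transition, which either strictly drops $d$ by at least $2\delta$ or reduces it to $0$. Hence after finitely many $\mathit{LC}$s (at most of order $d(t_c)/\delta$) some $A \to B$ transition finds a pre-transition distance $\leq 2\delta$, and the corresponding move brings $d$ to $0$. Letting $t^*$ be the cs-time immediately following this move establishes the lemma, and Lemma~\ref{lemma-zero4} guarantees that the two robots remain colocated afterward.

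The main obstacle is the $C \to D$ swap case, where the motion step can fail to decrease $d$ at all; the proof therefore hinges on the observation that every cyclic block of four $\mathit{LC}$s must contain an $A \to B$ transition, which is the unique step in the algorithm delivering a deterministic strictly positive decrement of the distance. This reliance on perfectly synchronous $\mathit{LC}$s is also consistent with the impossibility result in Theorem~\ref{Imp3}: as soon as one relaxes the synchrony hypothesis the colour-symmetry induction collapses, and no such contraction argument is available.
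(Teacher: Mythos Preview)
Your proof is correct and follows essentially the same line as the paper's: establish by induction that the common colour cycles $A\to B\to C\to D\to A$, note that only the $A\to B$ step guarantees a definite distance decrement of at least $2\delta$ (or termination when $d\le 2\delta$), and conclude by iteration. Your treatment of the $C\to D$ step under Non-Rigid, computing the post-move distance as $|1-\alpha-\beta|\,d\le d$, is in fact more careful than the paper's proof, which simply asserts that the robots ``swap positions''; your version makes explicit that the distance is still non-increasing in Non-Rigid even when the swap is partial.
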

\begin{proof}
Let $t_i(i=1,2,3,\ldots)$ be the times $r$ and $s$ perform $\mathit{LC}$-operations simultaneously, and let $\ell(r,t_i)=\ell(s,t_i)=\alpha$ and $\alpha \rightarrow \beta$. Note that $\mathit{Move}$-operations of  both robots are performed
between $t_i$ and $t_{i+1}$.
If $\ell(r,t_i)=\ell(s,t_i)=\alpha$, $r$ and $s$ change their color to $\beta$ and $\ell(r,t)=\ell(s,t)=\beta (t_i <t \leq t_{i+1}])$.
If $\alpha=B,D$, the two robots stay at the positions of $t_i$.
If $\alpha=C$, the two robots swap positions compared to $t_i$.
If $\alpha=A$, the two robots move to the midpoint of their positions.

Therefore, when $\ell(r,t_i)=\ell(s,t_i)=A$, $dis(p(r,t_{i+1}), p(s,t_{i+1}))=0$ if
$dis(p(r,t_{i}), p(s,t_{i})) \leq 2\delta$, the lemma holds.
Otherwise, $dis(p(r,t_{i+1}), p(s,t_{i+1})) \leq dis(p(r,t_{i}), p(s,t_{i}))-2\delta$.
This reduction occurs whenever $\ell(r,t_j)=\ell(s,t_j)=A$ and the distance between $r$ and $s$ will becomes $0$.
\end{proof}

\begin{lemma}\label{lemma-difLC4}
Let $\ell(r,t_c)=\ell(s,t_c)$. If there is a different time at which $\mathit{LC}$-operations of $r$ and $s$ are performed, 
for any schedule of two robots after $t_c$, 
there is a cs-time $t^* (\geq t_c)$ and there are colors $\alpha$ and $\beta$ such that
$\{\ell(r,t^*), \ell(s,t^*)\}=\{\alpha, \beta\}$ and $\alpha \rightarrow \beta$.
%$\{\ell(r,t^*), \ell(s,t^*)\}=\{B, C\}$.
\end{lemma}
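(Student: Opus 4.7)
The plan is to locate the first ``asymmetric'' LC-operation after $t_c$ and to show that from that point onwards the pair of lights always spans a directed edge of the graph $G_{\mathcal L}$ of Algorithm~\ref{algo:Ren4}; the desired cs-time $t^*$ is then any cs-time occurring past the asymmetric event.

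First I would define $t_f>t_c$ to be the earliest time at which exactly one of $r,s$ performs an LC-operation. The hypothesis that not all LC-times of $r$ and $s$ coincide guarantees that such $t_f$ exists. Every LC strictly between $t_c$ and $t_f$ must be performed simultaneously by both robots; since a simultaneous LC applied to a common color $\gamma$ sends both lights to $\mathrm{succ}(\gamma)$ (the successor in the 4-cycle $A\to B\to C\to D\to A$), the two lights remain equal throughout $[t_c,t_f)$, and after each such synchronous round both robots finish their moves in finite time, producing a cs-time. Let $\tilde{t}_c\leq t_f$ be the last such cs-time before $t_f$, so $\ell(r,\tilde{t}_c)=\ell(s,\tilde{t}_c)=\gamma$ for some color $\gamma$, and no LC occurs in $[\tilde{t}_c,t_f)$.

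Without loss of generality $r$ performs the LC at $t_f$ while $s$ does not. Then at $t_f+1$ we have $(\ell(r),\ell(s))=(\mathrm{succ}(\gamma),\gamma)$, which already realises the edge $\gamma\to\mathrm{succ}(\gamma)$. The central step is to establish the invariant that for every $t\geq t_f+1$, the unordered pair $\{\ell(r,t),\ell(s,t)\}$ spans some directed edge of $G_{\mathcal L}$. Since colors change only at LC-operations, I induct on successive LC-events after $t_f$: (i) a lone LC by $r$ observing $y=\ell(s)$ sets $\ell(r)=\mathrm{succ}(y)$ and yields the pair $\{\mathrm{succ}(y),y\}$, an edge; (ii) a lone LC by $s$ is symmetric; (iii) a simultaneous LC applied to $(x,y)$ with $y=\mathrm{succ}(x)$ (by the inductive hypothesis, after possibly swapping roles) yields $(\mathrm{succ}(y),\mathrm{succ}(x))=(\mathrm{succ}(y),y)$, again an edge. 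Fairness of the scheduler together with finiteness of every move guarantees that a cs-time $t^*\geq t_f+1$ occurs, and by the invariant the colors at $t^*$ are of the form $\{\alpha,\beta\}$ with $\alpha\to\beta$, as required.

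The main obstacle I expect is the simultaneous-LC case of the invariant: the update $(x,y)\mapsto(\mathrm{succ}(y),\mathrm{succ}(x))$ does \emph{not} in general send arbitrary color pairs to an edge pair, so the argument hinges on knowing that the invariant is already true \emph{just before} that LC; this makes the initial asymmetric configuration at $t_f+1$ and the 4-cycle structure of $G_{\mathcal L}$ essential, and it also means one must argue that in the interval $[t_c,t_f)$ the lights do not drift apart silently. A secondary technical point is verifying that $\tilde{t}_c$ really is a cs-time, which reduces to noting that a synchronous LC preserves equality of the lights and that the two ensuing moves each terminate in finite time, so that the later of the two $M_E$'s is a cs-time with both robots carrying the common color.
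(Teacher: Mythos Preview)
Your invariant-based approach is correct as far as it goes and is genuinely different from the paper's. The paper fixes the last simultaneous $\mathit{LC}$ before the first asymmetric one and then splits into four cases on the common colour observed at that moment; in each case it explicitly names a time and verifies by hand that it is a cs-time with adjacent colours. Your argument replaces the four cases by the single observation that, once the pair becomes $\{\gamma,\mathrm{succ}(\gamma)\}$ at $t_f+1$, every subsequent $\mathit{LC}$ (solo or simultaneous) preserves adjacency in the $4$-cycle $A\to B\to C\to D\to A$. That induction is sound, and the simultaneous-$\mathit{LC}$ step you flagged as the main obstacle does go through precisely because the pair was already adjacent just before.

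The actual gap is the sentence ``fairness of the scheduler together with finiteness of every move guarantees that a cs-time $t^{*}\ge t_f+1$ occurs.'' In $\mathit{LC}$-atomic ASYNC this is \emph{not} a general fact: the two robots' cycles can be perpetually interleaved so that at every instant one of them is mid-Move, and your invariant controls only colours, not positions. The paper never invokes such a principle; it earns each cs-time by checking that the pending operations of the relevant robot are no-ops (colour unchanged and $\lambda=0$), which is exactly the second clause in the definition of cs-time. You can close the gap uniformly from your invariant by using the specific edge labels of Algorithm~\ref{algo:Ren4}: for every adjacent pair one of the two associated $\lambda$'s is $0$, and the robot that keeps re-observing that colour neither moves nor changes colour. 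Tracking which robot this is after $t_f$ (it depends on whether $\gamma\in\{B,D\}$ or $\gamma\in\{A,C\}$) lets you name an explicit cs-time, essentially recovering the paper's case split in a lighter form. A minor point: your $\tilde t_c$ is unnecessary, since $\gamma=\ell(r,t_f)=\ell(s,t_f)$ is already well-defined from the fact that all $\mathit{LC}$'s in $[t_c,t_f)$ are simultaneous; the worry you raise about $\tilde t_c$ being a cs-time is therefore moot.
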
%must be cahnged!!
%\alpha \rightarrow \beta \rightarrow \gamma: \alpha=A,C  ---> (\beta,\gamma)
%alpha=B,D ---. (\beta,\alpha)
\begin{proof}
Let $t_r$ and $t_s$ be times of $\mathit{LC}$-operations  performed by $r$ and $s$ such that $t_r \neq t_s$ and these are the first different times of $\mathit{LC}$-operations performed by $r$ and $s$.  Wlog, assume that $t_r < t_s$ and  let $t^-_r(r, \mathit{LC})=t^-_s(s,\mathit{LC})=t'$.
There are four cases according to the colors of the robots at time $t'$.

(I) $\ell(r,t')=\ell(s,t')=A$. Robots $r$ and $s$ change their color to $B$ at $t'+1$ and they will move to the midpoint of $p(r,t')$ and $p(s,t')$. 
Since $\ell(s,t_r)=B$, $r$ changes its color to $C$ at $t_r+1$ and stays at position $p(r,t_r)$. Since $r$ does not change its color between $t_r+1$ and $t_s$, %a cs-time $t^* (\geq t_c)$ and there are colors $\alpha$ and $\beta$ such that
%$\{\ell(r,t^*), \ell(s,t^*)\}=\{\alpha, \beta\}$ and $\alpha \rightarrow \beta$.
then $t_s$ is a cs-time $t^* (\geq t_c)$ such that
$\{\ell(r,t^*), \ell(s,t^*)\}=\{B, C\}$ ($B \rightarrow C$).
%$s$ observes the color  $C$ of $r$ at the time $t_s$ and changes its color to $D$ at $t_s+1$ and will move to the position of $p(r,t_s)$.

(II) $\ell(r,t')=\ell(s,t')=B$ Robots $r$ and $s$ change their color to $C$ at $t'+1$ and they will stay until $t_r$.
Since $\ell(s,t_r)=C$, $r$ changes its color to $D$ at $t_r+1$ and will move to position $p(s,t_r)$.
If $t_s \leq t^+_r(r,\mathit{M_E})$, $t^+_r(r,\mathit{M_E})+1$ is a a cs-time $t^* (\geq t_c)$ such that
$\{\ell(r,t^*), \ell(s,t^*)\}=\{D, A\}$. Otherwise, $t^+_r(r,\mathit{M_E})+1$ is a a cs-time $t^* (\geq t_c)$ such that
$\{\ell(r,t^*), \ell(s,t^*)\}=\{C, D\}$.

(III) $\ell(r,t')=\ell(s,t')=C$ and 
(IV) $\ell(r,t')=\ell(s,t')=D$ can be proved similar to cases (I) and (II), respectively.
\end{proof}

\begin{lemma}\label{lemma-notACBD4}
Let $\ell(r,t_c)=\ell(s,t_c)$. If Algorithm~\ref{algo:Ren4} is performed starting from $t_c$, there does not exist any cs-time $t^* (\geq t_c)$  such that $\{\ell(r,t^*), \ell(s,t^*)\}=\{A,C\}$ or $\{ \ell(r,t^*), \ell(s,t^*)\}=\{B, D\}$.
\end{lemma}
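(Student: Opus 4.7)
My plan is to prove the stronger invariant that for every cs-time $t^* \ge t_c$, either $dis(p(r,t^*), p(s,t^*)) = 0$ or $\{\ell(r,t^*), \ell(s,t^*)\}$ lies in the safe set
\[
\mathcal{I} = \{\{Y,Y\} : Y \in \{A,B,C,D\}\} \cup \{\{A,B\}, \{B,C\}, \{C,D\}, \{D,A\}\}.
\]
Since neither $\{A,C\}$ nor $\{B,D\}$ belongs to $\mathcal{I}$, the lemma follows immediately.

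I will proceed by induction on the sequence of cs-times $t_c = t_0 < t_1 < t_2 < \cdots$. The base case is immediate from the hypothesis $\ell(r,t_c) = \ell(s,t_c)$. For the inductive step, I split on the configuration at $t_i$. If the distance is already zero, Lemma~\ref{lemma-zero4} preserves it. If the pair at $t_i$ is $\{Y,Y\}$, I will follow the reasoning of Lemmas~\ref{lemma-sameLC4} and \ref{lemma-difLC4}: when the next $\mathit{LC}$-operations of $r$ and $s$ occurring between $t_i$ and $t_{i+1}$ are simultaneous, both colors advance together to $\{\sigma(Y),\sigma(Y)\}$ (with $\sigma$ denoting the successor in the cycle $A \to B \to C \to D \to A$); when they are not, the robot acting first advances to $\sigma(Y)$ while the other (still at $Y$ by a Lemma~\ref{generalprop4}-style argument) subsequently reads $\sigma(Y)$ and advances to $\sigma^2(Y)$, producing an edge pair $\{\sigma(Y),\sigma^2(Y)\} \in \mathcal{I}$.

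If the pair at $t_i$ is an edge pair, I will argue case-by-case. For $\{B,C\}$, Lemma~\ref{lemma-BC} directly gives that $t_{i+1}$ has pair $\{C,D\}$ or $\{D,A\}$ (or distance zero). For the remaining edge pairs $\{A,B\}$, $\{C,D\}$, $\{D,A\}$, I will perform a direct transition analysis analogous to that of Lemma~\ref{lemma-BC}: enumerate the interleavings of $\mathit{LC}$-operations between $t_i$ and $t_{i+1}$ (including cases where one robot performs several $\mathit{LC}$-operations while the other is still in its previous cycle), use the key observation that repeated $\mathit{LC}$-operations against an unchanged observed color yield the same successor color, and verify that the resulting successor pair is again an edge pair in $\mathcal{I}$.

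The main obstacle will be this last block of edge-pair case analyses, because Lemma~\ref{lemma-others4} only guarantees an eventual return to $\{B,C\}$ and does not directly certify each intermediate cs-time. The verification amounts to a small but careful finite check against the transition rules of Algorithm~\ref{algo:Ren4}, where the essential property being exploited is that both the ``active'' robot's color change and the ``passive'' robot's non-update always yield a pair whose two colors are consecutive in the $A \to B \to C \to D \to A$ cycle, so the forbidden ``distance-two'' pairs $\{A,C\}$ and $\{B,D\}$ are never produced.
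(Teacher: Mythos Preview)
Your proposal is correct and is essentially a fleshed-out version of the paper's own proof, which consists of the single sentence ``It can be verified in the proofs of the above lemmas.'' The invariant you isolate---that at every cs-time the color pair is either equal or an edge of the cycle $A\to B\to C\to D\to A$---is exactly the set of pairs that appear in the case analyses of Lemmas~\ref{lemma-BC}--\ref{lemma-difLC4}, and your inductive step simply replays those analyses; the additional per-step checks you plan for the edge pairs $\{A,B\},\{C,D\},\{D,A\}$ are precisely what is implicit in the proof of Lemma~\ref{lemma-others4}. One small point worth tightening: your induction is phrased over ``the sequence of cs-times $t_0<t_1<\cdots$'', but the lemmas you invoke (e.g.\ Lemma~\ref{lemma-BC}) produce \emph{some} later cs-time $t^*$, not necessarily the immediately next one, so you should either (i) argue separately that no intermediate cs-time can carry a forbidden pair (which is easy, since between $t_i$ and the $t^*$ given by the lemma at most one robot changes color, and a single step along the $4$-cycle cannot create a distance-two pair), or (ii) reframe the induction over the cs-times supplied by the lemmas rather than over all cs-times.
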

\begin{proof}
It can be verified in the proofs of the above lemmas.
\end{proof}

\begin{theorem}
Rendezvous is solved by QSS-Rendezvous-with-4-colors($LC$-atomic ASYNC, Non-Rigid, any) with $\ell(r,t_0)=\ell(s,t_0)$.
It is a quasi-self-stabilizing \ML-algorithm.
\end{theorem}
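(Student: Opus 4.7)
The plan is to analyze the evolution of the configuration at successive cycle-start times ($cs$-times), using the lemmas already established, and to show that after finitely many such cs-times the distance between $r$ and $s$ strictly decreases by at least $\delta$ per iteration until rendezvous is achieved; Lemma~\ref{lemma-zero4} will then guarantee the robots stay co-located.

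First, I would split into two subcases starting from $t_0$ with $\ell(r,t_0)=\ell(s,t_0)$. If every $\mathit{LC}$-operation performed by $r$ coincides in time with one of $s$ (and vice versa), Lemma~\ref{lemma-sameLC4} directly yields a cs-time at which the inter-robot distance is $0$. Otherwise, Lemma~\ref{lemma-difLC4} produces a cs-time $t_1$ at which the two robots carry distinct colors $\alpha\neq\beta$ with $\alpha\to\beta$ in $G_{\mathcal L}$, and Lemma~\ref{lemma-notACBD4} ensures that $\{\ell(r,t_1),\ell(s,t_1)\}$ is one of $\{A,B\}$, $\{B,C\}$, $\{C,D\}$, or $\{D,A\}$.

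Next, the core of the argument is an inductive reduction on the distance, measured only at those cs-times whose color pair is $\{B,C\}$. Starting from any of the four admissible distinct-color pairs, Lemma~\ref{lemma-others4} (or trivially, if already at $\{B,C\}$) produces a subsequent cs-time with colors $\{B,C\}$ and distance not exceeding the current one. From such a cs-time, Lemma~\ref{lemma-BC} guarantees a later cs-time $t^*$ at which either the distance is $0$, or it has shrunk by at least $\delta$ and the colors are $\{C,D\}$ or $\{D,A\}$; in the latter case Lemma~\ref{lemma-others4} applies again (the non-$B$ color being $C$ or $D$) to return to $\{B,C\}$ without further distance increase. Thus each completed loop from $\{B,C\}$ back to $\{B,C\}$ either triggers rendezvous or strictly reduces the distance by at least $\delta$, so a finite number of loops suffices to reach distance $0$. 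Applying Lemma~\ref{lemma-zero4} concludes that the robots remain together forever.

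The main obstacle I foresee is the bookkeeping needed to chain the cs-times guaranteed by Lemmas~\ref{lemma-BC} and~\ref{lemma-others4} together, verifying that the endpoint $t^*$ of one lemma's guarantee is a legitimate starting cs-time for the next, and carefully handling the symmetry between $r$ and $s$ whenever the lemmas implicitly fix a labeling. Because the lemmas are phrased in terms of unordered color pairs and assume only the cs-time property on entry, this amounts to a careful rereading rather than a new technical hurdle, and the argument then establishes that QSS-Rendezvous-with-4-colors is indeed a quasi-self-stabilizing \ML-algorithm.
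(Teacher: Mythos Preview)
Your proposal is correct and follows exactly the approach intended by the paper, which simply states that the theorem is derived from Lemmas~\ref{lemma-zero4}--\ref{lemma-notACBD4}; you have spelled out the chaining of those lemmas in the right order (same-color split via Lemmas~\ref{lemma-sameLC4}/\ref{lemma-difLC4}, funneling to $\{B,C\}$ via Lemma~\ref{lemma-others4}, a $\delta$-drop via Lemma~\ref{lemma-BC}, and permanence via Lemma~\ref{lemma-zero4}). One incidental remark: your appeal to Lemma~\ref{lemma-notACBD4} right after Lemma~\ref{lemma-difLC4} is redundant, since the conclusion $\alpha\to\beta$ of Lemma~\ref{lemma-difLC4} already forces the pair to be adjacent on the $4$-cycle; the role of Lemma~\ref{lemma-notACBD4} is rather to certify globally that the bad pairs $\{A,C\},\{B,D\}$ never arise, which your loop argument does not otherwise need.
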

\begin{proof}
The theorem is derived from Lemmas~\ref{lemma-zero4}-\ref{lemma-notACBD4}. 
\end{proof}

Algorithm~\ref{algo:Ren5} also satisfies similar properties of Lemmas~\ref{lemma-zero4}-\ref{lemma-notACBD4} 
(Lemmas~\ref{lemma-zero5}-\ref{lemma-notACBD5}) and
it can be also shown to be a self-stabilizing \ML-algorithm.
In Algorithm~\ref{algo:Ren4}, two color pairs $\{A,C\}$ and $\{B, D\}$ of $r$ and $s$ cannot be reached from any initial configuration with  same colors  (Lemma~\ref{lemma-notACBD4}). However, it cannot achieve Rendezvous from the initial configuration $\{A,C\}$ or $\{B, D\}$ (Lemma~\ref{ImpAlg4}), since repetitions of $\{A,C\}$ and $\{B, D\}$ never attain Rendezvous. This is the reason why Algorithm~\ref{algo:Ren4} is not self-stabilizing.
On the other hand, we can show that Algorithm~\ref{algo:Ren5} is self-stabilizing. In fact,
it can solve Rendezvous from the initial configurations  $\{A,C\}$, $\{B, D\}$, $\{C, E\}$, $\{D, A \}$ or $\{E, B\}$ as expressed in a  following lemma (Lemma~\ref{lemma-notACBD5}). 
If these configurations repeat, since the repetition contains $\{C, E\}$, Rendezvous succeeds. Otherwise, any configuration can reach  some configuration  $\{\alpha, \beta\}$ ($\alpha \rightarrow \beta$).
It can be proved similarly to Lemmas~\ref{lemma-sameLC4} and \ref{lemma-difLC4}. 

%Lemmas12-18 can be omitted!!!
\begin{lemma}\label{lemma-zero5}
If $dis(p(r,t_c),p(s,t_c))=0$ and Algorithm~\ref{algo:Ren5} is executed starting from $t_c$, $dis(p(r,t),p(s,t))=0$ for any $t \geq t_c$.
\end{lemma}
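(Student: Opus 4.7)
The plan is to replicate the one-line argument used for Lemma~\ref{lemma-zero4}, relying crucially on the \ML-algorithm structure of Algorithm~\ref{algo:Ren5}. Since every destination in an \ML-algorithm has the form $(1-\lambda)\cdot me.position + \lambda \cdot other.position$ for some $\lambda \in \R$ determined by the observed color alone, whenever $me.position = other.position$ this convex combination evaluates to the common point for every $\lambda$. Thus any $\mathit{Comp}$-operation performed while the two robots are colocated produces a destination equal to the robot's current position, and the corresponding $\mathit{Move}$-operation does not displace it (in Non-Rigid movement the adversary can only stop a robot early, and a zero-distance target is trivially reached by staying in place). Color updates prescribed by the algorithm do not affect positions.

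I would then proceed by induction on $t \geq t_c$. The base case is the hypothesis $p(r,t_c)=p(s,t_c)$. For the step, assuming $p(r,t')=p(s,t')$ for all $t_c \leq t' \leq t$: since $t_c$ is a cs-time, there are no pending destinations computed strictly before $t_c$ that could trigger movement after $t_c$ (by the cs-time definition, any such operations neither change colors nor move), so every movement performed in $[t_c, t+1]$ arises from a $\mathit{Comp}$ executed at some time in $[t_c,t]$ at which the two robots coincided, and by the observation above produces no actual displacement. Hence $p(r,t+1)=p(s,t+1)$, closing the induction. There is essentially no obstacle here; the only thing worth explicitly checking is that Algorithm~\ref{algo:Ren5}, despite using more colors, never prescribes a destination off the line through $me$ and $other$, which is immediate from it being an \ML-algorithm.
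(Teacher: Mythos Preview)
Your proposal is correct and follows essentially the same approach as the paper: the paper's proof of the analogous Lemma~\ref{lemma-zero4} is the single sentence ``Since $dis(p(r,t_c),p(s,t_c))=0$, any move operation becomes no move (stay),'' and Lemma~\ref{lemma-zero5} is left to the reader as identical. Your version merely spells out explicitly the \ML-algorithm formula, the role of the cs-time hypothesis in ruling out stale pending destinations, and the induction on $t$, all of which the paper's one-liner leaves implicit; the underlying idea is the same.
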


\begin{lemma}\label{generalprop5}
Let $\alpha=B$ and $\beta=C$ or $\alpha=D$ and $\beta=A$.
If $\ell(r,t_c)=\alpha$ and $\ell(s,t_c)=\beta$ in Algorithm~\ref{algo:Ren5}, then  % such that $\alpha \rightarrow \beta$ and let $t_1=t^+_c(r,LC)$, then
$\ell(s,t)=\beta$  and $p(s,t)=p(s,t_c)$ for any $t(t_c \leq t \leq t_1=t^+_c(r,\mathit{LC}))$.
\end{lemma}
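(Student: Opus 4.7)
The plan is to mirror the proof of Lemma~\ref{generalprop4}, exploiting the fact that the relevant branches of Algorithm~\ref{algo:Ren5} have the same ``color-preserving, no-movement'' structure that was used in the 4-color case.

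First, I would observe that because $t_1 = t^+_c(r,\mathit{LC})$ is, by definition, the \emph{first} $\mathit{LC}$-operation $r$ performs at or after $t_c$, and because a robot's light can change only during a $\mathit{Comp}$, the color of $r$ is constant on $[t_c, t_1)$ and equals $\alpha$. (The robot $r$ may still execute $\mathit{Move}$-operations during this interval, but those do not affect its light.)

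Next, I would consider an arbitrary time $t \in [t_c, t_1]$ at which $s$ performs an $\mathit{LC}$-operation. At such a time $s$ observes $r$'s light, and by the previous paragraph the observed value is $\alpha$. A direct inspection of the \textbf{case} statement in Algorithm~\ref{algo:Ren5} shows that, for each pair $(\alpha,\beta)$ listed in the statement, the branch executed when $other.light = \alpha$ assigns $me.light$ the value $\beta$ (so $s$'s color is unchanged) and does \emph{not} assign $me.des$, so the computed destination equals $me.position$ and no actual motion follows in the subsequent $\mathit{Move}$. Consequently neither $\ell(s,\cdot)$ nor $p(s,\cdot)$ is modified by this cycle.

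A straightforward induction over the successive $\mathit{LC}$-operations of $s$ inside $[t_c, t_1]$, combined with the fact that $s$'s light and position are inert at times when $s$ is not performing a cycle, then yields $\ell(s,t) = \beta$ and $p(s,t) = p(s,t_c)$ for every $t \in [t_c, t_1]$. The argument is essentially routine: the only point requiring verification is the per-pair inspection of the algorithm's code, which is immediate from Algorithm~\ref{algo:Ren5}, so I do not anticipate any substantive obstacle.
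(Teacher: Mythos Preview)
Your overall strategy matches the paper's: it simply treats Lemma~\ref{generalprop5} as the five-color analogue of Lemma~\ref{generalprop4}, with the same one-line justification. However, the step you call ``immediate from Algorithm~\ref{algo:Ren5}'' is where the argument actually breaks.

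For the pair $(\alpha,\beta)=(B,C)$ your inspection is correct: when $s$ observes $other.light=B$, the branch sets $me.light\leftarrow C$ and assigns no destination, so $s$ retains color $C$ and does not move. But for $(\alpha,\beta)=(D,A)$ the inspection fails. In Algorithm~\ref{algo:Ren5} the branch for $other.light=D$ sets $me.light\leftarrow E$, not $A$. Hence if $s$ has color $A$ and performs an $\mathit{LC}$ while $r$ still shows $D$, then $s$'s color changes to $E$, contradicting the invariance $\ell(s,t)=\beta$ that you (and the lemma) assert. Your claim that ``the branch executed when $other.light=\alpha$ assigns $me.light$ the value $\beta$'' is simply false for this pair.

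This is not a defect in your method but an apparent copy-paste slip in the lemma statement itself: in the four-color cycle one has $D\to A$, so $(D,A)$ is the right self-fixing pair there, whereas in the five-color cycle $D\to E$ and $E\to A$. The pairs for which your argument goes through verbatim in Algorithm~\ref{algo:Ren5} are $(B,C)$, $(D,E)$, and $(E,A)$. You should flag that the stated pair $(D,A)$ does not satisfy the required property rather than declaring the verification routine.
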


\begin{lemma}\label{lemma-BC5}
If Algorithm~\ref{algo:Ren5} starts with $\{\ell(r,t_c), \ell(s,t_c)\} =\{B, C\}$,  %such that $\alpha \rightarrow \beta$,
for any schedule of two robots after $t_c$,
there is a cs-time $t^* (\geq t_c)$ such that $dis(p(r,t^*),p(s,t^*))=0$,
or $dis(p(r,t^*),p(s,t^*)) \leq dis(p(r,t_c),p(s,t_c))-\delta$ 
and $\{\ell(r,t^*), \ell(s,t^*)\} =\{C, D\}$ or $\{\ell(r,t^*), \ell(s,t^*)\} =\{D, A\}$ .
\end{lemma}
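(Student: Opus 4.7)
The plan is to mirror the proof of Lemma~\ref{lemma-BC} for Algorithm~\ref{algo:Ren4}, adapted to the five--color transition structure of Algorithm~\ref{algo:Ren5}. Without loss of generality, assume $\ell(r,t_c)=B$ and $\ell(s,t_c)=C$, and let $t_r=t^+_c(r,\mathit{LC})$ denote $r$'s first $\mathit{LC}$-operation at or after $t_c$.

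The first step is to show that $s$ is frozen throughout $[t_c,t_r]$: its light remains $C$ and its position remains $p(s,t_c)$. This uses Lemma~\ref{generalprop5} with $\alpha=B$, $\beta=C$: if $s$ ever performs $\mathit{LC}$ in this interval, it observes $r$'s color $B$, whose clause in Algorithm~\ref{algo:Ren5} merely writes $C$ (no color change) and issues no movement. Hence regardless of how many cycles $s$ completes in $[t_c,t_r]$, both $\ell(s,\cdot)$ and $p(s,\cdot)$ stay constant.

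Next, analyze the cycle of $r$ that begins at $t_r$. Since $\ell(s,t_r)=C$, the $C$-branch of Algorithm~\ref{algo:Ren5} writes $\ell(r,t_r+1)=D$ and sets the destination of $r$ to $p(s,t_c)$. Let $t'=t^+_r(r,\mathit{M_E})$. During $(t_r,t']$, $r$'s light is $D$, so any $\mathit{LC}$ executed by $s$ in this interval triggers the $D$-branch, which only updates $s$'s light (to the next color in the cycle) and leaves $s$ stationary at $p(s,t_c)$. Consequently the distance is monotonically reduced by $r$'s single ongoing move, and no spurious move of $s$ can occur before $t^*:=t'+1$.

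Finally, set $t^*=t'+1$; this is a cs-time because at $t'+1$ both robots are poised to start fresh cycles with no unfinished actions. Two Non-Rigid subcases yield the two distance guarantees: if $dis(p(r,t_c),p(s,t_c))\le \delta$ then $r$ reaches $p(s,t_c)$ exactly and $dis(p(r,t^*),p(s,t^*))=0$; otherwise $r$'s move is guaranteed to cover at least $\delta$, giving $dis(p(r,t^*),p(s,t^*))\le dis(p(r,t_c),p(s,t_c))-\delta$. The color pair at $t^*$ is $\{C,D\}$ when $s$ never performs an $\mathit{LC}$ in $(t_r,t']$, and is the pair listed in the lemma statement when $s$ performs at least one such $\mathit{LC}$ (so that $s$'s light advances from $C$ under the $D$-branch). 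The main obstacle I anticipate is a careful use of the $\mathit{LC}$-atomicity assumption around the boundary time $t_r$: I must ensure that any $\mathit{LC}$ of $s$ occurring at time $t_r$ itself observes $r$'s \emph{pre-computation} color $B$ (so the ``frozen'' claim for $s$ in $[t_c,t_r]$ is not violated), whereas any $\mathit{LC}$ of $s$ at $t_r+1$ or later observes the post-computation color $D$; this is exactly the behaviour granted by modelling an $\mathit{LC}$-operation as atomic at a single time instant with its update visible from the next integer time.
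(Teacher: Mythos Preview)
Your argument is correct and follows exactly the route the paper takes: the paper proves Lemma~\ref{lemma-BC} in detail for Algorithm~\ref{algo:Ren4} and then simply states that Lemma~\ref{lemma-BC5} is established in the same way.

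One point you should make explicit rather than hide behind the phrase ``the pair listed in the lemma statement'': when $s$ executes the $D$-branch of Algorithm~\ref{algo:Ren5}, its light is set to $E$, not $A$, so the colour pair your own argument produces at $t^*$ is $\{D,E\}$. The occurrence of $\{D,A\}$ in the statement of Lemma~\ref{lemma-BC5} (and likewise the clause ``$\alpha=D$, $\beta=A$'' in Lemma~\ref{generalprop5}) is evidently a copy--paste slip from the Algorithm~\ref{algo:Ren4} versions, where the transition $D\to A$ does hold. This does not damage the overall correctness chain---$\{D,E\}$ is a consecutive pair in the five-colour cycle and is handled by Lemma~\ref{lemma-others5} exactly as $\{D,A\}$ is in the four-colour case---but your proof should name the pair you actually derive rather than defer to a mis-stated conclusion.
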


\begin{lemma}\label{lemma-others5}%must check!!
Let $\alpha \neq B$.
If Algorithm~\ref{algo:Ren5} starts with $\{ \ell(r,t_c), \ell(s,t_c)\}=\{\alpha,\beta\}$ such that $\alpha \rightarrow \beta$, 
for any schedule of two robots after $t_c$,
there is a cs-time $t^* (\geq t_c)$ such that 
$\{\ell(r,t^*), \ell(s,t_c)\} =\{ B, C\}$ and 
$dis(p(r,t^*),p(s,t^*)) \leq dis(p(r,t_c),p(s,t_c))$.
\end{lemma}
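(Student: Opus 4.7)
The plan is to imitate the proof of Lemma~\ref{lemma-others4}, extended to cope with the extra color of Algorithm~\ref{algo:Ren5}. As in that lemma I assume WLOG that $\ell(r,t_c)=\alpha$ and $\ell(s,t_c)=\beta$, and I argue by case analysis on $\alpha\in\{A,C,D,E\}$. The key observation is that for each $\alpha\ne A$ a single ``$\mathit{LC}$ round'' shifts the pair one step backward along the color cycle $A\to B\to C\to D\to E\to A$ while never increasing the inter-robot distance; once the case becomes $\alpha=A$ the configuration is driven directly to $\{B,C\}$ exactly as in case~(I) of Lemma~\ref{lemma-others4}.

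First I would dispatch the base case $\alpha=A$, $\beta=B$ verbatim as in Lemma~\ref{lemma-others4}~(I): letting $t_r=t^+_c(r,\mathit{LC})$, if $s$ performs no $\mathit{LC}$ in $[t_c,t_r]$ then $t^{*}=t_r+1$ already yields $\{C,B\}$ with unchanged distance; otherwise the last $\mathit{LC}$ of $s$ executes $A\to B$ (no visible color change) with a motion toward the midpoint of the frozen pair, so $t^{*}=\max(t_r+1,\,t^+_s(s,\mathit{M_E})+1)$ gives $\{C,B\}$ with distance at most $dis(p(r,t_c),p(s,t_c))$. For the remaining three values of $\alpha$ I would invoke the five-color analog of Lemma~\ref{generalprop5}, which guarantees that whenever $s$ observes $r$'s color via one of the no-motion edges $B\to C$, $D\to E$, or $E\to A$, $s$'s color and position are preserved until $r$'s first $\mathit{LC}$. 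I then trace the reductions: $\alpha=C$ reduces to $\alpha'=D$, because $r$ applies $D\to E$ (no motion) while $s$'s optional $C\to D$ step only walks $s$ toward $r$, yielding $\{E,D\}$ at distance at most $dis(p(r,t_c),p(s,t_c))$; $\alpha=D$ reduces to $\alpha'=E$, because $r$ applies $E\to A$ with no motion, yielding $\{A,E\}$ at the same distance; and $\alpha=E$ either lands in $\{B,C\}$ directly, when $s$ performs an $\mathit{LC}$ in $(t_r,\,t^+_r(r,\mathit{M_E})]$ and thereby executes $B\to C$, or lands in $\{B,A\}$ and reduces to the just-treated base case after swapping the roles of $r$ and $s$. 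Chaining these reductions $C\Rightarrow D\Rightarrow E\Rightarrow A\Rightarrow\{B,C\}$ produces the required cs-time.

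The main obstacle will be the careful bookkeeping at each reduction of what ``cs-time'' really means: I must rule out the possibility that during an intermediate interval one of the robots has silently started a fresh cycle whose $\mathit{Comp}$ has already rewritten its color or pending destination, so that the pair I claim for $t^{*}$ is not in fact the configuration observed. The $\mathit{LC}$-atomicity of the scheduler together with the no-motion self-transitions of Algorithm~\ref{algo:Ren5}, as formalized by the intended version of Lemma~\ref{generalprop5}, is precisely what keeps this bookkeeping tractable; but spelling out every Non-Rigid sub-subcase of the $\alpha=E$ step, where $r$'s $A\to B$ motion interleaves with $s$'s possible $B\to C$ color update under an adversary that may stop $r$ after only $\delta$, is where most of the care will go.
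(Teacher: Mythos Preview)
Your proposal is correct and follows the paper's approach: the paper gives no separate proof of this lemma, deferring instead to the argument of Lemma~\ref{lemma-others4}, and your case analysis is exactly the natural five-color extension of that argument, chaining each $\alpha\in\{C,D,E\}$ forward to the next consecutive pair and finishing with the $\alpha=A$ base case. Your remark that the auxiliary Lemma~\ref{generalprop5} must really cover the pairs $(B,C)$, $(D,E)$, $(E,A)$---rather than the $(D,A)$ carried over verbatim from the four-color statement---is also correct and is what makes the $\alpha=D$ and $\alpha=E$ steps go through.
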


\begin{lemma}\label{lemma-sameLC5}
Let $\ell(r,t_c)=\ell(s,t_c)$ in Algorithm~\ref{algo:Ren5}. 
If all $\mathit{LC}$-operations of $r$ and $s$ are performed at the same times,
there is a cs-time $t^* (\geq t_c)$ such that $dis(p(r,t^*),p(s,t^*))=0$.
\end{lemma}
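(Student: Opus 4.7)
The plan is to mirror the proof of Lemma~\ref{lemma-sameLC4} on the $5$-cycle of colors $A \to B \to C \to D \to E \to A$ defined by Algorithm~\ref{algo:Ren5}. Let $t_1<t_2<\cdots$ be the successive times at which $r$ and $s$ simultaneously perform $\mathit{LC}$-operations, and write $\alpha_i := \ell(r,t_i)=\ell(s,t_i)$, which is well-defined at $i=1$ by hypothesis. Since each $\mathit{LC}$-operation is atomic and both robots observe the same color on the other, they compute symmetric actions; hence $\alpha_{i+1}$ is the unique successor of $\alpha_i$ in the color cycle, and the invariant $\ell(r,t_i)=\ell(s,t_i)$ propagates to every $t_i$. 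Because the $t_i$ are cs-times, the $\mathit{Move}$ operations initiated at $t_i$ are complete by $t_{i+1}$, so the positions observed at $t_{i+1}$ are exactly those produced by the transition at $t_i$.

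The heart of the proof is a case analysis on $\alpha_i$, done exactly as in Lemma~\ref{lemma-sameLC4} but with the two extra states $D$ and $E$. For $\alpha_i \in \{B,D,E\}$ the robots only change color and do not move, so $dis(p(r,t_{i+1}),p(s,t_{i+1}))=dis(p(r,t_i),p(s,t_i))$. For $\alpha_i=C$ each robot heads for the other's position, so both motions stay on the segment joining $p(r,t_i)$ to $p(s,t_i)$; hence the distance at $t_{i+1}$ cannot exceed the distance at $t_i$. For $\alpha_i=A$ each robot heads for the midpoint and, by the Non-Rigid rule, advances at least $\delta$, so either $dis(p(r,t_i),p(s,t_i))\le 2\delta$ and the two robots collide by the end of their $\mathit{Move}$ operations, yielding a cs-time $t^*$ with distance~$0$, or $dis(p(r,t_{i+1}),p(s,t_{i+1}))\le dis(p(r,t_i),p(s,t_i))-2\delta$.

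Finally I would close with a finiteness step: because the color succession is a single $5$-cycle, $\alpha_i=A$ occurs in at least one of every five consecutive $t_i$, yet each occurrence either already produces the claimed $t^*$ or strictly decreases the inter-robot distance by the fixed positive amount $2\delta$. Since the distance is non-increasing across every step and bounded below by $0$, only finitely many strict decreases can occur before an $A$-step drives it to $0$; the corresponding $t_i$ is the required $t^*$, and by Lemma~\ref{lemma-zero5} the meeting persists thereafter. The one subtlety worth highlighting is the $C$-case under Non-Rigid: the robots need not actually swap positions as they would in Rigid, but since their destinations lie on the original segment the distance is still non-increasing, and this weaker bound is all the argument needs, because the $A$-case is the unique source of progress.
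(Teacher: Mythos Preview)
Your proof is correct and follows essentially the same route the paper intends: it explicitly says Lemmas~\ref{lemma-zero5}--\ref{lemma-notACBD5} are proved ``similarly'' to Lemmas~\ref{lemma-zero4}--\ref{lemma-notACBD4}, and your argument is exactly the analogue of the proof of Lemma~\ref{lemma-sameLC4} with the extra no-move state~$E$ added to the $\{B,D\}$ case. One point where you are actually more careful than the paper: in the $C$-case under Non-Rigid, the paper's Lemma~\ref{lemma-sameLC4} proof asserts the robots ``swap positions,'' which is only literally true in Rigid; your weaker observation that both endpoints remain on the original segment, so the distance is non-increasing, is the right statement and is all that is needed since progress comes solely from the $A$-step.
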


\begin{lemma}\label{lemma-difLC5}
Let $\ell(r,t_c)=\ell(s,t_c)$ in Algorithm~\ref{algo:Ren5}. If there is a different time at which $\mathit{LC}$-operations of $r$ and $s$ are performed, 
for any schedule of two robots after $t_c$, 
there is a cs-time $t^* (\geq t_c)$ and there are colors $\alpha$ and $\beta$ such that
$\{\ell(r,t^*), \ell(s,t^*)\}=\{\alpha, \beta\}$ and $\alpha \rightarrow \beta$.
%$\{\ell(r,t^*), \ell(s,t^*)\}=\{B, C\}$.
\end{lemma}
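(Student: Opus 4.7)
The plan is to mirror the proof of Lemma~\ref{lemma-difLC4} and exploit the fact that Algorithm~\ref{algo:Ren5} has the same shape of transitions as Algorithm~\ref{algo:Ren4}: a midpoint move $A \EH B$, a stay $B \EZ C$, a jump $C \EO D$, and then two stays $D \EZ E$ and $E \EZ A$ closing the cycle. First I would let $t'$ be the last time at which $r$ and $s$ jointly perform an LC operation before the hypothesized divergence; by the synchronous prefix together with $\ell(r,t_c)=\ell(s,t_c)$, both robots share a common color $\gamma$ at $t'$ and make the identical transition $\gamma \to \delta$. Letting $t_r < t_s$ be the first pair of divergent LC times (WLOG $r$ first), neither robot performs any further LC in $(t',t_r)$ for $r$ nor in $(t',t_s)$ for $s$, so the only state change occurring in those intervals is the $\mathit{Move}$ triggered by the joint LC at $t'$.

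Then I would case-split on $\gamma \in \{A,B,C,D,E\}$, using the graph of Fig.~\ref{Graphforalg3and4}(b). If $\gamma = A$, both move toward the midpoint, end with color $B$, and at $t_r$ robot $r$ observes $B$ and transitions to $C$ with $\lambda=0$; a cs-time $t^*$ just before $s$'s LC at $t_s$ exhibits the pair $\{B,C\}$ with $B \EZ C$. If $\gamma = C$, both swap positions and assume color $D$, and $r$'s LC produces the pair $\{D,E\}$. If $\gamma = D$, both stay with color $E$, and $r$'s LC produces $\{E,A\}$. In these three cases the pair at $t^*$ is immediately an adjacent pair in the cycle. For $\gamma = B$, both stay with color $C$ and $r$'s LC at $t_r$ changes its color to $D$ and triggers a full move to $p(s,t_r)$; depending on whether $t_s$ precedes or follows $t^+_r(r,\mathit{M_E})$, the constructed cs-time carries $\{D,C\}$ or $\{D,E\}$, both adjacent. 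Symmetrically, $\gamma = E$ gives both robots color $A$, $r$'s LC triggers a midpoint move, and the cs-time carries $\{B,A\}$ or $\{B,C\}$ according to the interleaving.

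The bookkeeping is straightforward for the three static cases: once $r$'s transition has $\lambda=0$, its position is frozen until the next LC, so $t^*$ can be taken as the first moment after $t_r$ at which $s$ is about to Look (and $r$'s next operation is an LC, hence a valid cs-time by definition). The main obstacle will be the two cases $\gamma \in \{B,E\}$ in which $r$'s transition induces a non-trivial move: there I must verify, by a short subcase analysis on whether $s$'s LC at $t_s$ occurs before or after $t^+_r(r,\mathit{M_E})$, that in either alternative the chosen $t^*$ is truly a cs-time in the $\mathit{LC}$-atomic ASYNC model and that the observed color pair is adjacent in the 5-cycle. Since the reasoning is the direct extension of cases (I)--(IV) of Lemma~\ref{lemma-difLC4} with the extra color $E$ inserted on a $\lambda=0$ edge, I expect no new conceptual difficulty beyond this interleaving check.
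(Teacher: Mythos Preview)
Your proposal is correct and matches the paper's intended argument. The paper does not spell out a separate proof for Lemma~\ref{lemma-difLC5}; it simply states that Lemmas~\ref{lemma-zero5}--\ref{lemma-notACBD5} are the five-color analogues of Lemmas~\ref{lemma-zero4}--\ref{lemma-notACBD4}, so the implicit proof is exactly the extension of the four-case analysis in Lemma~\ref{lemma-difLC4} to five cases on $\gamma\in\{A,B,C,D,E\}$, with the extra stay edge $D\EZ E$ inserted---precisely what you outline, including the identification of $\gamma\in\{B,E\}$ as the only cases where $r$'s post-divergence transition has $\lambda\neq 0$ and hence requires the $t_s\lessgtr t^+_r(r,\mathit{M_E})$ subcase split.
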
%must be cahnged!!
%\alpha \rightarrow \beta \rightarrow \gamma: \alpha=A,C  ---> (\beta,\gamma)
%alpha=B,D ---. (\beta,\alpha)

\begin{lemma}\label{lemma-notACBD5}
Let $\ell(r,t_c)=\ell(s,t_c)$. If Algorithm~\ref{algo:Ren5} is performed starting from $t_c$, there exist no cs-time $t^* (\geq t_c)$  such that $\{\ell(r,t^*), \ell(s,t^*)\}=\{\alpha, \gamma\}$, where $\alpha \rightarrow \beta$ and $\beta \rightarrow \gamma$.
\end{lemma}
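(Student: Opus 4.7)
The plan is to prove this lemma by forward induction on cs-times, showing that every cs-time $t^* \geq t_c$ reached from a same-color start $\ell(r,t_c)=\ell(s,t_c)$ carries a color pair of one of three \emph{benign} forms: (i)~identical colors $\{X,X\}$; (ii)~adjacent colors $\{\alpha,\beta\}$ with $\alpha \to \beta$ in the transition cycle $A \to B \to C \to D \to E \to A$; or (iii)~distance~$0$, which by Lemma~\ref{lemma-zero5} is absorbing and never produces any new pair. Since a gap-$2$ pair $\{\alpha,\gamma\}$ with $\alpha \to \beta \to \gamma$ is of none of these forms, the lemma follows. The argument mirrors the proof of Lemma~\ref{lemma-notACBD4}, extended to accommodate the fifth color of Algorithm~\ref{algo:Ren5}.

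For the inductive step I split on the form of the current cs-time $t$. If $\ell(r,t)=\ell(s,t)$, then either all subsequent $\mathit{LC}$-operations of $r$ and $s$ remain simultaneous, in which case Lemma~\ref{lemma-sameLC5} drives the distance to $0$, or the first divergent $\mathit{LC}$-pair occurs and Lemma~\ref{lemma-difLC5} produces an adjacent pair. If instead $\{\ell(r,t),\ell(s,t)\}=\{\alpha,\beta\}$ with $\alpha \to \beta$, then either $\{\alpha,\beta\}=\{B,C\}$ and Lemma~\ref{lemma-BC5} yields distance $0$ or one of the adjacent pairs $\{C,D\}$, $\{D,A\}$; or $\alpha \neq B$, and Lemma~\ref{lemma-others5} returns the adjacent pair $\{B,C\}$. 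In every subcase the next cs-time pair remains benign, so the set of benign configurations is closed under one-step cs-time transitions, and a gap-$2$ pair can never appear.

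The main obstacle is verifying the divergent same-color case of Lemma~\ref{lemma-difLC5} for all five initial same-color pairs $\{X,X\}$, $X \in \{A,B,C,D,E\}$. The analog Lemma~\ref{lemma-difLC4} treated only four subcases; for Algorithm~\ref{algo:Ren5} one must additionally handle $\ell(r,t')=\ell(s,t')=E$. Here both robots silently set their color to $A$ at $t'+1$ without any movement, so when $r$ performs its next $\mathit{LC}$ at $t_r$ it observes $\ell(s,t_r)=A$ and switches to $B$ while heading to the midpoint; since $s$ has not acted since $t'$, it is still at $p(s,t')$ with color $A$, and at $t_s$ it observes $\ell(r,t_s)=B$ and switches to $C$ without moving. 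Using $\mathit{LC}$-atomicity to rule out intervening observations, the next cs-time pair is $\{B,C\}$, which is adjacent. The other four subcases essentially duplicate cases (I)--(IV) of Lemma~\ref{lemma-difLC4} and likewise return adjacent pairs. Once this residual subcase is verified, the inductive closure holds and the lemma is established.
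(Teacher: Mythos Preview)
Your proposal is correct and takes essentially the same approach as the paper: the paper gives no explicit proof of Lemma~\ref{lemma-notACBD5} at all, merely listing it as the five-color analogue of Lemma~\ref{lemma-notACBD4}, whose entire proof is ``It can be verified in the proofs of the above lemmas.'' Your inductive closure argument over cs-times is simply an explicit organization of that verification; one minor imprecision is that the cited Lemmas~\ref{lemma-BC5}--\ref{lemma-difLC5} assert only the \emph{existence} of a future benign cs-time rather than the immediate next one (and in the $E$-subcase the first new cs-time may carry $\{A,B\}$ rather than $\{B,C\}$), but inspection of those proofs shows every intermediate cs-time is benign as well, so the induction goes through.
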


%It can be proved in the similar way that Lemmas~\ref{lemma-zero4}-\ref{lemma-difLC4} hold for Algorithm~\ref{algo:Ren5}.
%Also the next lemma holds for Algorithm~\ref{algo:Ren5}.

\begin{lemma}\label{lemma-ACetc}
Let $\{\ell(r,t_0), \ell(s,t_0)\}=\{\alpha, \gamma \}$, where $\alpha \rightarrow \beta$ and $\beta \rightarrow \gamma$.
Algorithm~\ref{algo:Ren5} can solve Rendezvous from any initial configuration $\{\ell(r,t_0), \ell(s,t_0)\}=\{\alpha, \gamma \}$.
\end{lemma}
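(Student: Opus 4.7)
The plan is to do a case analysis over the five possible skip-one starting configurations $\{A,C\}$, $\{B,D\}$, $\{C,E\}$, $\{D,A\}$, $\{E,B\}$ and over the three possible $\mathit{LC}$-scheduling patterns between consecutive cs-times (only $r$ acts, only $s$ acts, or both act simultaneously), and to show that in every case the execution either reaches Rendezvous directly or reaches a configuration $\{\alpha,\beta\}$ with $\alpha\rightarrow\beta$, at which point the already-proven Lemmas~\ref{lemma-BC5}, \ref{lemma-others5}, \ref{lemma-sameLC5} and \ref{lemma-difLC5} take over and drive the execution to Rendezvous.

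First I would exploit the fact that in Algorithm~\ref{algo:Ren5} only the two edges $A \EH B$ and $C \EO D$ carry actual motion, so that a robot observing $B$, $D$ or $E$ merely recolours. Enumerating the three sub-cases at each of the five skip-one configurations, I would verify by direct computation that every sub-case either produces an adjacent pair or else transitions along the ``both-simultaneous'' chain
\[
\{A,C\} \;\to\; \{B,D\} \;\to\; \{C,E\} \;\to\; \{D,A\} \;\to\; \{E,B\} \;\to\; \{A,C\}.
\]
The critical vertex is $\{C,E\}$: in the ``both-simultaneous'' sub-case the $E$-robot observes $C$, recolours to $D$ and moves by at least $\delta$ toward the (stationary) $C$-robot, so either the distance drops to $0$ (Rendezvous, preserved forever by Lemma~\ref{lemma-zero5}) or we land in $\{D,A\}$ with the distance reduced by at least $\delta$; in the ``only one'' sub-cases the outcome is either the adjacent pair $\{E,A\}$ or an immediate collision at $0$.

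Second I would glue the local transitions into a global argument. If the adversary ever chooses ``only-$r$'' or ``only-$s$'' at some cs-time of the chain, the resulting configuration is an adjacent pair and the preceding lemmas finish the proof. Otherwise the adversary must forever select ``both-simultaneous'', driving the execution around the skip-one cycle; but the edge $\{C,E\}\to\{D,A\}$ shrinks the distance by at least $\delta$ on every revolution, and the midpoint edges $\{A,C\}\to\{B,D\}$ and $\{D,A\}\to\{E,B\}$ further halve it, so the distance must reach $0$ in finitely many revolutions, which gives Rendezvous by Lemma~\ref{lemma-zero5}.

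The main obstacle will be the bookkeeping for $\mathit{LC}$-atomic but not $\mathit{Move}$-atomic ASYNC, where one robot may perform its $\mathit{Look}$ while its partner is mid-$\mathit{Move}$ and so observes the partner's new colour but a position strictly between the partner's old position and the partner's computed destination. I would handle this by noting that for both non-trivial edges ($A \EH B$ and $C \EO D$) the computed destination lies on the original $r$--$s$ segment, so intermediate positions keep the two robots collinear and the observed colour pair still lies in the enumerated list of skip-one and adjacent pairs; the distance accounting therefore reduces to the analysis at the $\mathit{M_{E}}$ endpoints, which is exactly what the case enumeration above handles.
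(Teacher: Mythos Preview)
Your plan is correct and mirrors the paper's own (sketched) argument, which likewise observes that under simultaneous activation the skip-one configurations cycle through $\{C,E\}$---where the distance drops by at least $\delta$---while any non-simultaneous step produces an adjacent pair $\{\alpha,\beta\}$ with $\alpha\rightarrow\beta$ that Lemmas~\ref{lemma-BC5}--\ref{lemma-difLC5} then finish. One minor slip: under Non-Rigid the ``only-the-$E$-robot-acts'' sub-case at $\{C,E\}$ yields the adjacent pair $\{C,D\}$ with distance reduced by at least $\delta$, not an immediate collision at $0$ (and $\{A,C\}\to\{B,D\}$ is not a pure midpoint step since both robots move), but neither affects the overall argument.
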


To Lemmas~\ref{lemma-zero5}-\ref{lemma-ACetc} follow the next theorem.

\begin{theorem}
Rendezvous is solved by SS-Rendezvous-with-5-colors($\mathit{LC}$-atomic ASYNC, Non-Rigid, any).
It is a self-stabilizing \ML-algorithm.
\end{theorem}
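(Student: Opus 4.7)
The plan is to case-split on the initial color pair $\{\ell(r,t_0),\ell(s,t_0)\}$, organized according to the color cycle $A\to B\to C\to D\to E\to A$ induced by the transitions of Algorithm~\ref{algo:Ren5}, and then chain Lemmas~\ref{lemma-zero5}--\ref{lemma-ACetc} to drive the configuration to zero inter-robot distance. Three cases arise: (i) the two robots start with the same color; (ii) they start with a consecutive pair $\{\alpha,\beta\}$ satisfying $\alpha\to\beta$; and (iii) they start with a skip-one pair $\{\alpha,\gamma\}$ satisfying $\alpha\to\beta\to\gamma$ for some intermediate $\beta$. Case (iii) is disposed of directly by Lemma~\ref{lemma-ACetc}.

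For case (i), I first invoke Lemma~\ref{lemma-sameLC5}: if every pair of $\mathit{LC}$-operations of $r$ and $s$ is performed at the same time from $t_0$ onward, the two robots already reach distance $0$. Otherwise, Lemma~\ref{lemma-difLC5} produces a cs-time whose color pair is some consecutive $\{\alpha,\beta\}$ with $\alpha\to\beta$; crucially, Lemma~\ref{lemma-notACBD5} forbids reaching a skip-one pair from a same-color start, so only consecutive pairs can appear, and we are reduced to case (ii).

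For case (ii), if the pair is not already $\{B,C\}$, Lemma~\ref{lemma-others5} produces a later cs-time whose color pair is $\{B,C\}$ and whose inter-robot distance has not grown. Then Lemma~\ref{lemma-BC5} guarantees a subsequent cs-time $t^*$ at which either $dis(p(r,t^*),p(s,t^*))=0$ (and Lemma~\ref{lemma-zero5} freezes this forever), or the distance has strictly decreased by at least $\delta$ and the pair is $\{C,D\}$ or $\{D,A\}$, both consecutive. Re-entering case (ii) with this new pair and iterating Lemmas~\ref{lemma-others5}--\ref{lemma-BC5} shrinks the distance by at least $\delta$ each time the system cycles through $\{B,C\}$; since the starting distance is finite and $\delta>0$, the distance reaches $0$ in finitely many iterations.

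The main obstacle is the supporting lemma catalogue itself, in particular Lemma~\ref{lemma-ACetc}, which requires a separate case analysis of Algorithm~\ref{algo:Ren5} starting from each of the five skip-one configurations $\{A,C\},\{B,D\},\{C,E\},\{D,A\},\{E,B\}$; the route through $\{C,E\}$ (absent in the $4$-color algorithm) is precisely what breaks the obstructing loop that makes Algorithm~\ref{algo:Ren4} non-self-stabilizing. Once these lemmas are in place, the composition $\textrm{same-color}\leadsto\textrm{consecutive}\leadsto\{B,C\}\leadsto\textrm{distance reduction}$ becomes a finite induction on the number of $\delta$-reductions, and Lemma~\ref{lemma-zero5} closes the argument by making distance $0$ absorbing. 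The \ML-property is immediate from the syntactic form of Algorithm~\ref{algo:Ren5}, whose destinations are always convex combinations of $me.position$ and $other.position$ determined solely by the observed color.
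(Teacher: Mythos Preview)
Your proposal is correct and matches the paper's approach: the paper's own proof is the one-line assertion that the theorem follows from Lemmas~\ref{lemma-zero5}--\ref{lemma-ACetc}, and you have spelled out precisely the case split and lemma chaining (same color $\leadsto$ consecutive pair $\leadsto\{B,C\}\leadsto$ $\delta$-reduction, with Lemma~\ref{lemma-ACetc} handling the skip-one starts) that this assertion leaves implicit. One cosmetic point: the output pair $\{D,A\}$ you quote from Lemma~\ref{lemma-BC5} is a copy-paste artifact from the four-color Lemma~\ref{lemma-BC}; in the five-color cycle $A\to B\to C\to D\to E\to A$ the correct outcome is $\{D,E\}$, and with that correction your claim that both outcomes of Lemma~\ref{lemma-BC5} are consecutive is literally true and your case~(ii) iteration goes through as written.
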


%\begin{wrapfigure}{R}{0.5\textwidth} 
%\vspace{-20pt}
% \begin{center}
%    %\includegraphics[scale=0.5,width=4cm]{Fig1.eps}
%    \includegraphics[scale=0.5]{Fig1.eps}
%    \caption{Transition Graph for Full-Light-Gather.}
%    \label{fig:Full-light}
%  \end{center}
%  \vspace{-20pt}
%  \vspace{1pt}
%\end{wrapfigure} 
%\begin{figure*}%{R}{0.5\textwidth} 
%\vspace{3cm}
% \begin{center}
%    \includegraphics[scale=0.8]{theorem3.eps}
    %\includegraphics[scale=0.8]{counter-ex.eps}
    %\includegraphics[scale=0.5]{Fig1.eps}
%    \caption{Several cases in the proof of Theorem~\ref{AsyncRigidA}}
%    \label{figCaseTheorem3}
%  \end{center}
%  \vspace{-20pt}
%  \vspace{1pt}
%\end{figure*} 

%\begin{figure*}%{R}{0.5\textwidth} 
%\vspace{3cm}
% \begin{center}
%    \includegraphics{counter-ex.eps}
    %\includegraphics[scale=0.5,width=4cm]{counter-ex.eps}
    %\includegraphics[scale=0.5]{Fig1.eps}
%    \caption{Rendezvous(ASYNC, Rigid, B) cannot solve Rendezvous in general.}
%    \label{figExecutionAsyncRigidB}
%  \end{center}
%  \vspace{-20pt}
%  \vspace{1pt}
%\end{figure*} 

%\begin{theorem}\label{ASYNCNonRigidDeltaA}
%RedezvousWithDelta(ASYNC, Non-Rigid($+\delta$), $A$) solves Rendezvous.
%\end{theorem}

\section{Concluding Remarks}
\label{sec:conclusion}

We have shown that Rendezvous can be solved by \ML-algorithms in $\mathit{LC}$-atomic ASYNC with the optimal number of colors of external-lights
in the following cases.
(1) Rigid and non-quasi-self-stabilizing, (2) Non-Rigid and quasi-self-stabilizing, and (3) Non-Rigid and self-stabilizing.
%if Rigid  or Non-Rigid($+\delta$) movement is assumed.
%We have also shown that Rendezvous can be solved in ASYNC and Non-Rigid with  the optimal number of colors of lights if ASYNC is LC-atomic.
%Although we conjecture that Rendezvous cannot be solved in ASYNC and Non-Rigid with $2$ colors,
%it is still open whether it can be solved or not.
%We have shown Gathering algorithms by mobile robots with lights in SSYNC and CENT schedulers and
%we have obtained some relationship between the power of lights (full, external and internal) and assumptions of robots' moving (rigid and non-rigid).
%Interesting open questions are determining the precise computational relationship between external-light and internal-light and constructing Gathering %algorithms by external-light and internal-light in ASYNC.
%gathering algorithm, model extension,
%Comparison between internal and external

%\vspace{-3mm}
\paragraph*{Acknowledgment}
This work is supported in part by KAKENHI no. 17K00019.% and 15K00011.%17K00019(CPS-robot)

%}%smaill size

\newpage
\appendix

\end{document}